\documentclass[11pt]{article}
\usepackage[margin=1in]{geometry}
\usepackage{natbib}
\usepackage{booktabs} 
\usepackage[ruled]{algorithm2e} 

\SetAlFnt{\small}
\SetAlCapFnt{\small}
\SetAlCapNameFnt{\small}
\SetAlCapHSkip{0pt}
\IncMargin{-\parindent}
\usepackage{xcolor}

\usepackage{booktabs}
\usepackage{bm}
\usepackage{nicematrix}
\usepackage{float}
\usepackage{dsfont}
\usepackage{tikz}
\usetikzlibrary{positioning}

\usepackage{amsthm}

\makeatletter
\def\th@plain{%
  \thm@notefont{}%
  \itshape
  \thm@headfont{\scshape}%
}
\makeatother

\usepackage{amssymb}
\usepackage{mathrsfs}
\usepackage{color,soul}
\usepackage{mathtools}

\usepackage{multirow}
\usepackage{rotating}
\usepackage{makecell}
\usepackage{caption}

\usepackage{hyperref}
\usepackage{thmtools}
\usepackage{thm-restate}
\usepackage{enumitem}

\usepackage{cleveref}
\usepackage{graphicx} 

\usepackage[]{color-edits}
\addauthor[Ram]{rd}{blue}
\addauthor[Sid]{sb}{teal}
\addauthor[Bobby]{rk}{magenta}
\addauthor[TODO]{todo}{red}

\newtheorem{lemma}{Lemma}
\newtheorem*{lemma*}{Lemma}
\newtheorem{observation}{Observation}
\newtheorem{definition}{Definition}

\newtheorem{fact}{Fact}
\newtheorem{proposition}{Proposition}


\crefname{protocol}{protocol}{protocols}
\Crefname{protocol}{Protocol}{Protocols}
\crefname{observation}{observation}{observations}
\Crefname{observation}{Observation}{Observations}
\crefname{enumi}{property}{properties}
\Crefname{enumi}{Property}{Properties}

\newcommand{\mbf}[1]{\mathbf{#1}}

\newcommand{\opt}{\textsc{OPT}}

\newcommand{\uniform}{\textsc{Uniform}}

\newcommand{\wf}{\textsc{WF}}

\newcommand{\Enest}{\mathcal{E}^{\textsc{nest}}}

\newcommand{\ratio}{\textsc{CR}}
\newcommand{\reg}{\textsc{Reg}}

\newcommand{\waterfill}{\textsc{water-filling}\xspace}

\newcommand{\nsw}{\textsc{NSW}}
\newcommand{\fm}{\textsc{FM}}

\newcommand{\1}[1]{\mathds{1}\left\{#1\right\}}

\newcommand{\pp}[1]{\left( #1 \right)^+}

\newcommand{\wh}[1]{\widehat{#1}}
\newcommand{\wt}[1]{\widetilde{#1}}

\newcommand{\E}{\mathbb{E}}
\newcommand{\N}{\mathbb{N}}
\newcommand{\R}{\mathbb{R}}
\newcommand{\calA}{\mathcal{A}}

\newcommand{\calE}{\mathcal{E}}

\newcommand{\calI}{\mathcal{I}}

\newcommand{\bell}{\bm{\ell}}
\newcommand{\wtE}{\widetilde{E}}
\newcommand{\define}{\stackrel{\Delta}{=}}

\usepackage[scr=boondox]{mathalfa}

\setcitestyle{authoryear}

\newcommand{\acknowledgements}{\thanks{
The authors were supported in part by AFOSR grant FA9550-23-1-0068, NSF grants ECCS-1847393 and CNS-195599, and a Cornell Duffield Engineering SPROUT grant.
}}

\title{Water-Filling is Universally Minimax Optimal\acknowledgements}

\author{
    Siddhartha Banerjee
    \thanks{
        Cornell University.
        Emails: {\tt sbanerjee@cornell.edu, ramdcv@cs.cornell.edu, rdk@cs.cornell.edu}
    }
    \and Ramiro N. Deo-Campo Vuong\footnotemark[2]
    \and Robert Kleinberg\footnotemark[2]
}
\date{}

\begin{document}
\maketitle

\begin{abstract}
    Allocation of dynamically-arriving (i.e.,  online) divisible resources among a set of offline agents is a fundamental problem, with applications to online marketplaces, scheduling, portfolio selection, signal processing, and many other areas.
    The \waterfill algorithm, which allocates an incoming resource to maximize the minimum load of compatible agents, is ubiquitous in many of these applications whenever the underlying objectives prefer more balanced solutions; however, the analysis and guarantees differ across settings. 

    We provide a justification for the widespread use of \waterfill by showing that it is a \emph{universally minimax optimal} policy in a strong sense. 
    Formally, our main result implies that \waterfill is minimax optimal for a large class of objectives -- including both Schur-concave maximization and Schur-convex minimization -- under $\alpha$-regret and competitive ratio measures.
    This optimality holds for every fixed tuple of agents and resource counts.
    Remarkably, \waterfill achieves these guarantees as a myopic policy, remaining entirely agnostic to the objective function, agent count, and resource availability.    

    Our techniques notably depart from the popular primal-dual analysis of online algorithms, and instead develop a novel way to apply the theory of majorization in online settings to achieve universality guarantees.
\end{abstract}

\section{Introduction}
\label{sec:intro}

Allocation of divisible resources (goods or chores) to agents is a fundamental algorithmic primitive. In many applications, the resources must be assigned dynamically as they arrive, under uncertainty about the future availability of resources and their compatibility with different agents, and in a way that is equitable. For example, a funding organization may allocate grants to eligible recipients over a sequence of time periods, without knowing the amount of funds that will be available for distribution in future periods nor which sets of recipients will be eligible. In any such setting, one may ask which allocation policies are minimax optimal, in the sense that their worst-case outcome is as equitable as possible.
It would appear that this goal is underspecified; surely the answer must depend on the metric used to evaluate equitable allocations. The main message of this paper is that, despite the question's ambiguity, it has an essentially unambiguous answer: the well-known and widely used \waterfill algorithm, which greedily allocates arriving resources to maximize the minimum load among eligible recipients.

To this point, the ubiquity of \waterfill --- also known as max-min fairness or resource balancing --- has largely been explained through case-by-case analysis of different equity-promoting objectives. In each of the cases considered, the \waterfill algorithm was shown to attain the best possible competitive ratio guarantee in the limit as the number of agents grows, often via primal-dual analysis. Primal-dual techniques, despite producing celebrated results, have left gaps in our understanding of the \waterfill principle and equitable dynamic resource allocation. In particular, why is greedily equalizing resources seemingly the only viable strategy for many applications of dynamic allocation in worst-case settings with an arbitrary number of agents?

\subsection{Overview of Results and Techniques}
One can formulate worst-case analysis of dynamic resource allocation as
a two-player zero-sum game between an algorithm designer and an adversary.
The algorithm designer moves first by selecting a
(deterministic or randomized) policy $\calA$ for making allocation
decisions given the available information. The adversary
responds by constructing a sequence of requests $E$, where
each request designates a quantity of divisible resources
to be allocated and a set of eligible recipients. Running policy
$\calA$ on request sequence $E$ leads to a load
vector, $\calA(E)$, specifying the total quantity of
resources allocated to each agent. The game's payoff 
is the $\alpha$-regret: for a given maximization objective 
$f : \R^{n}_{\ge 0} \to \R_{\ge 0}$ and \emph{comparison
factor} $\alpha \ge 0$, this is given by the formula
\[
  \text{$\alpha$-Regret}
  = \alpha \cdot \max_{\ell \in \Delta(E)} f(\ell)
  - \mathbb{E}[f(\calA(E))],
\]
where $\Delta(E)$ denotes the set of 
feasible load vectors for request sequence $E$.
One can also formulate $\alpha$-regret for a
minimization objective $g : \R^{n}_{\ge 0} \to \R_{\ge 0}$,
using the formula
\begin{align*}
  \text{$\alpha$-Regret}
  = \mathbb{E}[g(\calA(E))] - \alpha \cdot \min_{\ell \in \Delta(E)} g(\ell)
\end{align*}
The adversary seeks to maximize $\alpha$-regret,
while the algorithm designer seeks to minimize it. 
Formulating worst-case analysis in terms of $\alpha$-regret is at
least as general as formulating it in terms of competitive analysis: an algorithm is $\alpha$-competitive if and only if the supremum
of its $\alpha$-regret is finite, and it is strictly $\alpha$-competitive
if and only if this supremum is non-positive.

Our main result pertains to objective functions that obey the following minimal requirement to be considered
equity-promoting: when one agent (Alice) receives 
weakly less resources than another (Bob), the objective value
cannot be improved by taking resources from Alice and giving
them to Bob. This property is called Schur-concavity (for
maximization objectives) or Schur-convexity (for minimization
objectives); we use the term \emph{Schur-monotonicity}
as a catch-all for both properties. 

We show that in the
$\alpha$-regret minimization game defined by \emph{any} 
Schur-monotone objective and \emph{any} non-negative comparison
factor $\alpha$, the \waterfill policy
is the optimal play for an algorithm designer against adaptive adversaries. 
When the adversary is restricted to be oblivious, 
\waterfill remains optimal
for the (narrower) class of symmetric concave maximization objectives
and symmetric convex minimization objectives. 
These results on the minimax optimality of \waterfill
persist even if the adversary is constrained to
generate request sequences with a specified number of
offline nodes,  number of requests, and 
total amount of resources to be allocated.
In other words, even though the \waterfill algorithm has
no foreknowledge of these parameters, nor of the objective function, it
is minimax optimal even when compared against algorithms that
have such foreknowledge.

In summary, our results show that the worst-case analysis
of dynamic equitable allocation for divisible resources
produces surprisingly robust guidance for algorithm designers:
\waterfill is optimal against adaptive adversaries for essentially
all equity-promoting objectives, and it is optimal against oblivious
adversaries for a broad sub-class of such objectives.
In particular,
Schur-monotone objective functions include
the most common objective functions for
evaluating fractional allocations when agents
are treated symmetrically. In fractional matching
when all agents have equal capacity, $c > 0$, the objective
function is $f(\mbf{x}) = \sum_{i=1}^n \min(c,\mbf{x}(i))$,
which is Schur-concave. The Nash 
social welfare objective, $\left( \prod_{i=1}^n \mbf{x}(i) \right)^{1/n}$,
is likewise Schur-concave. Common minimization
objectives such as $p$-norms and the Gini index are Schur-convex. 
In~\Cref{ssec:majorization} we give several examples of widely-studied objectives that fall in this class.

To derive worst-case optimality results pertaining to
such a broad range of objectives, we abstract
away from objective functions altogether and
directly compare load vectors using
\textit{majorization}, a preorder relation that compares the equitability of different allocations.
Schur-convex (resp., Schur-concave) functions are
precisely the order-preserving (resp., order-reversing)
functions from the majorization order to the
real numbers. Underpinning our results stated above
is a more fundamental result, \Cref{thm:wf_min}, that
expresses the optimality of the \waterfill policy
directly in terms of the majorization relation.
Since majorization is a preorder and not a total order, a given set of vectors need not have a minimal or a maximal element. However, the worst-case analysis
of \waterfill is facilitated by the existence
of such majorization-extremal elements in two
important cases. 
First, in \Cref{fact:opt}, we show that the set of feasible load vectors for a given request sequence $E$ always
has a unique majorization-minimal element, which we denote
by $\opt(E)$. Second, in \Cref{thm:param} we prove
that for any specified hindsight-optimal vector
$\mbf{\ell}$, as $E$ ranges over all the request sequences
with $\opt(E)=\mbf{\ell}$, the set of load vectors
$\wf(E)$ produced by \waterfill has a majorization-maximal
element, and we give an explicit 
formula for this element.
The minimax optimality of \waterfill
among deterministic algorithms is established via  
\Cref{thm:wf_min},
which asserts that for any request sequence $E$, if an algorithm designer deviates from \waterfill to some other deterministic allocation policy $\calA$, then the adversary can respond by modifying $E$ to a
request sequence $E'$ that produces a less equitable load vector --- $\E\calA(E')$ majorizes $\wf(E)$ ---
but a more equitable hindsight-optimal solution, in that
$\opt(E')$ is majorized by $\opt(E)$.
(If $\calA$ is randomized and the adversary
is allowed to be adaptive, we show in \Cref{app:sec:adapt_adv}
that the same conclusion
holds with probability 1 instead of with expectations.)
Regardless of the choice of Schur-monotone
objective function, the adversary's response to the algorithm designer's deviation makes
the objective value worse at $\mathbb{E}[\calA(E')]$
and better at $\opt(E')$, implying that worst-case
$\alpha$-regret increases when deviating from $\wf$
to any $\calA$. 
In the case of randomized $\calA$ and oblivious adversaries, 
an application of Jensen's inequality leads to the same conclusion
provided the objective function is symmetric
and concave (or convex).

To prove \Cref{thm:wf_min} we  take the perspective of an adversarial environment that designs the availability/compatibility of resources to hinder policies trying to produce majorization minimal allocations.
The first part of the proof (\Cref{sec:nest}) identifies a class of worst-case request sequences for \waterfill, which are represented by \textit{nested edge sets} on a bipartite graph.
Our proof uses a series of combinatorial operations (\Cref{alg:nest}) to transform a given request sequence into one encoded by a nested sequence on which \waterfill provides a majorizing allocation, and the hindsight optimal solution is minorizing.
The second part of our proof (\Cref{sec:wf_vs_policy}) considers a principal deviating from \waterfill to play an arbitrary policy.
We show that under such a deviation, the alternative policy produces a less equitable expected allocation than \waterfill on request sequences encoded by a nested edge set.
Combining these results, we find that \waterfill yields a majorization minimal expected allocation when requests are generated by a worst-case adversary.

\subsection{Related Work}\label{subsec:related_work}

\noindent\textbf{Equitable Allocations and Majorization}
Majorization is a well-established tool for equitable allocation in economics \cite{dutta1989concept} and operations research \cite{veinott1971least, megiddo1974optimal}. More recently, it has been applied to offline integral allocation \cite{harvey2003semi}, approximate online load balancing \cite{bhargava2001using, goel2005approximate, kumar2006fairness}, and a variety of one-shot strategic decision-making settings~\cite{banerjee2024fair, banerjee2025majorized, bai2025fair}. 
Our work follows in this line, extending this theory to online fractional allocation.

\noindent\textbf{Online Matching and Generalizations} 
The seminal work of~\citet{karp1990optimal} on online bipartite matching has led to a vast follow-up literature; see~\cite{mehta2013online} for a survey. Much of this focuses on integral allocations, and uses online primal-dual methods~\cite{devanur2013randomized,buchbinder2009design}.
The most relevant work for us is \cite{feige2020tighter}, which analyzes \waterfill for fractional matching and identifies the complete upper-triangular graph as the worst-case input among instances containing a perfect matching. 
Some of our arguments in~\Cref{sec:nest} closely mirror those in \cite{feige2020tighter}, and in a sense, by using majorization, our work generalizes his result to a wide variety of settings beyond matching, and also suggests natural extensions of the minimax conjecture of~\cite{karp1990optimal} (though these are still beyond our techniques).

\noindent\textbf{Fractional Allocation and Concave Utilities} A more closely related line is that initiated by~\citet{devanur2012online} on online allocation with concave utilities. The original work provided asymptotically-optimal competitive ratios via solutions to certain differential equations. 
More recent works~\cite{hathcock2024online,patton2026concave} provide asymptotic $1-1/e$ bounds for a wider range of objectives.
All these results use dual fitting approaches, which can often obscure the underlying intuition.
Indeed,~\citet{devanur2012online} explicitly comment that the exact bounds are ``a bit of a mystery''. We hope our work sheds some light on how these specific bounds emerge.

\noindent\textbf{\waterfill and Scheduling} Our work also relates to the literature on \emph{universal guarantees} in  scheduling, where a single policy simultaneously does well across multiple objectives. This was pioneered by~\cite{kleinberg1999fairness}, who characterized prefix-based fairness measures via majorization. More recent work achieves simultaneous competitiveness for all $\ell_p$ norms~\cite{chakrabarty2019approximation, chakrabarty2019simpler,kesselheim2023online}, and stronger guarantees using predictions~\cite{cohen2023general}. The last of these noted their approach had parallels to work on online Nash welfare maximization~\cite{banerjee2022online}; our work suggests a rationale behind this observation. These works all consider more complex constraint spaces than ours, and hence have weaker guarantees.

\noindent\textbf{Exact Minimax-Optimal Online Algorithms} 
Finally, we briefly comment on the rarity of \emph{exact minimax solutions} in online algorithms -- results where an algorithm's performance perfectly matches the theoretical value of the game on input sequences of bounded length.  
Notable examples include online prediction with 2 or 3 `experts'~\cite{cover1965binary, gravin2016towards}, unconstrained linear optimization~\cite{mcmahan2013minimax}, ski-rental~\cite{karlin1994competitive}, and paging \cite{sleator1985amortized,achlioptas2000competitive}. 
We situate \waterfill for online fractional allocation within this select family, noting its unique complexity and universality.

\section{Preliminaries}
\label{sec:def}

We denote the set $[n]\define\{1,\dots,n\}$ and $[0]\define \emptyset$, and use the shorthand $\pp{x} \define \max(x, 0)$, and $\1{p}$ for the indicator function that outputs $1$ when predicate $p$ is true and $0$ otherwise.
For request sequence $E=((N_t, q_t))_{t\in[m]}$ we denote the neighbors of offline node $i$ as $\Gamma_i(E) \define \{t\in[m] \mid i\in N_t\}$.

Let $\mbf{x}, \mbf{y} \in\R_{\ge 0}^n$ be arbitrary vectors.
We use $\mbf{x}^{\downarrow}$ and $\mbf{x}^{\uparrow}$ to denote vectors whose $i^{th}$ component is the $i^{th}$ largest and $i^{th}$ smallest element of $\mbf{x}$, respectively.
For any subset $N\subseteq [n]$, we write $\mbf{x} =_N \mbf{y}$ (similarly $\mbf{x} \le_N \mbf{y}$) to mean $\mbf{x}(i) = \mbf{y}(i)\,\forall\,i\in N$ (and $\mbf{x}(i) \le \mbf{y}(i) \,\forall\,i\in N$, respectively), and also denote the sum of elements in $N$ by $\mbf{x}(N) \define \sum_{i\in N} \mbf{x}(i)$.

\subsection{Setting and Problem Statement}\label{subsec:setting}
\label{ssec:setting}

\subsubsection{The Allocation Game}
We consider a zero-sum game between an algorithm designer and an adversary on a system with a set of $n$ offline nodes (agents) and $m$ online nodes (resources) that arrive sequentially.
Each online node $t \in [m]$ is characterized by a neighborhood $N_t \subseteq [n]$ of compatible offline agents and a divisible quantity $q_t > 0$, which the adversary designs.
A \emph{request sequence} is denoted by $E = ((N_t, q_t))_{t \in [m]}$, and we use $\calE_{n,m,q}$ to represent the set of all sequences with total capacity $q = \sum_t q_t$.
To consider all values for some characteristic, we replace it with ``$:$'' -- for instance, $\calE_{n,:,q} = \bigcup_{m\in \N} \calE_{n,m,q}$ represents sequences with $n$ offline nodes and total capacity $q$, but split between an arbitrary number of online nodes.

Upon the arrival of online node $t$, the algorithm designer must immediately select a fractional allocation $\mbf{x}_t \in \R_{\ge 0}^n$.
She selects her allocation via a (possibly randomized) policy $\calA$, i.e.~a function that maps the current node and the history of arrivals and decisions to a feasible allocation. 
An allocation is feasible if it satisfies:
\begin{itemize}
    \item \textbf{Compatibility:} $\mbf{x}_t(i) = 0$ for all $i \notin N_t$.
    \item \textbf{Full Distribution:} $\sum_{i \in [n]} \mbf{x}_t(i) = q_t$.
\end{itemize}
Full distribution is necessary for minimization objectives (e.g. in scheduling/load balancing), and without loss for non-decreasing maximization objectives (i.e., settings with `free disposal').

For any request sequence $E = ((N_t, q_t))_{t \in [m]}$, we use $\Delta(N_t, q_t)$ to denote the set of feasible allocations for online node $t$.
The Minkowski sum of $\Delta(N_t, q_t)$ is $\Delta(E)$, and this set contains every feasible load vector on $E$.
We also define $\calA(E) = \sum_{t=1}^{m} \mbf{x}_t$ as the cumulative allocation produced by policy $\calA$ on $E$, and $\bell_t = \sum_{s=1}^{t} \mbf{x}_s$ as the cumulative intermediate allocation (or load-vector) after $t$ arrivals.

Finally, we identify a special subset of request sequences that are critical for our characterization.

\begin{definition}[Nested sequences]
\label{def:nested_edge_set}
    A request sequence $E=((N_t, q_t))_{t\in[m]}$ is said to be nested if $N_1 \supseteq \dots \supseteq N_m$, i.e., online neighborhoods form a chain, with each subsequent neighborhood being nested within the previous neighborhoods.
    $\Enest_{n,m,q}\subseteq \calE_{n,m,q}$ is the set of nested sequences with $n$ offline nodes, $m$ online nodes, and total quantity $q$.
\end{definition}

\subsubsection{Objectives and Competitive Performance}
The goal of the algorithm designer for policy $\calA$ is to optimize some objective function $\E f(\calA(E))$.
Moreover, the performance of the policy $\calA$ is measured against an optimal-in-hindsight solution which knows the entire sequence $E$ in advance. 
We formalize this via the notion of $\alpha$-regret, which generalizes both regret and the commonly considered competitive ratio objectives.
In more detail, given a comparison factor $\alpha\in R_{>0}$, we consider two types of objectives:
\begin{enumerate}
    \item \textbf{Fair Welfare Maximization:} For any equity/welfare measure $f:\R_{\ge 0}^n \to \R_{\ge 0}$, the principal maximizes $\E[f(\calA(E))]$ against a discounted optimal hindsight solution:
    \[
        c_{\alpha, f}^{\max}(\calA, E) \define \alpha \cdot \max_{\bell\in\Delta(E)}f(\bell) - \E [f(\calA(E))]
    \]
    \item \textbf{Load Balancing:} For any load/inequity measure $g:\R_{\ge 0}^n \to \R_{\ge 0}$, the principal minimizes $\E[g(\calA(E))]$ against a marked-up optimal hindsight solution:
    \[
        c_{\alpha, g}^{\min}(\calA, E) \define \E [g(\calA(E))] - \alpha \cdot \min_{\bell\in\Delta(E)} g(\bell)
    \]
\end{enumerate}

The game between the principal and the adversary is parameterized by offline node count $n\in\N$, online node count $m\in\N$, total quantity $q\in\R_{> 0}$, objective $f$ (or $g$), and comparison factor $\alpha\in R_{>0}$.
The algorithm designer moves before the adversary and selects a policy $\calA$.
The adversary responds with a request sequence $E\in\calE_{n,m,q}$, which they design with access to the policy $\calA$ played by the algorithm designer (but not its random coin flips).
In other words, we model the adversary as oblivious. See 
\Cref{app:sec:adapt_adv} for an extension of our model and results
to adaptive adversaries.
The goal of the algorithm designer is to minimize her worst-case $\alpha$-regret:
\begin{equation*}
    \reg_{(n,m,q),\alpha, f}^{\max}(\calA) \define \sup_{E \in \calE_{n,m,q}} c_{\alpha,f}^{\max}(\calA, E)
    \quad \mbox{or} \quad
    \reg_{(n,m,q),\alpha, g}^{\min}(\calA) \define \sup_{E \in \calE_{n,m,q}} c_{\alpha,g}^{\min}(\calA, E).
\end{equation*}
Comparison factor $\alpha=1$ corresponds to standard regret for policy $\calA$.
On the other hand, the competitive ratio of $\calA$ corresponds to the largest (resp. smallest) $\alpha$ such that $\reg_{(n,m,q),\alpha, f}^{\max}(\calA) \leq 0$ (resp. $\reg_{(n,m,q),\alpha, g}^{\min}(\calA)\leq 0$). 
Our results encompass all of these settings (parameters $(m,n,q,\alpha)$, both objective types, and a large class of functions $f$) in a universal manner via a single policy. 

\subsection{Majorization}
\label{ssec:majorization}

Unlike previous work that studies different objective functions applied to fractional allocation vectors, we directly compare and analyze the vectors using majorization.
This preorder relation compares the equity of two given vectors, with vectors lower in the order being more equitable.
The economic justification behind majorization is the so-called Pigou-Dalton (or Robin Hood) transfer principle, which states that transferring resources from richer to poorer agents yields a more equitable distribution.
Below, we define the parts of the theory necessary for our work; for a more detailed exposition, refer to~\cite{arnold1987majorization, marshall2011majorization}.

\begin{definition}[Majorization]\label{def:maj}
    Let $\mbf{x},\mbf{y}\in \R^n_{\ge 0}$ with $\mbf{x}([n]) = \mbf{y}([n])$.
    Then we say $\mbf{x}$ majorizes $\mbf{y}$ (denoted $\mbf{x}\succeq\mbf{y}$) iff the cumulative sums in decreasing order for $\mbf{x}$ dominate those for $\mbf{y}$: 
    $$
        \mbf{x}^{\downarrow}([j]) \ge \mbf{y}^{\downarrow}([j]) \quad\forall\, j\in [n]
    $$
    Equivalently, for cumulative sums in increasing order, we have $\mbf{x}^{\uparrow}([j]) \le \mbf{y}^{\uparrow}([j])$ for all $j\in [n]$.
\end{definition}

We sometimes find it convenient to say $\mbf{y}$ minorizes $\mbf{x}$ (and write $\mbf{y} \preceq \mbf{x}$).
Also, if vectors $\mbf{x}, \mbf{y} \in\R_{\ge 0}^n$ majorize each other (i.e. $\mbf{x}\succeq \mbf{y}$ and $\mbf{x} \preceq \mbf{y}$), then we say they are in the same equivalence class, and write $\mbf{x} \sim \mbf{y}$.
Vectors $\mbf{x},\mbf{y}$ are equivalent iff there exists a permutation matrix $P$ satisfying $\mbf{x} = P \mbf{y}$. Below are equivalent conditions for $\mbf{x}\succeq\mbf{y}$.

\begin{fact}[Equivalent Characterizations of Majorization]
\label{def:KaramataHLP}
    Let $\mbf{x},\mbf{y}\in \R^n_{\ge 0}$ with $\mbf{x}([n]) = \mbf{y}([n])$.
    Then $\mbf{x}\succeq\mbf{y}$ if and only if the following equivalent conditions hold:
    \begin{itemize}
        \item Hardy, Littlewood, and Polya \cite{hardy1988inequalities}: $\mbf{y} = A \mbf{x}$ for some double stochastic $A\in\mbf{R}_{\ge 0}^{n\times n}$.
        \item Karamata's Inequality\footnote{We state a special case of Karamata's theorem. When $\mbf{x}([n])=\mbf{y}([n])$, it suffices for the inequality to hold on thresholding functions, whereas consideration of all convex functions is required in general.}: $\sum_{i\in [n]} \pp{\mbf{x}(i) - \gamma} \ge \sum_{i\in[n]} \pp{\mbf{y}(i) - \gamma}$ for every threshold $\gamma\in\R_{\ge 0}$.
    \end{itemize}
\end{fact}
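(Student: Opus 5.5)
The plan is to establish the three conditions in a cycle: the partial-sum definition (Definition~\ref{def:maj}) implies the Hardy--Littlewood--P\'olya condition, which implies the thresholding (Karamata) condition, which implies the definition back again. Throughout, write $s = \mbf{x}([n]) = \mbf{y}([n])$.

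\emph{Doubly stochastic $\Rightarrow$ thresholds.} Suppose $\mbf{y} = A\mbf{x}$ with $A$ doubly stochastic, and fix $\gamma\ge 0$. The map $u\mapsto \pp{u-\gamma}$ is convex, and each $\mbf{y}(i)=\sum_j A_{ij}\mbf{x}(j)$ is a convex combination of the entries of $\mbf{x}$. Jensen's inequality therefore gives
\[
\pp{\mbf{y}(i)-\gamma}\le \sum_j A_{ij}\pp{\mbf{x}(j)-\gamma}.
\]
Summing over $i$ and using that the columns of $A$ sum to one yields the threshold inequality.

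\emph{Thresholds $\Rightarrow$ partial sums.} The key identity, for any $\mbf{z}$ and $j\in[n]$, is
\[
\mbf{z}^{\downarrow}([j]) = \min_{\gamma\ge 0}\Bigl( j\gamma + \sum_i \pp{\mbf{z}(i)-\gamma}\Bigr).
\]
The expression is at least the $j$ largest entries, since $\pp{u-\gamma}\ge u-\gamma$ on those $j$ entries and $\ge 0$ on the rest. Equality holds at $\gamma=\mbf{z}^{\downarrow}(j)$, which is nonnegative because all entries are. Apply the identity to $\mbf{y}$ and take $\gamma^*=\mbf{y}^{\downarrow}(j)$. Then
\[
\mbf{y}^{\downarrow}([j]) = j\gamma^*+\sum_i\pp{\mbf{y}(i)-\gamma^*} \le j\gamma^*+\sum_i\pp{\mbf{x}(i)-\gamma^*}.
\]
The right-hand side is an upper bound on $\mbf{x}^{\downarrow}([j])$ by the first half of the identity, but that points the wrong way. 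So this direction goes more cleanly through convex functions: use the threshold inequality to deduce $\sum\phi(\mbf{x}(i))\ge\sum\phi(\mbf{y}(i))$ for every convex $\phi$ on the relevant interval. Such a $\phi$ can be written as an affine function plus a nonnegative mixture of hinge functions $\pp{u-\gamma}$ (the second-derivative measure). The affine parts agree because the totals are equal, and hinges with $\gamma<0$ are affine on $\R_{\ge 0}$. Then choose $\phi(u)=\pp{\gamma-u}=\pp{u-\gamma}-(u-\gamma)$, a reversed hinge, together with the dual identity
\[
\mbf{z}^{\uparrow}([j]) = \max_{\gamma}\Bigl( j\gamma - \sum_i \pp{\gamma-\mbf{z}(i)}\Bigr).
\]
Combining these gives $\mbf{x}^{\uparrow}([j])\le\mbf{y}^{\uparrow}([j])$, which is the increasing-order form of the definition; it is equivalent to the decreasing form by subtracting from $s$.

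\emph{Partial sums $\Rightarrow$ doubly stochastic.} This is the main obstacle. I would use induction via Robin Hood transfers, i.e.\ T-transforms of the form $(1-\lambda)I+\lambda P_{ij}$, where $P_{ij}$ is a transposition. Sort both vectors decreasingly, which is harmless since permutation matrices are doubly stochastic and products of doubly stochastic matrices stay doubly stochastic. If $\mbf{x}\neq\mbf{y}$, take the largest index $j$ with $\mbf{x}(j)>\mbf{y}(j)$ and the smallest index $k>j$ with $\mbf{x}(k)<\mbf{y}(k)$; both exist by the partial-sum conditions and the equality of totals. Move mass $\delta=\min(\mbf{x}(j)-\mbf{y}(j),\,\mbf{y}(k)-\mbf{x}(k))$ from coordinate $j$ to coordinate $k$, which is a T-transform. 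The new vector still majorizes $\mbf{y}$ and agrees with $\mbf{y}$ in at least one more coordinate. After at most $n$ steps we reach $\mbf{y}$, and $A$ is the product of the T-transforms.

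The care in this last step lies in checking that the partial sums remain dominant after each transfer. This holds because between $j$ and $k$ the entries satisfy $\mbf{x}(i)=\mbf{y}(i)$, so the partial sums on that range can drop by at most $\delta$, and they had slack at least $\delta$ there.
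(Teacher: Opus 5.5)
Your proof is correct. The paper does not prove this Fact; it states it as classical background and cites Hardy--Littlewood--P\'olya and Karamata. Your cycle is the standard textbook argument behind those citations: T-transforms for partial sums $\Rightarrow$ doubly stochastic, Jensen (rows) plus column sums for doubly stochastic $\Rightarrow$ thresholds, and a variational formula for partial sums for thresholds $\Rightarrow$ partial sums.

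The one thing to fix is your diagnosis in the thresholds $\Rightarrow$ partial sums step. Your first identity already finishes the job. It failed only because you evaluated it at $\mbf{y}$'s minimizer rather than $\mbf{x}$'s. With $\gamma=\mbf{x}^{\downarrow}(j)\ge 0$,
\[
\mbf{x}^{\downarrow}([j]) = j\gamma+\sum_i\pp{\mbf{x}(i)-\gamma} \ge j\gamma+\sum_i\pp{\mbf{y}(i)-\gamma} \ge \mbf{y}^{\downarrow}([j]).
\]
The last step is the lower-bound half of your identity applied to $\mbf{y}$. Equivalently, add $j\gamma$ to the threshold inequality and minimize both sides over $\gamma\ge 0$. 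This does not even use $\mbf{x}([n])=\mbf{y}([n])$, so the excursion through general convex functions and hinge decompositions is unnecessary.

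If you prefer the increasing-order route, the reversed-hinge inequality follows in one line from the threshold inequality and equal totals, via $\pp{\gamma-u}=\pp{u-\gamma}-(u-\gamma)$. But ``combining these'' carries exactly the same trap. You must evaluate the dual identity at $\gamma=\mbf{x}^{\uparrow}(j)$, or maximize both sides over $\gamma\ge 0$. Evaluating at $\mbf{y}^{\uparrow}(j)$ again gives an inequality pointing the wrong way.

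A minor point in the T-transform step: to iterate, you need the new vector $\mbf{x}'$ to stay sorted, since re-sorting could undo coordinates you have already matched. It does stay non-increasing. $\mbf{x}(j)$ only decreases and $\mbf{x}(k)$ only increases. Moreover $\mbf{x}'(j)\ge\mbf{y}(j)\ge\mbf{y}(k)\ge\mbf{x}'(k)$, and the coordinates strictly between $j$ and $k$ equal those of the non-increasing $\mbf{y}$. Record this explicitly.
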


\subsubsection{Schur-Monotone Functions}

Majorization is closely related to Schur-monotone functions: Schur-concave/Schur-convex functions reverse/follow the majorization order, respectively.
\begin{definition}[Schur-monotone Functions]
    A function $f:\R^n\to\R$ is Schur-concave if for all $\mbf{x,y}\in\R^n$ such that $\mathbf{x} \preceq \mathbf{y}$, we have $f(\mathbf{x}) \ge f(\mathbf{y})$.
    In contrast, a Schur-convex function $g:\R^n\to\R$ has the property that $\mathbf{x} \preceq \mathbf{y}$ implies $g(\mathbf{x}) \le g(\mathbf{y})$.
\end{definition}

The Schur-concave and Schur-convex function classes contain most meaningful measures of equity and inequity, respectively.
These classes are expressive; for instance, every symmetric and quasi-convex (resp. quasi-concave) function is Schur-convex (resp. Schur-concave)~\cite[Proposition C.3]{marshall2011majorization}.
We provide a few notable examples; many more are available in \cite{arnold1987majorization, marshall2011majorization}.
In each of the following, we take $\mbf{x}\in\R_{\ge 0}^n$ and $\bar{\mbf{x}} = \frac{1}{n}\sum_{i\in[n]} \mbf{x}(i)$.

\medskip

\noindent
\begin{minipage}[t]{0.48\textwidth}
    \textbf{Schur-concave Functions}
    \begin{itemize}[leftmargin=*]
        \item \textbf{Matching:}  $\sum_{i\in[n]}\min(c, \mbf{x}(i))$ for $c > 0$
        \item \textbf{Nash Social Welfare:}  $\left(\prod_{i\in [n]} \mbf{x}(i)\right)^{1/n}$
        \item \textbf{Egalitarian Welfare:} $\min_{i\in[n]}{x[i]}$
        \item \textbf{Power Means:}  $\left(\sum_{i\in[n]}|\mbf{x}(i)|^p\right)^{1/p}$ for $p < 1$
    \end{itemize}
\end{minipage}
\hfill
\begin{minipage}[t]{0.48\textwidth}
    \textbf{Schur-convex Functions}
    \begin{itemize}[leftmargin=*]
        \item \textbf{Gini Index:}  $\frac{1}{2n^2 \bar{\mbf{x}}}\left( \sum_{i,j\in[n]} |\mbf{x}(i) - \mbf{x}(j)| \right)$
        \item \textbf{Variance:}  $\sum_{i\in[n]} \left(\mbf{x}(i) - \bar{\mbf{x}}\right)^2$
        \item \textbf{Makespan:} $\max_{i\in[n]}{x[i]}$
        \item \textbf{$\ell^p$-norms:}  $\left(\sum_{i\in[n]}|\mbf{x}(i)|^p\right)^{1/p}$ for $p \ge 1$
    \end{itemize}
\end{minipage}

\subsubsection{Majorization Minimal Offline Allocations}
\label{ssec:offlineopt}

For online algorithms, typical performance measures compare the objective value achieved by an online policy to the optimal offline objective (i.e., that of the best allocation in hindsight).
It turns out that each request sequence $E\in\calE_{n,m,q}$ admits a unique load vector that is majorization minimal.
This allocation thus serves as a universal optimal hindsight solution, as it simultaneously maximizes all Schur-concave functions and minimizes all Schur-convex functions.

\begin{fact}[Existence of a Majorization Minimal Hindsight Solution]\label{fact:opt}
        Given a request sequence $E = ((N_t, q_t))_{t\in[m]}$, there is a unique allocation (load vector) $\bell^*$ that is minimal in the majorization preorder: $\bell^* \preceq \bell$ for any feasible allocation $\bell\in\Delta(E)$.
        Consequently, $\bell^* \in \arg\max_{\bell\in\Delta(E)} f(\bell)$ for all Schur-concave $f$ and $\bell^*\in \arg\min_{\bell\in\Delta(E)} g(\bell)$ for all Schur-convex $g$.
\end{fact}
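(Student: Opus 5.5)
The approach is to build the candidate $\bell^*$ from the network-flow structure of $\Delta(E)$ and then verify minimality through the threshold form of Karamata's inequality in \Cref{def:KaramataHLP}.

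First, $\Delta(E)$ is a compact convex polytope: it is a finite Minkowski sum of scaled simplices $\Delta(N_t,q_t)$. Every feasible $\bell$ has total $q$. Hall's theorem, or max-flow/min-cut on the bipartite graph with source capacities $q_t$, gives an explicit description of $\Delta(E)$. A vector $\bell\ge 0$ with $\bell([n])=q$ lies in $\Delta(E)$ iff
\[
\bell(S)\ \ge\ Q(S) \define \sum_{t:\,N_t\subseteq S} q_t \quad \text{for every } S\subseteq[n].
\]
Equivalently, $\bell(S)\le \sum_{t:N_t\cap S\neq\emptyset} q_t$ for every $S$. The function $S\mapsto \sum_{t:N_t\cap S\ne\emptyset}q_t$ is a coverage function, hence monotone submodular. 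So $\Delta(E)$ is the base polytope of a polymatroid.

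Second, for such base polytopes there is a unique lexicographically optimal base (Fujishige). I construct it by the standard density-peeling procedure. Choose $S_1$ maximizing the density $Q(S)/|S|$ among the maximal such sets. Set $\bell^*(i)=Q(S_1)/|S_1|$ on $S_1$. Contract: remove $S_1$ and the requests with $N_t\subseteq S_1$, restrict the remaining neighborhoods to $[n]\setminus S_1$, and recurse. Uniqueness of the maximal densest set follows from supermodularity of $Q$ under unions, which is the key lemma. The resulting level sets $S_1,S_2,\dots$ have strictly decreasing values $\lambda_1>\lambda_2>\cdots$. Feasibility of $\bell^*$ is checked via the cut condition on each contracted instance.

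Third, I verify $\bell^*\preceq\bell$ for every $\bell\in\Delta(E)$. By \Cref{def:KaramataHLP} it suffices to show
\[
\sum_i \pp{\bell(i)-\gamma}\ \ge\ \sum_i\pp{\bell^*(i)-\gamma} \quad\text{for every } \gamma.
\]
For $\gamma$ between consecutive levels, the right side equals $Q(T_k)-\gamma|T_k|$, where $T_k=S_1\cup\dots\cup S_k$. Here I use that all requests feeding $T_k$ in $\bell^*$ have neighborhoods inside $T_k$, so $\bell^*(T_k)=Q(T_k)$. The left side is at least $\bell(T_k)-\gamma|T_k|\ge Q(T_k)-\gamma|T_k|$ by the cut condition. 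Both sides are piecewise linear in $\gamma$ with breakpoints at $\bell^*$'s values, so this covers all $\gamma$.

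Uniqueness follows from strict convexity. If $\bell,\bell'$ are both minimal, then $\bell\sim\bell'$. The midpoint $\frac{\bell+\bell'}{2}\in\Delta(E)$ satisfies $\sum_i \bell(i)^2 \le \sum_i \left(\tfrac{\bell(i)+\bell'(i)}{2}\right)^2$ by minimality and Schur-convexity of the sum of squares. This contradicts strict convexity unless $\bell=\bell'$. The consequences for Schur-concave $f$ and Schur-convex $g$ are then immediate from the definitions.

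The main obstacle is step two: showing that the peeling produces sets on which $\bell^*(T_k)=Q(T_k)$ exactly, and that $\bell^*$ is feasible with strictly decreasing levels. I would first try to prove the union-closure of densest sets directly from supermodularity of $Q$, and obtain feasibility by induction on the contracted instances. Failing that, I would take $\bell^*$ as the minimizer of $\sum_i\bell(i)^2$ over $\Delta(E)$ and derive the tight-set structure from its KKT conditions.
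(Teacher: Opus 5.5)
Your proposal is correct and follows the same route as the paper's outline: you identify $\Delta(E)$ as the base polytope of the coverage polymatroid $S\mapsto\sum_{t:N_t\cap S\neq\emptyset}q_t$, which is exactly the paper's rank function $\sum_t f_t$, and then produce its egalitarian, majorization-minimal base. The difference is that you prove the ingredients the paper only cites: you get the cut description from max-flow rather than Fujishige's Minkowski-sum identity, you prove the Dutta--Ray existence theorem via their density-peeling construction and the threshold form of Karamata, and you add an explicit strict-convexity uniqueness argument that the paper's outline leaves implicit (one cosmetic slip: $\sum_i(\bell(i)-\gamma)^+$ has breakpoints at the values of $\bell$, not $\bell^*$, but your pointwise bound on each interval $[\lambda_{k+1},\lambda_k]$ already covers every $\gamma$ without that remark).
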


Based on this, we use the shorthand $\opt(E) \define \bell^*$ for the optimal hindsight solution.
For completeness, we provide a brief proof outline for~\Cref{fact:opt} in~\Cref{appsec:proofs}.
In \Cref{fig:opt_unified}, we provide an example of a request sequence and the corresponding optimal hindsight allocation.
We return to this sequence in later illustrations (\Cref{fig:wf,fig:nestification,fig:oper}).

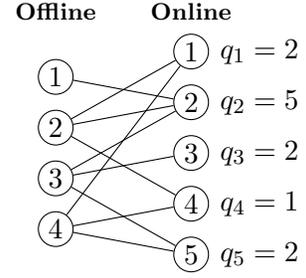
\begin{figure}[t]
    \centering
    \begin{minipage}[c]{0.62\textwidth}
        \centering
        \small
        \resizebox{\linewidth}{!}{
            \begin{tabular}{|c|l|cccc||cccc|cccc|}
                \hline
                \multicolumn{2}{|c|}{Quantities}
                & \multicolumn{4}{c||}{\makecell[c]{Sequence}}
                & \multicolumn{4}{c|}{\text{OPT} Alloc.}
                & \multicolumn{4}{c|}{\text{OPT} Loads} \\
                \hline
                \multirow[c]{5}{*}{\rotatebox{90}{Resources}}
                & $q_1=2$
                &     & $1$ &     & $1$
                &     & $1$ &     & $1$ 
                & $0$ & $1$ & $0$ & $1$
                \\
                & $q_2=5$
                & $1$ & $1$   & $1$ & 
                & $3$ & $1$   & $1$ & 
                & $3$ & $2$   & $1$ & $1$ 
                \\
                & $q_3=2$
                &     &     & $1$ & 
                &     &     & $2$ & 
                & $3$ & $2$ & $3$ & $1$
                \\
                & $q_4=1$
                &     & $1$ &     & $1$
                &     & $1$ &     & $0$
                & $3$ & $3$ & $3$ & $1$
                \\
                & $q_5=2$
                &     &     & $1$ & $1$
                &     &     & $0$ & $2$
                & $3$ & $3$ & $3$ & $3$
                \\
                \hline
            \end{tabular}
        }
    \end{minipage}
    \hfill
    \begin{minipage}[c]{0.35\textwidth}
        \centering
        \begin{tikzpicture}[
            every node/.style={circle, draw, minimum size=1.2mm, inner sep=1.5pt},
            node distance=12mm,
            scale=0.45
        ]
    
        \node[draw=none] at (0,7.2) {\textbf{\footnotesize Offline}};
        \node[draw=none] at (4,7.2) {\textbf{\footnotesize Online}};
        
        \node (u1) at (4,6) {1};
        \node (u2) at (4,4.5) {2};
        \node (u3) at (4,3) {3};
        \node (u4) at (4,1.5) {4};
        \node (u5) at (4,0) {5};
    
        \node[draw=none, right=0.5mm of u1] {$q_1=2$};
        \node[draw=none, right=0.5mm of u2] {$q_2=5$};
        \node[draw=none, right=0.5mm of u3] {$q_3=2$};
        \node[draw=none, right=0.5mm of u4] {$q_4=1$};
        \node[draw=none, right=0.5mm of u5] {$q_5=2$};
        
        \node (v1) at (0,5.25) {1};
        \node (v2) at (0,3.75) {2};
        \node (v3) at (0,2.25) {3};
        \node (v4) at (0,0.75) {4};
        
        \draw (u1) -- (v2); \draw (u1) -- (v4);
        \draw (u2) -- (v1); \draw (u2) -- (v2); \draw (u2) -- (v3);
        \draw (u3) -- (v3);
        \draw (u4) -- (v2); \draw (u4) -- (v4);
        \draw (u5) -- (v3); \draw (u5) -- (v4);
        
        \end{tikzpicture}
    \end{minipage}
    \caption{\small\em
        Example of a request sequence $E = ((N_t, q_t))_{t\in[5]}\in\calE_{4,5,12}$ with $n=4$ offline nodes, $m=5$ online nodes and quantity $q=12$.
        The table gives the quantities and adjacency matrix between offline (columns) and online (rows) nodes, and details the allocation $\opt(E)$, and intermediate loads for offline nodes over time. 
        The diagram on the right depicts the request sequence, with online nodes arriving from top to bottom.
    }
    \label{fig:opt_unified}
\end{figure}

\subsection{Water Filling}
\label{subsec:def_wf}
We now formally define the \waterfill algorithm. 
Upon arrival of an online node $t$, \waterfill fractionally allocates the arriving resource $q_t$ to maximize the minimal cumulative allocation of neighboring offline nodes $N_t$.

\begin{definition}[\waterfill]\label{def:wf}
        The (deterministic) \waterfill policy proceeds as follows:
        after producing allocations $\mbf{x}_1,\dots,\mbf{x}_{t-1}$ on the first $t-1$ arrivals to yield intermediate load vector $\bell_{t-1} = \sum_{s\ge t-1} \mbf{x_s}$, on the arrival of the $t^{th}$ online node $(N_t,q_t)$,   \waterfill selects the allocation $\mbf{x}_t$ which solves the following convex program:

    \begin{align*}
        &\max \min_{i\in N_t} \left\{\mbf{x}_t(i) + \bell_{t-1}(i)\right\} 
        \quad\quad\text{s.t.} \quad \mbf{x}_t\in\Delta(N_t, q_t)
    \end{align*}
\end{definition}

In~\Cref{fig:wf}, we show the \waterfill allocations on the example sequence from~\Cref{fig:opt_unified}.
We use $\wf(N_t, q_t, \bell_{t-1})$ to denote the allocation of \waterfill for the $t^{th}$ resource, and $\wf(E) \define \bell_m$ as the final allocation produced by \waterfill on request sequence $E\in\calE_{n,m,q}$.

\noindent Our results also rely on the following properties of \waterfill, which are easy to verify:
\begin{restatable}[Properties of \waterfill]{observation}{wfproperties}
\label{obs:wf_properties}
    For any non-empty neighbor set $N\subseteq [n]$, quantity $q>0$, and load-vector $\bell\in\R_{\ge 0}^n$, the allocation $\mbf{x}^*$ selected by \waterfill satisfies:
    \begin{itemize}[leftmargin=*]
        \item \textbf{Uniqueness:} $\mbf{x}^* + \bell$ is the unique majorization minimal vector in the feasible set $\{\mbf{x} + \bell\}_{\mbf{x}\in \Delta(N,q)}$.
        \item \textbf{Equalized Loads:} $\forall\,i, j$ in support $\widehat{N} \coloneqq \{i\in N \mid \mbf{x}^*(i) > 0\}$, we have $\mbf{x}^*(i) + \bell(i) = \mbf{x}^*(j) + \bell(j)$.
    \end{itemize}
\end{restatable}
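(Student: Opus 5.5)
The plan is to first make the \waterfill allocation explicit and then derive both properties from that explicit form. Fix $N$, $q>0$ and $\bell$. Define the level function $\phi(\lambda) \define \sum_{i\in N}\pp{\lambda-\bell(i)}$. It is continuous and nondecreasing, it vanishes at $\lambda=\min_{i\in N}\bell(i)$, and it is strictly increasing with slope at least $1$ beyond that point. Hence there is a unique $\lambda^*$ with $\phi(\lambda^*)=q$. Set $\mbf{x}^*(i)=\pp{\lambda^*-\bell(i)}$ for $i\in N$ and $\mbf{x}^*(i)=0$ otherwise; this is feasible.

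The first step is to show that $\mbf{x}^*$ is the (unique) optimizer of the max-min program in \Cref{def:wf}. For any feasible $\mbf{x}$ with $\min_{i\in N}(\mbf{x}(i)+\bell(i))\ge\lambda$, we get $\mbf{x}(i)\ge \lambda-\bell(i)$, so $q\ge\phi(\lambda)$ and therefore $\lambda\le\lambda^*$. The value $\lambda^*$ is attained by $\mbf{x}^*$. If some other optimizer $\mbf{x}$ existed, it would satisfy $\mbf{x}(i)\ge \mbf{x}^*(i)$ for all $i\in N$ together with equal total mass, which forces $\mbf{x}=\mbf{x}^*$.

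The Equalized Loads property then follows at once: on $\widehat N$ we have $\mbf{x}^*(i)>0$, so $\mbf{x}^*(i)+\bell(i)=\lambda^*$ for every such $i$.

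For Uniqueness, I will use Karamata's characterization (\Cref{def:KaramataHLP}). Coordinates outside $N$ are identical across all vectors in the feasible set, so it suffices to compare the coordinates in $N$ with total mass fixed. Let $\mbf{y}=\mbf{x}+\bell$ with $\mbf{x}\in\Delta(N,q)$, and fix a threshold $\gamma$. There are two cases.
\begin{itemize}
\item If $\gamma\ge\lambda^*$, then on $N$ we have $\mbf{y}^*(i)=\max(\lambda^*,\bell(i))$, so $\pp{\mbf{y}^*(i)-\gamma}=\pp{\bell(i)-\gamma}\le\pp{\mbf{y}(i)-\gamma}$.
\item If $\gamma<\lambda^*$, then every coordinate of $\mbf{y}^*$ in $N$ is at least $\lambda^*>\gamma$, so $\sum_{i\in N}\pp{\mbf{y}^*(i)-\gamma}=\mbf{y}^*(N)-\gamma|N|=\mbf{y}(N)-\gamma|N|\le\sum_{i\in N}\pp{\mbf{y}(i)-\gamma}$.
\end{itemize}
In both cases $\mbf{y}^*\preceq\mbf{y}$, so $\mbf{x}^*+\bell$ is majorization minimal.

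For uniqueness of the minimal element, suppose $\mbf{y}$ is also minimal. Then $\mbf{y}\sim\mbf{y}^*$, and in particular $\min_N\mbf{y}\ge\min_N\mbf{y}^*=\lambda^*$. (If some coordinate outside $N$ were the smaller one, we would still have $\mbf{y}=_{[n]\setminus N}\mbf{y}^*$, and equal multisets together with agreement off $N$ give equal multisets on $N$.) So $\mbf{x}$ is also a max-min optimizer, and the first step gives $\mbf{x}=\mbf{x}^*$.

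The main obstacle is this last bookkeeping step: passing from majorization-equivalence of the full vectors to equality. The route is to note that the coordinates off $N$ are fixed, so the multisets restricted to $N$ coincide, and then to invoke the uniqueness of the max-min optimizer.
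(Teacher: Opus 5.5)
Your proof is correct. The paper gives no proof of this observation to compare against; it only calls the properties ``easy to verify.'' Your proposal therefore supplies a complete, self-contained argument.

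The steps all check out:
\begin{itemize}
\item the explicit level form $\mbf{x}^*(i)=\pp{\lambda^*-\bell(i)}$ on $N$, with $\lambda^*$ well defined because $q>0$;
\item the two-case threshold comparison via the Karamata characterization, where coordinates off $N$ contribute equally;
\item the reduction of uniqueness of the minimal element to uniqueness of the max-min optimizer.
\end{itemize}

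Your first step also establishes something the statement only presupposes. The paper speaks of ``the allocation selected by water-filling,'' and in the proof of \Cref{prop:reorder} it cites this observation to conclude that the max-min program of \Cref{def:wf} has a unique solution. That conclusion needs a link between the max-min program and the majorization-minimal point. Your observation supplies it: any optimizer must dominate $\pp{\lambda^*-\bell(i)}$ coordinatewise on $N$ while carrying the same total mass $q$, so it equals $\mbf{x}^*$.

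A small polish for the last step: drop the intermediate inequality and the parenthetical. Say directly that $\mbf{y}\sim\mbf{y}^*$, together with agreement off $N$, gives equal multisets of coordinates on $N$. Hence $\min_{i\in N}\mbf{y}(i)=\lambda^*$, so $\mbf{x}$ is a max-min optimizer and equals $\mbf{x}^*$.
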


\begin{figure}[t]
    \centering
    \begin{tabular}{|c|l|cccc||cccc|cccc|}
        \hline
        \multicolumn{2}{|c|}{Quantities}
        & \multicolumn{4}{c||}{\makecell[c]{Sequence}}
        & \multicolumn{4}{c|}{\wf\, Alloc.}
        & \multicolumn{4}{c|}{\makecell[c]{\wf\, Loads}} \\
        \hline
        \multirow[c]{5}{*}{\rotatebox{90}{Resources}}
        & $q_1=2$
        &     & $1$ &     & $1$
        &     & $1$ &     & $1$ 
        & $0$ & $1$ & $0$ & $1$
        \\
        & $q_2=5$
        & $1$ & $1$ & $1$ & 
        & $2$ & $1$ & $2$ &
        & $2$ & $2$ & $2$ & $1$ 
        \\
        & $q_3=2$
        &     &     & $1$ &
        &     &     & $2$ &
        & $2$ & $2$ & $4$ & $1$
        \\
        & $q_4=1$
        &     & $1$ &     & $1$
        &     & $0$ &     & $1$
        & $2$ & $2$ & $4$ & $2$
        \\
        & $q_5=2$
        &     &     & $1$ & $1$
        &     &     & $0$ & $2$
        & $2$ & $2$ & $4$ & $4$
        \\
        \hline
    \end{tabular}
    \caption{\small\em
        \waterfill allocations on the~\Cref{fig:opt_unified} sequence.
        The columns from left to right show the quantity and neighbors of each online node, allocations made by $\waterfill$, and intermediate load vectors.
    }\label{fig:wf}
\end{figure}

\section{Main Results}
\label{sec:result}

We now state our main results, establishing that the adversary's best response is a nested sequence are a centerpiece of our analysis, as they induce majorization maximal \waterfill allocations.

\subsection{Minimax Optimality of \waterfill}

We first prove that \waterfill minimizes $\alpha$-regret compared to any other policy in a minimax sense for a wide range of settings.
This fact is a corollary of our following main theorem, which states that \waterfill is ``majorization minimal'' in worst-case environments.

\begin{restatable}[\waterfill is Majorization Minimal]{theorem}{wfmin}
\label{thm:wf_min}
    Fix parameters $n,m\in\N$ and $q\in\R_{> 0}$. Given any (possibly randomized) allocation policy $\calA$ and any request sequence $E\in\calE_{n,m,q}$, there is a nested sequence $E'\in\Enest_{n,m,q}$ for which $\wf(E) \preceq \E[\calA(E')]$ and $\opt(E) \succeq \opt(E')$.
\end{restatable}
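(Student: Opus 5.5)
The proof splits into two reductions, one for each half of the statement. First, replace $E$ by a nested sequence $\wh{E}\in\Enest_{n,m,q}$ that is at least as bad for \waterfill and at least as good for the hindsight optimum: $\wf(E)\preceq\wf(\wh{E})$ and $\opt(\wh{E})\preceq\opt(E)$. Second, show that on nested sequences no policy can beat \waterfill. Concretely, the adversary can relabel the offline nodes of $\wh{E}$ according to the (known) policy $\calA$ to obtain a nested $E'$ with $\wf(\wh{E})\preceq\E[\calA(E')]$ and $\opt(E')\sim\opt(\wh{E})$. Chaining the two reductions by transitivity of $\preceq$ gives the theorem.

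\textbf{Step 1 (nestification).} I would transform $E$ by a sequence of local combinatorial operations, in the spirit of \cite{feige2020tighter}. Each operation keeps $n,m$ and every $q_t$ fixed, changes only the neighborhoods, and preserves two invariants.
\begin{itemize}
\item The \waterfill load vector does not become more equitable. By \Cref{obs:wf_properties}, \waterfill sends resource $t$ only to the lowest-loaded nodes of $N_t$, which end at a common water level. Shrinking $N_t$ to its support $\wh{N}_t$ therefore leaves the \waterfill run unchanged.
\item The hindsight optimum does not become less equitable. Enlarging a neighborhood enlarges $\Delta(E)$, so it can only lower $\opt$ in the majorization order, by \Cref{fact:opt}.
\end{itemize}
The natural operation handles a pair $s<t$ with $N_t\not\subseteq N_s$, where node $i\in N_t\setminus N_s$. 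When node $i$ ends with a water level no higher than the level of $N_s$ at time $s$, I would add $i$ to $N_s$ (or swap the roles of nodes), so that \waterfill's run is unchanged up to a permutation of final loads while $\opt$ weakly improves. Repeating with a potential function, such as the number of non-nested pairs or the lexicographic order of neighborhood sizes, gives termination. I expect this step to be the main obstacle. The delicate part is designing the operation so that $\wf$ is preserved exactly, or at least majorization-increases, after the modified step. Because later steps react to the modified loads, I would try to define the new neighborhoods directly from the \waterfill trajectory: take $\wh{N}_t$ to be the set of nodes whose final \waterfill load is at least the water level reached at step $t$. Then I would verify the two invariants for this explicit nested sequence, using the threshold form of Karamata's inequality from \Cref{def:KaramataHLP} for the $\opt$ comparison.

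\textbf{Step 2 (arbitrary policies on nested sequences).} Because the adversary is oblivious and $\Delta(N,q)$ is convex, the expected allocations $\E[\mbf{x}_t]$ are feasible for the fixed neighborhoods. So we may treat $\calA$ as a deterministic fractional policy producing expected loads $\bar{\bell}_t$. The adversary builds $E'$ online against this deterministic simulation. It keeps the sizes $|N_t|$ and quantities $q_t$ of $\wh{E}$, and chooses $N_{t+1}\subseteq N_t$ to be the $|N_{t+1}|$ nodes of $N_t$ with the \emph{smallest} current expected loads. The nodes dropped are the most loaded ones, which are thereafter frozen.

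\textbf{Step 3 (finishing Step 2).} Two facts complete the argument.
\begin{itemize}
\item $\opt$ is unchanged up to permutation. $\opt$ of a nested sequence depends only on the sizes $|N_t|$ and the quantities $q_t$ up to relabeling, so $\opt(E')\sim\opt(\wh{E})$.
\item $\calA$'s loads majorize \waterfill's. By induction on $t$, I would show that the frozen loads of $\calA$ together with its current loads on $N_t$ majorize the corresponding \waterfill vector. \waterfill keeps all of $N_t$ at one common level, which is the most equitable configuration possible. The adversary discards $\calA$'s largest entries, so the mass remaining inside $N_t$ is at most \waterfill's. Comparing partial sums of the largest entries then gives the majorization at each step, and at $t=m$ yields $\wf(\wh{E})\preceq\E[\calA(E')]$.
\end{itemize}
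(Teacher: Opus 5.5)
Your two-step decomposition matches the paper's: first nestify $E$, then relabel offline nodes against $\calA$. However, Step 2 as you specify it points the wrong way and fails. Keeping the $|N_{t+1}|$ nodes with the \emph{smallest} expected loads active, and freezing the heaviest ones, rewards a deviating policy instead of punishing it. Take $\wh{E}=((\{1,2\},1),(\{2\},1))$, so $\wf(\wh{E})=(1/2,3/2)$. Suppose $\calA$ splits the first unit as $(1/2+\delta,1/2-\delta)$ with $0<\delta<1/2$. Your rule sets $N_2'=\{2\}$, so $\calA(E')=(1/2+\delta,3/2-\delta)$, which is strictly majorized by $(1/2,3/2)$. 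Your Step 3 reasoning actually shows the opposite of what you want. Discarding $\calA$'s largest entries gives the early-frozen coordinates (the smallest ones under water-filling) more than water-filling gives them. It also leaves less mass in $N_t$ for the coordinates that end up largest, so the spread shrinks. The correct rule (\Cref{alg:wf_vs_policy}) removes the $\phi_t$ nodes with the smallest expected loads. Label nodes in increasing order of final expected load; node $i$ then has the smallest expected load among nodes $i,\dots,n$ at its last neighbor $\mu_i$. Hence $\E[\bell^{\calA}_{\mu_i}(i)]\le\frac{1}{n-i+1}\bigl(\sum_{t\le\mu_i}q_t-\E[\bell^{\calA}_{\mu_i}([i-1])]\bigr)$, with equality for water-filling, and induction gives $\bell^{\wf}_m([i])\ge\E[\bell^{\calA}_m([i])]$ for all $i$.

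Step 1 is also incomplete. Your sets $\wh{N}_t=\{i:\wf(E)(i)\ge h_t\}$, kept in the original arrival order, are not nested, because heights need not increase with $t$. In the paper's running example the heights are $1,2,4,2,4$, so $\wh{N}_3=\{3,4\}\not\supseteq\wh{N}_4=[4]$. You must also reorder the online nodes by non-decreasing height. This is the paper's Step 2, and it preserves the water-filling run because each offline node's active neighbors have strictly increasing heights (\Cref{obs:mono_neighborhood,prop:reorder}). With that reordering your superlevel-set choice is actually attractive. An inactive $i\in N_t$ has $\bell_{t-1}(i)\ge h_t$, so $\wh{N}_t\supseteq N_t$. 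The $\opt$ half is then immediate, with no need for the tie-breaking argument of \Cref{prop:optpreserve}.

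The water-filling half, however, is the crux (the analogue of \Cref{lem:nestification}), and you do not prove it. Write $S_k$ for the superlevel set at the $k$-th distinct height $\lambda_k$. You need a partial-sum induction showing that the nested run places at most $\sum_{i\notin S_{k+1}}\wf(E)(i)$ outside $S_{k+1}$. That induction rests on the bound $\sum_{t:h_t\le\lambda_k}q_t\le\sum_{i\notin S_k}\wf(E)(i)+|S_k|\lambda_k$, which follows from the monotonicity of heights along each node's active neighbors.
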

The expectation above is with respect to the randomness of $\calA$.
This result follows immediately from composing the following two lemmas.
The first lemma provides a constructive procedure (\Cref{alg:nest}) that takes as input a request sequence $E$, and identifies a nested sequence with the same input parameters on which \waterfill yields a majorizing allocation. This establishes that the adversary's best response to the \waterfill policy is always a nested request sequence.%
\setcounter{lemma}{0}%
\begin{restatable}[Majorization Maximality of Nested Sequences]{lemma}{worstcasewf}
\label{lem:nest_wf}
Given any sequence $E\in\calE_{n,m,q}$, \Cref{alg:nest} returns a nested sequence $\wtE\in\Enest_{n,m,q}$ satisfying $\wf(E) \preceq \wf(\wtE)$ and $\opt(E) \succeq \opt(\wtE)$.
\end{restatable}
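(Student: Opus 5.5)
We construct $\wtE$ by a finite sequence of local ``uncrossing'' operations on the neighborhoods of $E$. Each operation keeps $n$, $m$ and every quantity $q_t$ fixed, so the total stays $q$. Each operation weakly increases the \waterfill output in the majorization order and weakly decreases the hindsight optimum. The operation acts on the first index $t$ at which nestedness fails, so that $N_{t+1}\not\subseteq N_t$. It replaces the neighborhood $N_{t+1}$ by a set contained in $N_t$, chosen from the \waterfill run so that it is the set of agents that actually receive water at step $t+1$ or a suitable intersection, and then reprocesses the later arrivals in the same way. This is the construction in \Cref{alg:nest}, in the spirit of \cite{feige2020tighter}. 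The potential $\sum_t |N_t\setminus N_{t-1}|$ (or a lexicographic order on the first violation) strictly decreases, so the procedure terminates in a nested sequence $\wtE\in\Enest_{n,m,q}$.

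\textbf{The OPT side.} Shrinking a neighborhood $N_t$ to a subset $N_t'\subseteq N_t$ only shrinks $\Delta(E)$, which would move $\opt$ the wrong way. So the operation cannot simply shrink. The adversary must also relabel the agents, applying a permutation after each step, in such a way that $\Delta(E)$ is carried into a set that contains a permuted copy of $\opt(E)$, or that contains some vector it majorizes. Concretely, I will track a feasible allocation of $E$ that achieves $\opt(E)$ and exhibit a feasible allocation of the modified sequence whose load vector is a Robin Hood transfer of it, or equal to it up to permutation. Then \Cref{fact:opt} and \Cref{def:KaramataHLP} give $\opt(\wtE)\preceq\opt(E)$. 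The natural choice is to reorder agents by their current \waterfill loads, so that lower-loaded agents sit in the later, smaller neighborhoods. This mirrors how the upper-triangular graph is worst case in \cite{feige2020tighter}.

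\textbf{The WF side.} I will use \Cref{obs:wf_properties}. By the equalized-loads property, at each step \waterfill raises only a lowest-level set of neighbors to a common water level. If the neighborhood is restricted to exactly the set that received water, the allocation at that step is unchanged. Nestedness then forces later water onto agents that are already high, and this can only concentrate loads. To make the comparison, I will use the threshold form of Karamata in \Cref{def:KaramataHLP}. For each level $\gamma$, I will show that $\sum_i \pp{\bell_t(i)-\gamma}$ under the modified run dominates the same quantity under the original run, by induction on $t$. The inductive step uses the fact that a \waterfill step on a smaller, higher-loaded neighborhood pushes more mass above every threshold. The uniqueness property, which says that \waterfill is the majorization-minimal completion given the current load, gives the per-step monotonicity: if $\bell\preceq\bell'$ in a compatible ordering, then the \waterfill updates preserve that order.

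\textbf{Main obstacle.} The hardest step is that the two inequalities must hold simultaneously under one common relabeling. The permutation that makes the \waterfill loads worse has to be the same one that keeps $\opt$ from getting worse. In general, majorization is not preserved when different vectors are added, so the inductive WF comparison needs a \emph{coordinatewise} ordered invariant, namely that both load vectors are sorted consistently with the nested chain, and not just majorization. I would first try to prove this invariant: after each operation, the modified loads dominate the sorted original loads on every prefix of the chain $N_1\supseteq N_2\supseteq\cdots$. I would then derive both conclusions from it.
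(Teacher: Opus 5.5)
There is a genuine gap in the construction itself. You make the sequence nested by \emph{shrinking} $N_{t+1}$ into a subset of $N_t$, and this can make $\opt$ strictly worse in a way no relabeling repairs. Your operation never modifies $N_1$, so after iterating it every neighborhood lies inside $N_1$ and all mass lands on at most $|N_1|$ agents. A permutation of agent labels never changes $\opt$ up to $\sim$. Concretely, take $E=((\{1\},1),(\{2\},1))$. The set that received water at step $2$ is $\{2\}$, which is not inside $N_1=\{1\}$, and its intersection with $N_1$ is empty. The only admissible choice is $N_2'=\{1\}$, which gives $\opt=(2,0)\succ(1,1)=\opt(E)$. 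So this is also not what \Cref{alg:nest} does. The paper goes the other way: it \emph{enlarges} earlier neighborhoods, replacing $N_t$ by $\bigcup_{s\ge t}N_s$ (the induced nested sequence of \Cref{def:induce_nested}), so feasible sets only grow. On this example it outputs $((\{1,2\},1),(\{1\},1))$, with $\opt=(1,1)$ and $\wf=(3/2,1/2)$.

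Enlarging alone is not enough either: if $N_m=[n]$, the induced sequence is complete and $\wf$ becomes perfectly balanced. The missing ideas are the two preprocessing steps that make enlargement work. First, prune every edge that water-filling leaves unused. This does not change $\wf$ (\Cref{prop:inactive_comp}), and afterwards each agent's neighbors arrive in strictly increasing water-filling height. Second, reorder the \emph{online} nodes by height, breaking ties by reversing arrival order. This preserves each agent's neighbor order, so $\wf$ is unchanged (\Cref{prop:reorder}). Only then nestify.

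The $\opt$ side then needs no agent relabeling. Every pruned edge $(i,t)$ has an active edge $(i,s)$ with $h_t<h_s$, or with $h_t=h_s$ and $s<t$. Hence $\sigma(t)<\sigma(s)\le\mu_i$, nestification puts $(i,\sigma(t))$ back, and the original optimal allocation stays feasible.

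The $\wf$ side is \Cref{lem:nestification}. It compares increasing prefix sums only at the breakpoints $\mu_i<\mu_{i+1}$ of the agents' last neighbors. It uses two facts: the nested run equalizes all of $N_{\mu_j}$, while in the sorted, pruned run the agents retiring at $\mu_j$ hold the maximal load in that set. Your Karamata induction has no substitute for this. The step-wise monotonicity you claim does not follow from uniqueness, which only compares completions of one fixed load vector. And, as you note yourself, majorization is not preserved under adding updates.
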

The second lemma shows that other allocation policies also perform poorly on nested request sequences.
We provide a constructive procedure (\Cref{alg:wf_vs_policy}) that takes as input an arbitrary policy $\calA$ and a nested sequence, and identifies a second nested sequence with equivalent hindsight optimal allocation; further, the \waterfill allocation on the original sequence minorizes the $\calA$ allocation on the new sequence.
Thus, an algorithm designer that deviates from \waterfill to an alternative policy performs worse against an adversary restricted to playing nested sequences.

\setcounter{lemma}{1}
\begin{restatable}[Policy Deviation]{lemma}{wfvspolicy}
\label{lem:wf_vs_policy}
    For any policy $\calA$ and nested sequence $\wtE\in\Enest_{n,m,q}$, we can transform it (via \Cref{alg:wf_vs_policy}) into a nested sequence $E'\in\Enest_{n,m,q}$ with $\wf(\wtE)\preceq\E[\calA(E')]$ and $\opt(\wtE)\sim\opt(E')$.
\end{restatable}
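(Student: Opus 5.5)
The plan is to construct $E'$ \emph{adaptively against the expected trajectory} of $\calA$, keeping the sizes and quantities of $\wtE$ fixed. Since the adversary is oblivious, $E'$ must be a fixed sequence; but once a deterministic prefix $(N'_1,q_1),\dots,(N'_t,q_t)$ is fixed, the expected load vector $\bar{\bell}^{\calA}_t \define \E[\bell_t]$ of $\calA$ on that prefix is well defined. The adversary can therefore choose the next neighborhood as a function of it.

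\textbf{Construction (\Cref{alg:wf_vs_policy}).} Write $\wtE=((N_t,q_t))_{t\in[m]}$ and $k_t=|N_t|$, so that $k_1\ge\dots\ge k_m$.
\begin{enumerate}[leftmargin=*]
    \item Set $N'_1=N_1$.
    \item For $t\ge 1$, let $N'_{t+1}\subseteq N'_t$ consist of the $k_{t+1}$ nodes of $N'_t$ with the largest expected loads $\bar{\bell}^{\calA}_t$, breaking ties arbitrarily.
    \item Keep the quantities $q_t$ unchanged.
\end{enumerate}
Then $E'\in\Enest_{n,m,q}$ by construction.

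\textbf{OPT part.} For a nested sequence, the feasible set $\Delta(\cdot)$ depends only on the chain of sets. Two chains with the same sizes $k_t$ differ only by a relabeling of $[n]$, namely a bijection mapping $N_t\setminus N_{t+1}$ onto $N'_t\setminus N'_{t+1}$ for every $t$. Hence $\Delta(E')=P\,\Delta(\wtE)$ for a permutation matrix $P$. By the uniqueness in \Cref{fact:opt}, this gives $\opt(E')=P\,\opt(\wtE)\sim\opt(\wtE)$.

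\textbf{WF part.} I would prove by induction on $t$ an invariant comparing the \waterfill loads $\bell^{\wf}_t$ on $\wtE$ with $\bar{\bell}^{\calA}_t$ on $E'$. Let $F_t \define [n]\setminus N_{t+1}$ and $F'_t \define [n]\setminus N'_{t+1}$ denote the frozen coordinates, which receive nothing after step $t$. The invariant is:
\begin{enumerate}[leftmargin=*]
    \item[(a)] $\bar{\bell}^{\calA}_t\succeq \bell^{\wf}_t$ on all of $[n]$;
    \item[(b)] on the active sets, the restriction of $\bar{\bell}^{\calA}_t$ to $N'_{t+1}$ weakly majorizes, from above, the restriction of $\bell^{\wf}_t$ to $N_{t+1}$. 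Concretely, the top-$j$ partial sums of the former dominate those of the latter for every $j\le k_{t+1}$.
\end{enumerate}
The selection rule for $N'_{t+1}$ is what drives (b). The adversary keeps the largest $\calA$-loads active, so future mass accumulates on already-rich nodes, while the smallest $\calA$-loads are frozen.

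For \waterfill on a nested sequence, I would use \Cref{obs:wf_properties} (Equalized Loads). At any step, the nodes receiving mass are raised to a common level, which is the minimum of the active set after the step. Consequently the WF loads restricted to the active set are as balanced as possible given the frozen history.

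\textbf{Inductive step.} Adding $q_{t+1}$ on the active sets requires a monotonicity lemma for water-filling. If $\mbf a\succeq_w\mbf b$ (weak submajorization from above on the active coordinates), then for any $\mbf x\in\Delta(N',q)$ we have $\mbf a+\mbf x\succeq_w \mbf b+\wf(N,q,\mbf b)$. I would prove this with Karamata thresholds (\Cref{def:KaramataHLP}). Fix a level $\gamma$, and let $\lambda$ be the WF water level.
\begin{itemize}[leftmargin=*]
    \item For $\gamma\ge\lambda$, WF adds nothing above $\gamma$. Hence $\sum(\mbf b+\mbf x^{\wf}-\gamma)^+=\sum(\mbf b-\gamma)^+ \le\sum(\mbf a-\gamma)^+\le\sum(\mbf a+\mbf x-\gamma)^+$.
    \item For $\gamma<\lambda$, all WF-touched coordinates sit at $\lambda>\gamma$, so the left side equals $\sum(\mbf b-\gamma)^+ + q-(\text{mass WF spends filling up to }\gamma)$. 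One then checks this is at most the right side, using that $\mbf a+\mbf x$ has the same total as $\mbf b+\mbf x^{\wf}$ plus (b).
\end{itemize}
The expectation passes through because $\E[\mbf x_{t+1}]\in\Delta(N'_{t+1},q_{t+1})$ by convexity. After the step, the new freezing decision applies the top-$k_{t+2}$ selection. I would then show that the partial-sum comparison of (b) is inherited by the top-$k_{t+2}$ coordinates. The corresponding dominance of the frozen parts, needed for (a) over the full vector, follows because $\calA$ freezes its smallest values while the frozen WF values are at least those smallest values. At $t=m$, invariant (a) gives $\wf(\wtE)\preceq\E[\calA(E')]$.

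\textbf{Main obstacle.} The delicate part is carrying the \emph{combined} invariant (full vector plus active-set weak majorization) through the freezing step. Weak majorization of the active part and majorization of the whole must be recombined, and the frozen WF coordinates need not be the smallest WF values. I would first try to strengthen (b) to the statement that every WF active coordinate is at least every WF frozen coordinate at the moment of freezing. This should follow from the equalized-level structure of \waterfill on nested chains, and it would make the recombination a direct partial-sum argument.
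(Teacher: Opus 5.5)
Your construction is the paper's \Cref{alg:wf_vs_policy}: keeping the $k_{t+1}$ largest expected loads is the same as deleting the $\phi_t=k_t-k_{t+1}$ smallest. Your starting choice $N'_1=N_1$ is even slightly more careful than the paper's $S_1=[n]$, which tacitly assumes $\wtE$ has no isolated offline node. Your \opt{} argument is fine.

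The gap is in the \wf{} part, at the step you call the main obstacle, which is where the whole content of the lemma lies. The justification you offer there is false groupwise: you claim the values \wf{} freezes are at least the smallest values $\calA$ freezes. Take $\wtE=((\{1,2,3\},3),(\{2,3\},2),(\{3\},1))$, and let $\calA$ play $(0,\tfrac32,\tfrac32)$, then $(0,1,1)$, then $(0,0,1)$.
\begin{itemize}
\item Your rule freezes node $1$ after step $1$. After step $2$, breaking a tie, it freezes node $2$ at expected load $\tfrac52$.
\item \wf{} freezes its second node at load $2$, which is smaller.
\item The final vectors still satisfy $(0,\tfrac52,\tfrac72)\succeq(1,2,3)$, but only because of the deficit $\calA$ created earlier on node $1$.
\end{itemize}
So the frozen comparison must be cumulative across groups. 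Your proposed strengthening, that every active \wf{} coordinate dominates every frozen one, is immediate from \Cref{obs:wf_on_nest}. It fixes the sorted order of $\wf(\wtE)$ but does not supply this cumulative inequality.

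The missing idea is the paper's averaging step.
\begin{itemize}
\item \textbf{Reduction.} On a nested sequence \wf{} keeps its whole active set at one common level. Hence your invariant (b) is equivalent to the single scalar inequality $\bar\bell^{\calA}_t(F'_t)\le\bell^{\wf}_t(F_t)$, i.e.\ an increasing-partial-sum comparison at group boundaries.
\item \textbf{Adding mass.} Adding $q_{t+1}$ leaves this inequality unchanged, since only active coordinates receive mass. Your Karamata monotonicity lemma is correct but not needed.
\item \textbf{Freezing.} Let $T_t=\sum_{s\le t}q_s$. The $\phi_t$ nodes $\calA$ freezes are its smallest in $N'_t$, so they carry at most a $\phi_t/k_t$ fraction of its active mass $T_t-\bar\bell^{\calA}_t(F'_{t-1})$. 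This gives $\bar\bell^{\calA}_t(F'_t)\le(1-\tfrac{\phi_t}{k_t})\bar\bell^{\calA}_t(F'_{t-1})+\tfrac{\phi_t}{k_t}T_t$, with equality for \wf{}. The right-hand side is increasing in the old frozen mass, so the induction closes.
\item \textbf{Inside a group.} The same averaging handles positions inside a frozen group: \wf{}'s partial sums are linear there, and $\calA$'s lie below the chord.
\end{itemize}
The paper packages all of this as an induction over offline nodes in freezing order. From \eqref{eq:wf_vs_policy1} and \eqref{eq:wf_vs_policy2} it gets $(n-i+1)\bell^{\wf}_m([i])=(n-i)\bell^{\wf}_m([i-1])+\sum_{t\le\mu_i}q_t$, and the same relation with $\le$ for $\E[\bell^{\calA}_m([i])]$. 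The inductive hypothesis then finishes the argument.
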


For deterministic allocation policies, \Cref{thm:wf_min} implies that \waterfill, \emph{simultaneously for every $\alpha$}, minimizes $\alpha$-regret (and hence maximizes the competitive ratio) for the goal of  \emph{minimizing/maximizing any Schur-convex/Schur-concave function}.
Formally, \waterfill is a minimax-optimal in the class of deterministic policies in any instantiation of our online allocation game.%
\begin{restatable}[Deterministic Minimax Optimality of \wf]{corollary}{determinimax}\label{cor:reg_det}
    Fix $n,m,q,\alpha$, and a Schur-concave objective $f:\R^{n}_{\ge 0}\to\R_{\ge 0}$ (or Schur-convex objective $g:\R^{n}_{\ge 0}\to\R_{\ge 0}$).
    The $\alpha$-regret of $\wf$ in comparison to that of any deterministic allocation policy $\calA$ satisfies:
    \begin{align*}
        \reg_{(n,m,q),\alpha,f}^{\max}(\wf) \le \reg_{(n,m,q),\alpha,f}^{\max}(\calA)
        \quad\quad\text{or}\quad\quad
        \reg_{(n,m,q),\alpha,g}^{\min}(\wf) \le \reg_{(n,m,q),\alpha,g}^{\min}(\calA)
    \end{align*}
\end{restatable}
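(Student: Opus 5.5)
We derive \Cref{cor:reg_det} from \Cref{thm:wf_min} together with \Cref{fact:opt}. We treat the Schur-concave maximization case; the Schur-convex minimization case is symmetric.

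By \Cref{fact:opt}, for every request sequence $E$ we have $\max_{\bell\in\Delta(E)} f(\bell) = f(\opt(E))$. Hence the regret of a deterministic policy $\calA$ on $E$ can be written as $c_{\alpha,f}^{\max}(\calA,E) = \alpha f(\opt(E)) - f(\calA(E))$. Here $\E[\calA(E)] = \calA(E)$ because $\calA$ is deterministic. It therefore suffices to show that for every $E\in\calE_{n,m,q}$ there is some $E'\in\calE_{n,m,q}$ with $c_{\alpha,f}^{\max}(\wf,E)\le c_{\alpha,f}^{\max}(\calA,E')$. Taking the supremum over $E$ then gives $\reg^{\max}(\wf)\le \sup_{E'} c_{\alpha,f}^{\max}(\calA,E') = \reg^{\max}(\calA)$.

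To produce $E'$, fix $E$ and apply \Cref{thm:wf_min}. This yields a nested $E'\in\Enest_{n,m,q}\subseteq\calE_{n,m,q}$ with $\wf(E)\preceq\calA(E')$ and $\opt(E)\succeq\opt(E')$. All of these vectors have total mass $q$ by full distribution, so the majorization comparisons are well defined.

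Schur-concavity of $f$ then gives two inequalities. From $\wf(E)\preceq\calA(E')$ we get $f(\wf(E))\ge f(\calA(E'))$. From $\opt(E')\preceq\opt(E)$ we get $f(\opt(E'))\ge f(\opt(E))$. Since $\alpha\ge 0$, combining them gives
\[
\alpha f(\opt(E)) - f(\wf(E)) \le \alpha f(\opt(E')) - f(\calA(E')),
\]
which is the required inequality. For Schur-convex $g$, both inequalities reverse: $g(\wf(E))\le g(\calA(E'))$ and $g(\opt(E'))\le g(\opt(E))$. This gives $g(\wf(E))-\alpha g(\opt(E)) \le g(\calA(E'))-\alpha g(\opt(E'))$.

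The argument is short, and the only real step is invoking \Cref{thm:wf_min}. The point needing care is that Schur-monotonicity compares only vectors of equal total sum, and that $\opt(\cdot)$ attains the hindsight optimum; both are supplied by full distribution and \Cref{fact:opt}. One should also note that the suprema may be $+\infty$, in which case the inequality holds trivially.
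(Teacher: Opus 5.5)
Your proposal is correct and follows the paper's proof: fix $E$, invoke \Cref{thm:wf_min} to obtain a nested $E'$ with $\wf(E)\preceq\calA(E')$ and $\opt(E)\succeq\opt(E')$, then apply Schur-concavity (resp.\ Schur-convexity) to both terms to compare per-sequence costs before taking suprema. Your remarks on equal total mass, the role of \Cref{fact:opt}, and $\alpha\ge 0$ make explicit points the paper leaves implicit.
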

Importantly, this means that the principal cannot do better than \waterfill even with side information about the number of offline nodes\footnote{
    The principal need not know the number of offline nodes $n$ up front.
    Instead, she may infer the existence of an offline node when it appears as a neighbor of an online node.
} $n$, the number of online nodes $m$, the total quantity $q$, the particular objective function (within the class of Schur-convex/concave functions), or the performance measure ($\alpha$-regret/competitive ratio). 
Note also that \Cref{cor:reg_det} holds for randomized allocation policies if we consider ex-ante performance measures, i.e., define the cost incurred by the algorithm designer as $\alpha\cdot  f(\opt(E)) - f(\E [\calA(E)])$.
In \Cref{app:sec:adapt_adv}, we explain how to extend our model to adaptive adversaries; against such adversaries, \waterfill is \emph{ex-post} minimax optimal for all Schur-monotone objectives in comparison to all (possibly randomized) policies $\calA$.

For comparing against the ex-post optimal solution $\E[f( \calA(E))]$, the situation is more subtle.
Our next result provides a separation between deterministic and randomized policies by providing an objective for which \waterfill is not the minimax optimal randomized policy.

\begin{restatable}[Deterministic and Randomized Separation]{theorem}{needconcave}\label{thm:no_opt_rand}
    Fix number of offline nodes $n=2$, the number of online nodes $m=2$, and total quantity $q=2$, and consider the Schur-concave objective
    \begin{align*}
        f(\mbf{x}) \define \1{\mbf{x}(1) > 1/2 \text{ and }\mbf{x}(2) > 1/2}
    \end{align*}
    There is a randomized policy $\calA$ with lower $\alpha$-regret than \waterfill; in particular for all $\alpha > 0$:
    \begin{align*}
        \reg_{(2,2,2),\alpha,f}^{\max}(\calA) \le \pp{\alpha - \frac{1}{2}} < 
        \alpha \leq \reg_{(2,2,2),\alpha, f}^{\max}(\wf)
    \end{align*}
\end{restatable}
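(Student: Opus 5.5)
The plan is to establish the two inequalities separately: an explicit randomized policy that achieves $\alpha$-regret at most $(\alpha-\tfrac12)^+$, and an explicit request sequence on which \waterfill has $\alpha$-regret at least $\alpha$. The middle strict inequality $(\alpha-\tfrac12)^+<\alpha$ holds trivially for $\alpha>0$. Throughout, $f$ takes values in $\{0,1\}$, so $c^{\max}_{\alpha,f}(\calA,E)=\alpha\cdot f(\opt(E))-\Pr[\calA(E)\text{ has both coordinates}>1/2]$.

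\textbf{Lower bound for \waterfill.} I would pick a nested sequence $E$ with $N_1=\{1,2\}$, $q_1=1$, $N_2=\{1\}$, $q_2=1$. Here $\opt$ assigns $q_1$ entirely to agent $2$, giving loads $(1,1)$, so $f(\opt(E))=1$. \waterfill splits $q_1$ equally, giving $(1/2,1/2)$, and then loads $(3/2,1/2)$. Because $1/2$ is not strictly greater than $1/2$, we get $f(\wf(E))=0$. Hence the regret on $E$ is $\alpha$, and the supremum is at least $\alpha$. (Regret can never exceed $\alpha$, since $f\le 1$.)

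\textbf{Randomized policy.} I would let $\calA$ place the first request's quantity in an extreme way, chosen at random. With probability $1/2$ it puts as much of $q_1$ as possible on agent $1$, and otherwise on agent $2$. The second request is then handled by water-filling. I then case on $E=((N_1,q_1),(N_2,q_2))$ with $q_1+q_2=2$.

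If $f(\opt(E))=0$, the regret is at most $0\le(\alpha-1/2)^+$.

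Otherwise some feasible allocation gives both agents more than $1/2$. I need to show $\Pr[f(\calA(E))=1]\ge 1/2$, which bounds the regret by $\alpha-1/2$.

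The cases where $|N_1|=1$ or $|N_2|=1$ with a singleton first request are forced or easy. For example, if $N_1=\{i\}$, then the second request must be able to reach the other agent, and water-filling then balances as well as possible, which matches $\opt$.

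The key case is $N_1=\{1,2\}$. Let the second request be restricted to agent $j$, or to both agents. When $N_2=\{1,2\}$, water-filling of the second request achieves the balanced $(1,1)$ regardless of the first step, provided $q_2\ge q_1$. If $q_2<q_1$, the extreme choice puts $q_1>1$ on one agent; I should check that the other agent still receives more than $1/2$. This case might fail. If it does, I would modify the policy so that it allocates $\min(q_1,\tfrac12+\epsilon)$-type amounts, or randomizes between the \waterfill allocation and an extreme one.

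When $N_2=\{j\}$, the branch that put $q_1$ on the agent $i\neq j$ succeeds whenever $\opt$ does, since $q_1$ plays the role of $\opt$'s assignment. That branch has probability $1/2$.

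\textbf{Main obstacle.} The main obstacle is designing $\calA$ so that the $N_1=N_2=\{1,2\}$ case and the $N_2=\{j\}$ case are simultaneously covered with probability at least $1/2$ for every split $q_1+q_2=2$. The first thing I would try is to let $\calA$, on the first request, give agent $i$ (chosen uniformly) $\min(q_1,1)$ and the remainder to the other agent. I would then verify the three subcases $q_1\le 1/2$, $1/2<q_1\le 1$, and $q_1>1$ directly.
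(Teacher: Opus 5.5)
Your lower bound for water-filling is correct; it is the mirror image of the paper's instance with $N_2=\{2\}$. The gap is in the randomized policy you actually analyze. The extreme-first-request policy does not beat water-filling, and it fails in more places than the $N_1=N_2=\{1,2\}$, $q_1\ge 3/2$ case you flagged. Your claim that for $N_2=\{j\}$ the branch putting $q_1$ on the other agent ``succeeds whenever $\opt$ does'' is false. Take $q_1=1.8$, $q_2=0.2$, $N_1=\{1,2\}$, $N_2=\{j\}$. Then $\opt$ sends $1$ unit of $q_1$ to the non-$j$ agent and $0.8$ to $j$, reaching $(1,1)$. Your branch leaves $j$ with only $0.2$, and the other branch leaves the non-$j$ agent with $0$. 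So this policy also has regret $\alpha$. The missed point is that $\opt$ may route part of the first request to $j$, so $q_1$ no longer ``plays the role of $\opt$'s assignment'' once $q_2\le 1/2$.

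Your last suggestion does work: give a uniformly random agent $i$ the amount $\min(q_1,1)$ of a full first request, give the remainder to the other agent, then water-fill the second request. But you stop before checking it, and that check is the whole content of the upper bound. It goes through as follows.
\begin{itemize}
\item[(a)] If $N_1=N_2=\{1,2\}$, the loads after step 1 are $(q_1,0)$ with $q_2\ge q_1$, or $(1,q_1-1)$ with $q_2=2-q_1$. In both branches water-filling ends at exactly $(1,1)$.
\item[(b)] If $N_1=\{1,2\}$ and $N_2=\{j\}$, then $f(\opt(E))=1$ iff $q_1>1/2$. In the branch $i\neq j$, agent $i$ gets $\min(q_1,1)>1/2$ and $j$ gets $(q_1-1)^++q_2\ge 1$, so the success probability is at least $1/2$.
\item[(c)] If $N_1$ is a singleton, step 1 is forced. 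By the uniqueness property in \Cref{obs:wf_properties}, water-filling step 2 then outputs $\opt(E)$.
\end{itemize}
Hence every cost is at most $\max(0,\alpha-\tfrac12,\alpha-1)=\pp{\alpha-\tfrac12}$.

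This is the same idea as the paper's policy: a random primary agent whose share of flexible supply is capped. The paper instead caps the primary's load at $3/4$ and treats every full request identically, so its policy is defined for any $m$. It also argues by cases on dedicated singleton supply: either some agent has more than $1/2$, or neither does, which forces at least one unit of flexible supply. Your variant exploits $m=2$ by water-filling the last request, which makes the both-full case succeed with probability one.
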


\Cref{thm:no_opt_rand} shows that for the expected ex-post objectives, \waterfill is not necessarily $\alpha$-regret minimizing among randomized policies for general Schur-concave objectives.
Thus, for \waterfill to compete against randomized policies, we need some additional assumption. The next result (again derived from~\Cref{thm:wf_min}) provides one such sufficient condition.

\begin{restatable}[Randomized Minimax Optimality of \wf]{corollary}{randminimax}\label{cor:reg_rand}
    For any $n,m,q,\alpha$, and any symmetric and \underline{concave} objective $f:\R_{\ge 0}^n\to\R_{\ge 0}$ (or any symmetric and \underline{convex} objective $g:\R_{\ge 0}^n\to\R_{\ge 0}$), the $\alpha$-regret of $\wf$ in comparison to that of any randomized allocation policy $\calA$ satisfies:
    \begin{align*}
        \reg_{(n,m,q),\alpha,f}^{\max}(\wf) \le \reg_{(n,m,q),\alpha,f}^{\max}(\calA)
        \quad\quad\text{or}\quad\quad
        \reg_{(n,m,q),\alpha,g}^{\min}(\wf) \le \reg_{(n,m,q),\alpha,g}^{\min}(\calA)
    \end{align*}
\end{restatable}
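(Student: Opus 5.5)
The plan is to derive \Cref{cor:reg_rand} directly from \Cref{thm:wf_min}, using Jensen's inequality to pass from the expected load vector to the expected objective value. I treat the maximization case with symmetric concave $f$; the minimization case with symmetric convex $g$ is identical with all inequalities reversed.

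First, note that a symmetric concave function is Schur-concave. This holds because if $\mbf{y}\preceq\mbf{x}$, then by \Cref{def:KaramataHLP} we can write $\mbf{y}=A\mbf{x}$ with $A$ doubly stochastic. By Birkhoff's theorem, $A$ is a convex combination of permutation matrices. Concavity then gives $f(\mbf{y})\ge\sum_k \lambda_k f(P_k\mbf{x})$, and symmetry makes the right-hand side equal to $f(\mbf{x})$. So \Cref{fact:opt} applies, and $\max_{\bell\in\Delta(E)} f(\bell)=f(\opt(E))$ for every $E$.

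Now fix any randomized policy $\calA$ and any $E\in\calE_{n,m,q}$. \Cref{thm:wf_min} supplies $E'\in\Enest_{n,m,q}\subseteq\calE_{n,m,q}$ with $\wf(E)\preceq\E[\calA(E')]$ and $\opt(E)\succeq\opt(E')$. This gives two inequalities:
\begin{itemize}
\item Jensen's inequality for concave $f$ gives $\E[f(\calA(E'))]\le f(\E[\calA(E')])$.
\item Schur-concavity applied to $\wf(E)\preceq\E[\calA(E')]$ gives $f(\E[\calA(E')])\le f(\wf(E))$.
\end{itemize}
Combining them, $\E[f(\calA(E'))]\le f(\wf(E))$. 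Separately, Schur-concavity applied to $\opt(E')\preceq\opt(E)$ gives $f(\opt(E'))\ge f(\opt(E))$. Since $\alpha>0$, these combine to
\[
c^{\max}_{\alpha,f}(\wf,E)=\alpha f(\opt(E))-f(\wf(E))\le \alpha f(\opt(E'))-\E[f(\calA(E'))]=c^{\max}_{\alpha,f}(\calA,E')\le \reg^{\max}_{(n,m,q),\alpha,f}(\calA).
\]
Here $\wf$ is deterministic, so no expectation appears on its side. Taking the supremum over $E$ yields the claim.

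For $g$, the same steps give $\E[g(\calA(E'))]\ge g(\E[\calA(E')])\ge g(\wf(E))$ and $g(\opt(E'))\le g(\opt(E))$, hence $c^{\min}_{\alpha,g}(\wf,E)\le c^{\min}_{\alpha,g}(\calA,E')$.

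The one step needing care is the Jensen step, since this is precisely where \Cref{thm:no_opt_rand} shows that mere Schur-concavity fails. Two technical points must be checked there. First, $\E[\calA(E')]$ must lie in the domain $\R^n_{\ge0}$ and have total $q$, which holds because $\Delta(E')$ is convex and compact. Second, Jensen requires $f$ concave on a convex domain. All of the substantive work is contained in \Cref{thm:wf_min}, which we are assuming.
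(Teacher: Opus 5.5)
Your proposal is correct and follows the paper's proof essentially step for step. You invoke \Cref{thm:wf_min} to obtain $E'$, chain Schur-concavity with Jensen's inequality to get $f(\wf(E)) \ge f(\E[\calA(E')]) \ge \E[f(\calA(E'))]$, and use $\opt(E')\preceq\opt(E)$ for the benchmark term; the only difference is that you prove ``symmetric and concave implies Schur-concave'' directly via Hardy--Littlewood--P\'olya and Birkhoff, where the paper just cites Marshall--Olkin.
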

Before proceeding, we note that establishing~\Cref{lem:nest_wf,lem:wf_vs_policy} form the technical core of our work, and most of the main body of this paper is devoted to establishing these (\Cref{sec:nest,sec:wf_vs_policy} respectively). The remaining results follow from these lemmas in a straightforward way, and so their proofs are deferred to~\Cref{appsec:result}.

\subsection{Characterizing Worst-Case Sequences}
\label{ssec:uppertriangle}

\Cref{thm:wf_min} shows that \waterfill dominates any other policy in terms of $\alpha$-regret for any $\alpha$ and any (reasonable) equity-promoting objective.
In this section, we consider the adversary's side of the game and characterize their best response --- the worst-case input for \waterfill --- in~\Cref{thm:param}.
Informally, our result shows that given any number of offline nodes $n$ and total quantity $q$, it is sufficient for an adversary to only consider \emph{$n\times n$ complete upper-triangular} request sequences (i.e., with $m=n$ online nodes and $N_t=\{t,t+1,\ldots,n\}$) and \emph{non-decreasing quantities} (i.e., $q_1\leq q_2\leq\ldots\leq q_n$ and $\sum_{t=1}^nq_t=q$).

\begin{restatable}[Characterizing Adversarial Sequences]{theorem}{param}
\label{thm:param}
    Given a request sequence $\wtE\in\Enest_{n,m,q}$, we can transform it (\Cref{alg:param_nest}) into a nested sequence $E'=((N_t',q_t'))_{t\in[n]}$ satisfying all the following:
    \begin{enumerate}
        \item\label{item:construct} $E'\in\Enest_{n,n,q}$ (i.e., $n$ online nodes), with $N_t'=\{t,t+1,\ldots,n\}$ \hfill(Complete Upper Triangular)
        \item\label{item:order}
        $q_1'\leq q_2'\leq\ldots\leq q_n'$ and $\sum_{t=1}^nq_t'=q$ \hfill(Nondecreasing Quantities)
        \item\label{item:maj} $\opt(\wtE) \sim \opt(E')$, and $\wf(\wtE)\preceq \wf(E')$ \hfill(Dominates in Majorization Preorder)
    \end{enumerate}
\end{restatable}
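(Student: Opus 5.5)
The plan is to build $E'$ directly from the hindsight optimum and to reach it from $\wtE$ through a chain of local operations. Each operation will preserve the equivalence class of $\opt$ and can only move $\wf$ upward in majorization. The first two operations are normalizations.

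\emph{Relabeling.} Since $N_1\supseteq\dots\supseteq N_m$ is a chain, we can relabel the offline nodes so that every $N_t$ is a suffix $\{a_t,a_t+1,\dots,n\}$ with $a_1\le a_2\le\dots\le a_m$. Relabeling only permutes coordinates, so both $\opt$ and $\wf$ stay in their equivalence classes.

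\emph{Merging.} Next, we merge any consecutive requests with the same neighborhood. By the Uniqueness property in \Cref{obs:wf_properties}, applying \waterfill to $(N,q)$ and then to $(N,q')$ gives the same load vector as applying it once to $(N,q+q')$: the water level on $N$ only rises, and the unique majorization-minimal completion of $\bell$ is reached either way. On the offline side, $\Delta(N,q)+\Delta(N,q')=\Delta(N,q+q')$, so the feasible set $\Delta(\wtE)$, and hence $\opt$, is unchanged.

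\emph{Target sequence.} Let $\bell^*=\opt(\wtE)$ and write $\ell_1\le\dots\le\ell_n$ for its entries in increasing order. Define $E'$ to be the complete upper-triangular sequence with $N'_t=\{t,\dots,n\}$ and $q'_t=\ell_t$. Properties~\ref{item:construct} and~\ref{item:order} then hold by construction, with total quantity $\sum_t \ell_t=q$.

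\emph{Checking $\opt(E')\sim\opt(\wtE)$.} The diagonal allocation gives request $t$ entirely to node $t$ and yields the load vector $(\ell_1,\dots,\ell_n)$. Any feasible $\mbf{y}\in\Delta(E')$ satisfies $\mbf{y}([k])\le\sum_{t\le k}q'_t=\ell^{\uparrow}([k])$, because only the first $k$ requests can reach nodes $1,\dots,k$. Hence $\mbf{y}^{\uparrow}([k])\le\mbf{y}([k])\le\bell^{*\uparrow}([k])$ for every $k$. By \Cref{def:maj} (increasing-order form) this says $\mbf{y}\succeq\bell^*$, so the diagonal load vector is the minimal element and $\opt(E')\sim\bell^*$.

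\emph{Structure of $\opt$ on nested inputs.} To relate $\wtE$ to $E'$, I would describe $\opt(\wtE)$ explicitly for a suffix-nested sequence. The hindsight optimum is a ``reverse water-filling'': the last request is spread evenly over the smallest set $N_m$, then the earlier requests are leveled backward over their larger sets. Consequently $\opt(\wtE)$ is constant on blocks of consecutive suffixes and non-decreasing in the relabeled index. This makes explicit which blocks of $\bell^*$ each request of $\wtE$ funds.

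\emph{The main inequality $\wf(\wtE)\preceq\wf(E')$.} I expect this step to be the main obstacle. I would prove it through the threshold form of \Cref{def:KaramataHLP}, using two local operations.

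\begin{itemize}
\item \textbf{Splitting.} A request whose neighborhood strictly contains the next one, by a block of $k>1$ nodes, is split into $k$ requests with neighborhoods peeled off one node at a time. The quantities are chosen to equal the $\opt$-values of the peeled nodes, so that $\opt$ is unchanged.
\item \textbf{Shifting.} Quantity is moved from an earlier request to a later one so that the quantities become exactly the sorted $\opt$-entries.
\end{itemize}

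For each operation the claim to prove is that $\sum_i\pp{\wf(i)-\gamma}$ does not decrease, for every threshold $\gamma$. The intuition is that \waterfill on a suffix-nested input freezes each node at the water level reached when that node leaves the neighborhood. Delaying or refining requests lets later, smaller neighborhoods absorb more water, which raises the top levels while the total stays fixed. The tail sums of $\wf(\wtE)$ are governed by the last time its water level exceeds $\gamma$, so I would prove the threshold inequality for one operation by comparing those times for the two sequences.

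If this inductive argument becomes unwieldy, the fallback is to compute $\wf(E')$ in closed form. On an upper-triangular input with non-decreasing quantities, node $t$ freezes at level $\sum_{s\le t}q'_s/(n-s+1)$ (each of the first $t$ requests is spread evenly over its neighborhood). I would then bound the tail sums of $\wf(\wtE)$ by the same expression, using the block description of $\opt(\wtE)$ above.
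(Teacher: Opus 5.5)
Your construction of $E'$, the verification of the complete upper-triangular and nondecreasing-quantity properties, and the argument that the diagonal allocation is majorization-minimal on $E'$ (so $\opt(E')\sim\opt(\wtE)$) are correct and match the paper's.

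The gap is the central inequality $\wf(\wtE)\preceq\wf(E')$, which you never prove. For the splitting/shifting route you only say what each operation would need to satisfy (that $\sum_i \pp{\wf(i)-\gamma}$ does not decrease), plus intuition. Also, splitting is not well defined as specified: the $\opt$-values of the peeled nodes need not add up to the quantity of the request being split. For instance, $(\{1,2,3,4\},10),(\{2,3,4\},0.1),(\{4\},0.1)$ has $\opt=(2.55,2.55,2.55,2.55)$. Replacing the middle request (quantity $0.1$, peeled block $\{2,3\}$) by two requests of quantity $2.55$ changes the total quantity. Your fallback correctly finds the closed form for $\wf(E')$. But the step ``bound the tail sums of $\wf(\wtE)$ by the same expression, using the block description of $\opt(\wtE)$'' is exactly where an idea is needed, and the reverse-water-filling description of $\opt$ does not supply it.

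What is missing is a closed form for $\wf(\wtE)$ and a monotonicity property of that form. In your suffix labeling, let $\wt{\mbf{z}}(i)$ be the total quantity of requests of $\wtE$ whose neighborhood is exactly $\{i,\dots,n\}$. On a nested input each request meets equal loads on its neighborhood, so water-filling spreads it uniformly. This gives $\wf(\wtE)=H\wt{\mbf{z}}$ with $H(i,j)=\1{i\ge j}/(n-j+1)$; your formula for $E'$ is the special case $\wf(E')=H\bell^{*\uparrow}$. For $\mbf{x}\ge 0$ the vector $H\mbf{x}$ is nondecreasing, and by Abel summation
\[
(H\mbf{x})([k])=\sum_{j\le k}w_j\,\mbf{x}(j)=\sum_{i\le k}(w_i-w_{i+1})\,\mbf{x}([i]),\qquad w_j=\frac{k-j+1}{n-j+1},\ w_{k+1}=0,
\]
with $w$ nonincreasing. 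Hence the prefix-sum dominance $\wt{\mbf{z}}([i])\ge\bell^{*\uparrow}([i])$ for all $i$ (with equality at $i=n$) gives $\wf(\wtE)^{\uparrow}([k])\ge\wf(E')^{\uparrow}([k])$ for all $k$, i.e.\ $\wf(\wtE)\preceq\wf(E')$. That dominance is just feasibility: $\wt{\mbf{z}}([i])$ is the total quantity of requests that can reach nodes $[i]$, so it is at least $\bell^*([i])\ge\bell^{*\uparrow}([i])$. This is the paper's argument (\Cref{obs:h_wf} and \Cref{obs:h_maj}). It needs no local operations, no merging, and no structural description of $\opt(\wtE)$.
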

\Cref{sec:uppertriangle} details how we obtain this characterization. 
Notably, given any $n,\alpha$, we can now formulate computing the minimax optimal $\alpha$-regret (and consequently, competitive ratio) as an optimization problem over $\R^n$. For brevity, we consider maximizing a Schur-concave function, but an analogous result holds for minimizing any Schur-convex function. 

\begin{restatable}[Minimax Regret Characterization]{corollary}{regret}
\label{cor:reg_bound}
Given any $n\in\N$, let $H\in \R_{\ge 0}^{n\times n}$ be the harmonic series matrix $H(i,j) = \frac{\1{i\ge j}}{n-j+1}$.
Then for any $\alpha >0$ and Schur-concave $f:\R_{\ge 0}^n\to\R_{\ge 0}$, the minimax optimal $\alpha$-regret satisfies:
    \begin{align*}
        \reg_{(n,:,:),\alpha, f}^{\max}(\wf) = \sup_{\bell\in\R^n: 0\leq \bell(1) \leq \bell(2) \leq \ldots \leq \bell(n)} \Big(\alpha \cdot f(\bell) - f(H \bell)\Big)
    \end{align*}
\end{restatable}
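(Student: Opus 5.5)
We prove the identity by establishing two inequalities, using \Cref{thm:wf_min}, \Cref{lem:nest_wf} and \Cref{thm:param} to restrict the adversary to complete upper-triangular sequences, and then computing $\opt$ and $\wf$ on those explicitly.

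\textbf{Step 1: reduction to upper-triangular sequences.} Take any $E\in\calE_{n,m,q}$. Apply \Cref{lem:nest_wf} to obtain a nested $\wtE$ with $\wf(E)\preceq\wf(\wtE)$ and $\opt(E)\succeq\opt(\wtE)$. Then apply \Cref{thm:param} to obtain a complete upper-triangular $E'$ with nondecreasing quantities, $\opt(E')\sim\opt(\wtE)$ and $\wf(\wtE)\preceq\wf(E')$. By Schur-concavity and \Cref{fact:opt},
\[
c^{\max}_{\alpha,f}(\wf,E)\le \alpha f(\opt(E'))-f(\wf(E')).
\]
The supremum over all $E$ (with any $m$ and $q$) therefore equals the supremum over upper-triangular $E'$ with $q_1'\le\dots\le q_n'$. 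The reverse inequality is trivial, since these sequences are themselves feasible inputs.

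\textbf{Step 2: explicit formulas.} Let $E'$ be upper-triangular with $q_1\le\dots\le q_n$.
\begin{itemize}
\item \textbf{Water-filling.} Node $1$ spreads $q_1$ evenly over $[n]$. Inductively, before step $t$ all nodes in $\{t,\dots,n\}$ carry equal load, so \waterfill gives each of them $q_t/(n-t+1)$. Hence $\wf(E')(i)=\sum_{j\le i} q_j/(n-j+1)$, which is exactly $H\mbf{q}$.
\item \textbf{Optimum.} We claim $\opt(E')=\mbf{q}$, i.e.\ each request $t$ is sent entirely to agent $t$. The vector $\mbf{q}$ is feasible and nondecreasing. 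Any feasible load $\bell$ must place all of $q_t,\dots,q_n$ on agents $\{t,\dots,n\}$, so for every $k$ the $k$ largest entries of $\bell$ sum to at least $q_{n-k+1}+\dots+q_n$. This sum is exactly the top-$k$ sum of $\mbf{q}$, so $\mbf{q}\preceq\bell$ for every feasible $\bell$.
\end{itemize}
Writing $\bell=\mbf{q}$, the objective becomes $\alpha f(\bell)-f(H\bell)$, where $\bell$ ranges over nondecreasing nonnegative vectors. Every such vector with positive entries arises from some sequence, so the two suprema coincide.

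\textbf{Main obstacle.} The delicate point is the domain bookkeeping. Quantities must satisfy $q_t>0$, while the supremum allows $\bell(1)=0$. Also, $\reg_{(n,:,:)}$ ranges over all $m$ and $q$. The first point needs either an approximation argument or a padding argument in which a zero-quantity request is replaced by a tiny one, and I expect this to be the fussiest step for a general Schur-concave $f$ that need not be continuous. If continuity is unavailable, I would handle $\bell(1)=0$ through degenerate requests. Concretely, agents with zero load under $\bell$ receive nothing in either $\opt$ or $\wf$ when the instance is realized on fewer effective agents. This should match $H\bell$ after checking that the first coordinates of $H\bell$ are zero exactly when the corresponding entries of $\bell$ are zero.
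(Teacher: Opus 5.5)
Your proposal is correct and follows the paper's own route. The upper bound composes \Cref{lem:nest_wf} with \Cref{thm:param} and Schur-concavity, and the lower bound realizes each nondecreasing $\bell$ as a complete upper-triangular instance with $\wf=H\bell$ and $\opt=\bell$, exactly as in \Cref{obs:worst_exist}.

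Your treatment of zero entries, dropping zero-quantity requests so the instance lives only on agents with $\bell(i)>0$, is valid and more careful than the paper's proof, which silently works with strictly positive $\bell$ of fixed total. The only point no request sequence can realize is $\bell=\vec{0}$. There the identity genuinely needs something like continuity of $f$ at the origin: the Schur-concave $f=\1{\mbf{x}=\vec{0}}$ with $\alpha=2$ gives regret $0$ but right-hand side $1$. That is a flaw in the statement, not in your argument.
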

\noindent \Cref{table:cr} lists exact competitive ratios for various functions, obtained via this characterization. 

\begin{figure}[!ht]
    \centering
    \small
    \begin{tabular}{|c|c|c|c|}
        \hline
        Name  & Function & Max/Min & Competitive Ratio  \\
        \hline
        \hline
        \textsc{Nash Social Welfare} & $(\prod_i \mbf{x}(i))^{1/n}$ & Max & $(n!)^{-1/n}$ \\
        \hline
        \textsc{Minimax} & $\min_{i\in [n]}\mbf{x}(i)$ & Max & $1/n$ \\
        \hline
        \textsc{Maximin} & $\max_{i\in[n]}\mbf{x}(i)$ & Min & $\sum_{i\in[n]} 1/i$ \\
        \hline
        \textsc{Fractional Matching} & $\sum_{i\in[n]}\min(c, \mbf{x}(i))$, $c > 0$ & Max & $\min_{k\in[n]} M_k$ \\
        \hline
        \textsc{Separable Concave} & $\sum_{i\in[n]} \mathscr{f}(\mbf{x}(i))$, $\mathscr{f}$ concave & Max & $\ge \min_{k\in[n]} M_k$ \\
        \hline
    \end{tabular}
    \caption{\small\em
        The table depicts competitive ratios for various Schur-monotone objectives.
        The columns, from left to right, are the name and definition of the function, the form of objective (maximization/minimization), and the minimax competitive ratio for given $n$.
        Notably, the fractional matching objective with capacity $c=1$, $\fm_n(\mbf{x}) = \sum_{i\in[m]} \min(1, \mbf{x}(i))$, defines the $M_k$ sequence: $M_k = \frac{1}{k} \fm_k(H\vec{1})$. For details, see~\Cref{appsec:CR}.
    }
    \label{table:cr}
\end{figure}

\section{Water-Filling, Nested Sequences and Majorization}
\label{sec:nest}

In this section, we present an explicit procedure (\Cref{alg:nest}) that transforms an arbitrary request sequence $E$ into a nested one $\wtE$, such that the outcome of \waterfill moves higher in the majorization preorder from $E$ to $\wtE$, while that of the hindsight-optimal allocation moves lower. 

\worstcasewf*

This process and its analysis constitute
the most involved, but also the most illuminating, ingredient in the proof 
of \Cref{thm:wf_min}. 
It reveals that worst-case sequences for \waterfill have a nested structure, which greatly facilitates comparing 
\waterfill with other policies, as well as calculating competitive ratios.

\subsection{Proof Overview and Definitions}

A natural way to transform a sequence $E$ into a nested sequence $\wtE$ is to replace each neighbor set~$N_t$ with the union $N_t \cup N_{t+1} \cup \cdots \cup N_m$: we call this the nested request sequence induced by $E$: 
\begin{definition}[Induced Nested Sequence]\label{def:induce_nested}
    Fix an sequence $E = ((N_t, q_t))_{t\in [m]}\in\Enest_{n,m,q}$ and define the latest arriving online neighbor of offline node $i$: $\mu_i \define\max(\Gamma_i(E) \cup\{0\})$.
    The nested sequence induced by $E$ is $\wh{E} = ((\wh{N}_t, \wh{q}_t))_{t\in [m]} \in \Enest_{n,m,q}$ with $\wh{N}_t \gets \{i\in [n]\mid t \le \mu_i\}$ and $\wh{q}_t\gets q_t$. 
\end{definition}

\begin{figure}[ht]
    \centering
    \small
    \begin{tabular}{|c|l|cccc||cccc|}
        \hline
        \multicolumn{2}{|c|}{Quantities}
        & \multicolumn{4}{c||}{\makecell[c]{Sequence $E$}}
        & \multicolumn{4}{c|}{\makecell[c]{Nested \\ Sequence $\wh{E}$}} \\
        \hline
        \multirow[c]{5}{*}{\rotatebox{90}{Resources}}
        & $q_1=2$
        &     & $1$ &     & $1$
        & $1$ & $1$ & $1$ & $1$ 
        \\
        & $q_2=5$
        & \textcolor{red}{$1$} & $1$ & $1$ &  
        & $1$ & $1$ & $1$ & $1$
        \\
        & $q_3=2$
        &  &  & $1$ &
        &  & $1$ & $1$ & $1$
        \\
        & $q_4=1$
        &     & \textcolor{red}{$1$} &     & $1$
        &     & $1$ & $1$ & $1$
        \\
        & $q_5=2$
        &     &     & \textcolor{red}{$1$} & \textcolor{red}{$1$}
        &     &     & $1$ & $1$
        \\
        \hline
    \end{tabular}
    \caption{\small\em
        The table depicts the sequence $E\in\calE_{4,5,12}$ from~\Cref{fig:opt_unified}, and the corresponding induced nested sequence $\wh{E}\in\Enest_{4,5,12}$.
        Note that both sequences have the same number of offline nodes (columns) and online nodes (rows).
        For clarity, we highlight $\mu_i$, the latest arriving online neighbor of each offline node in red. 
    }
    \label{fig:nestification}
\end{figure}
One could hope that the relation $\wf(E) \preceq \wf(\wh{E})$ is satisfied by every sequence $E$; however, this is not the case. 
As a simple counterexample, note that whenever 
$N_m = [n]$, the sequence $\wh{E}$ is complete bipartite and 
$\wf(\wh{E})$ is the perfectly balanced vector $\frac{q}{n} \vec{1}$, 
which is majorization-minimal.

To get around this obstacle, we need to first characterize sufficient conditions on a sequence $E$ such that the 
relation $\wf(E) \preceq \wf(\wh{E})$ is true. For this, we need two additional definitions. First, we separate the edges of a sequence into those that are \emph{active} under \waterfill (i.e., in the support of the \waterfill allocation), and the rest.
\begin{definition}[Active and Inactive Edges]
    Fix a sequence $E = ((N_t, q_t))_{t\in[m]}\in\calE_{n,m,q}$ and an allocation $(\mbf{x}_t)_{t\in[m]}$ which is feasible on that sequence.
    An edge $(i,t)$ is said to be active when the allocation uses the edge $\mbf{x}_t(i) > 0$.
    Edges that are not active are called inactive.
\end{definition}
Secondly, using the equalizing property of \waterfill on active edges (see~\Cref{obs:wf_properties}), we define the notion of the \emph{height} of an online node under \waterfill.
\begin{definition}[Height of Online Node]\label{def:height}
    For sequence $E=((N_t, q_t))_{t\in[m]}$ with \waterfill allocations $(\mbf{x}_t)$, the \emph{height} of online node $t$ is $h_t \define \sum_{s=1}^{t}\mbf{x}_s(i)$ for any $i$ in support $\widehat{N}_t \coloneqq \{i \mid \mbf{x}^*_t(i) > 0 \}$.
\end{definition}
\noindent To see how these definitions relate to nested sequences, we can make the following observation: 
\begin{observation}[\waterfill on Nested Sequences]\label{obs:wf_on_nest}
In any nested sequence $E\in\Enest_{n,m,q}$, online nodes arrive in increasing order of height under \waterfill, and every edge is active.
\end{observation}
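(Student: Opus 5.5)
We prove the two claims of \Cref{obs:wf_on_nest} jointly, by induction on the arrival index $t$. The invariant we maintain is the following. After the first $t-1$ arrivals, every offline node $i\in N_{t-1}$ has the same load $h_{t-1}$. Moreover, every node in $N_{t-1}$ has load at least that of any node outside $N_{t-1}$; we will see below that this inequality is never actually needed. Write $h_0 = 0$ and $N_0=[n]$, so that $\bell_0=\vec{0}$ satisfies the invariant trivially.

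For the inductive step, suppose node $t$ arrives with $N_t\subseteq N_{t-1}$. By the invariant, every $i\in N_t$ currently has load exactly $h_{t-1}$, since $N_t\subseteq N_{t-1}$. \waterfill maximizes the minimum of $\mbf{x}_t(i)+\bell_{t-1}(i)$ over $i\in N_t$, starting from a constant vector on $N_t$. Its unique solution, by the uniqueness and equalized-loads parts of \Cref{obs:wf_properties}, is the uniform split $\mbf{x}_t(i)=q_t/|N_t|$ for all $i\in N_t$. To justify this, observe that any non-uniform split has some coordinate strictly below the average $h_{t-1}+q_t/|N_t|$, so it achieves a smaller minimum.

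Since $q_t>0$, every $i\in N_t$ receives a strictly positive amount, so every edge $(i,t)$ with $i\in N_t$ is active. All such $i$ end at the common load $h_t = h_{t-1}+q_t/|N_t|>h_{t-1}$. This is exactly the height of node $t$ in the sense of \Cref{def:height}. The invariant now holds for $t$: all nodes of $N_t$ share load $h_t$.

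Strict positivity of the $q_t$ gives $h_1<h_2<\dots<h_m$, so online nodes arrive in increasing order of height. There is no real obstacle here. The only point needing care is that the inductive step uses only the equality of loads on $N_{t-1}$, and this equality passes to the subset $N_t$. Nodes dropped from the chain keep their final loads and never interact with later arrivals.
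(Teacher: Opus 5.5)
Your proof is correct, and it is essentially the argument the paper relies on. The paper gives no standalone proof of this observation, but in the proof of \Cref{obs:h_wf} it uses your reasoning: nestedness keeps every node of $N_t$ at a common load, so water-filling splits $q_t$ uniformly as $q_t/|N_t|$, which makes every edge active and the heights strictly increasing. Your induction, with the unneeded second half of the invariant dropped, is a clean formalization of that.
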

\noindent More significantly, the above two properties of an arrival sequence $E$ in fact also give a sufficient condition for the corresponding induced nested sequences to be majorizing: in~\Cref{lem:nestification}, we establish that \emph{the 
relation $\wf(E) \preceq \wf(\wh{E})$ holds when every edge of $E$ is active, and online nodes arrive in non-decreasing order of their height under \waterfill.}

Based on this, our transformation combines a pre-processing routine that transforms any given sequence $E$ to engineer these two properties, followed by taking the induced nested sequence. We specify the full transformation in~\Cref{alg:nest}, and in~\Cref{fig:oper}, we illustrate our procedure on the example from~\Cref{fig:opt_unified}; the latter may be useful to refer to when reading the following.

At a high level, our transformation proceeds as follows:
first, inactive edges are removed, with no effect on the \waterfill
allocation (\Cref{prop:inactive_comp}). 
Second, online nodes are permuted
so that they arrive in non-decreasing order of their height under \waterfill,
with ties broken so that the permutation \emph{reverses} the arrival order of nodes whose \waterfill heights are equal. 
Such a permutation of online nodes' arrival order also turns out to preserve the \waterfill allocation, although establishing this is more involved (see \Cref{prop:neighborhood_pres}).
The key observation is that the \waterfill allocation is preserved only under online node permutations that ensure that each pair of online nodes that have a common (offline) neighbor remains in the same order. At this point, we can use~\Cref{lem:nestification} to get a new sequence $\widetilde{E}$ such that $\wf(E)\preceq\wf{\widetilde(E)}$.

To complete the analysis of \Cref{alg:nest}, we also need to show that $\opt(E) \succeq \opt(\wt{E})$. 
This is not immediate, as in step 1, we remove inactive edges, which could affect the feasibility of the optimal hindsight allocation. 
However, we show that any inactive edge
$(i,t)$ removed in step~1 is replaced with 
the equivalent edge $(i,\sigma(t))$ in step~3 (\Cref{prop:optpreserve}). 
The proof of this fact makes essential use of the arrival-order-reversing tie-breaking rule in step~2. 

In~\Cref{ssec:alg1}, we specify our transformation in~\Cref{alg:nest}, and formalize the above arguments (via~\Cref{prop:inactive_comp,prop:reorder,prop:optpreserve}) and use these to prove~\Cref{lem:nest_wf}. Due to space constraints, we defer proofs of the intermediate propositions to~\Cref{appsec:lemmas}. Given the centrality of~\Cref{lem:nestification} in our approach, we provide a complete proof for it in~\cref{ssec:nestify}.

\begin{figure}[p]
    \centering
    \includegraphics[width=\linewidth]{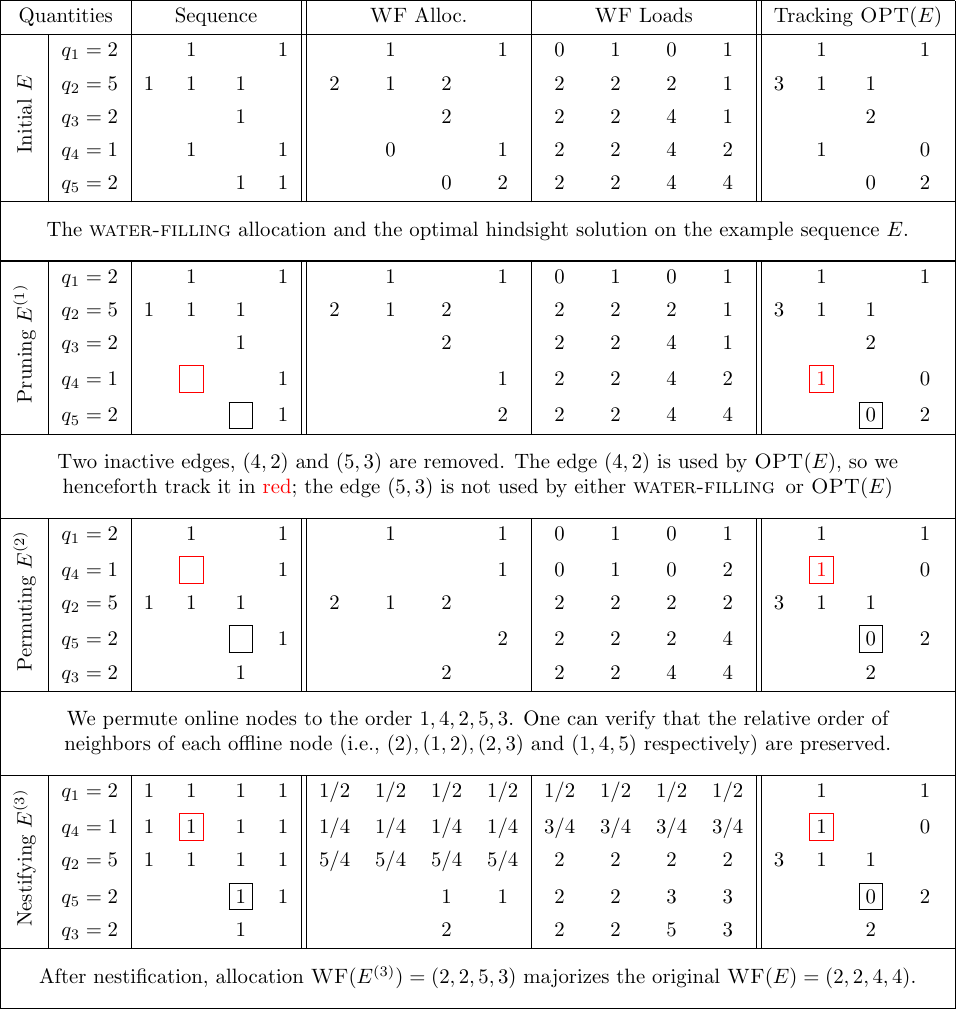}
    \caption{\small\em
        We consider the same allocation sequence from~\Cref{fig:opt_unified}, and demonstrate the transformations (pruning, permuting and nestification) carried out by~\Cref{alg:nest}
        After each step, we update \waterfill allocations; one can verify that the former increases in the majorization preorder (i.e., becomes less balanced). We also track deleted edges in the pruning process via boxes, so one can verify that these are all added back in the end (the red boxes track edges which are in the support of $\opt(E)$). Finally we track the original optimal allocation $\opt(E)$, which becomes infeasible after pruning, but feasible in the final sequence.
    }\label{fig:oper}
\end{figure}

\subsection{Majorizing \waterfill Allocations via Nested Sequences}
\label{ssec:alg1}

We now formally define our procedure for transforming a given request sequence $E$ into a nested sequence $\widetilde{E}$.
\begin{algorithm}[ht]
    \SetAlgoNoLine
    \KwIn{
        A request sequence $E\in\calE_{n,m,q}$.
    }
    \KwOut{
        A nested sequence $\widetilde{E}\in\Enest_{n,m,q}$ with $\wf(E)\preceq\wf(\widetilde{E})$ and $\wf(E)\succeq \wf(\widetilde{E})$.
    }
    \tcc{Step 1: Pruning Inactive Edges}
    Identify inactive edges under the \waterfill allocation $(\mbf{x}_t)_{t\in[m]}$: $I_t \gets \{i\in N_t\mid \mbf{x}_t(i) = 0\}$\;
    Construct sequence $E^{(1)} \gets ((N_t^{(1)}, q_t^{(1)}))_{t\in[m]}\in \calE_{n,m,q}$ with $N_t^{(1)} \gets N_t\setminus I_t$ and $q_t^{(1)} \gets q_t$\;
    
    \BlankLine
    
    \tcc{Step 2: Permuting Online Nodes}
    Identify permutation $\sigma$ satisfying $h_t < h_s\Rightarrow\sigma(t) < \sigma(s)$, and $h_t = h_s$ with $t<s \Rightarrow \sigma(s) < \sigma(t)$\;
    Construct sequence $E^{(2)} \gets ((N_t^{(2)}, q_t^{(2)}))_{t\in[m]}\in \calE_{n,m,q}$ with $N_t^{(2)} \gets N_{\sigma^{-1}(t)}^{(1)}$ and $q_t^{(2)}\gets q_{\sigma^{-1}(t)}^{(1)}$\;
    
    \BlankLine
    
    \tcc{Step 3: Nestification}
    Construct $E^{(3)}\in\Enest_{n,m,q}$ as the nested sequence induced by $E^{(2)}$ (see \Cref{def:induce_nested})\;
    \Return $\widetilde{E} \gets E^{(3)}$\;
    \caption{Majorization-Dispersing Nested Sequence Transformation}
    \label{alg:nest}
\end{algorithm}
The claim in~\Cref{lem:nest_wf} that $\wf(E)\preceq\wf(\widetilde{E})$ depends on establishing two invariance properties under our transformation, and then using our nestification result (\Cref{lem:nestification}). The first states that the \waterfill allocation is preserved by the pruning step (while simultaneously ensuring the resulting sequence has no inactive edges).
\begin{restatable}[Invariance under Inactive Edge Removal]{proposition}{inactivecomp}
\label{prop:inactive_comp}
Consider any sequence $E\in\calE_{n,m,q}$, and sequence $E^{(1)}$ with inactive edges removed as in Step 1 of~\Cref{alg:nest}.
Then $\wf(E)= \wf(E^{(1)})$.   
\end{restatable}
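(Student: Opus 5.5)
We prove by induction on $t$ that \waterfill produces exactly the same per-step allocations on $E$ and on $E^{(1)}$. Write $(\mbf{x}_t)_{t\in[m]}$ for the \waterfill allocations on $E$, $(\mbf{x}^{(1)}_t)$ for those on $E^{(1)}$, and $\bell_t,\bell^{(1)}_t$ for the corresponding intermediate loads. The claim is that $\mbf{x}^{(1)}_t=\mbf{x}_t$ for all $t$; then $\wf(E^{(1)})=\bell^{(1)}_m=\bell_m=\wf(E)$. The base case $\bell_0=\bell^{(1)}_0=\vec 0$ is trivial.

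For the inductive step, assume $\bell^{(1)}_{t-1}=\bell_{t-1}$. First observe that $N^{(1)}_t=N_t\setminus I_t$ is exactly the support $\widehat N_t$ of $\mbf{x}_t$. It is nonempty because $q_t>0$ and full distribution forces some $\mbf{x}_t(i)>0$. Consequently $\mbf{x}_t\in\Delta(N_t^{(1)},q_t)$, so $\mbf{x}_t$ is feasible for the $t$-th step on $E^{(1)}$. It remains to show that it is the \waterfill choice there. We use the uniqueness property of \Cref{obs:wf_properties}: $\mbf{x}_t+\bell_{t-1}$ is the unique majorization-minimal element of $\{\mbf{x}+\bell_{t-1}\}_{\mbf{x}\in\Delta(N_t,q_t)}$. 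Since $\Delta(N^{(1)}_t,q_t)\subseteq\Delta(N_t,q_t)$, the same vector majorizes-minorizes every element of the smaller feasible set, and it belongs to that set. Hence it is the unique majorization-minimal element of the smaller set as well.

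We then need to connect majorization-minimality to the max-min program of \Cref{def:wf}. Applying the uniqueness property of \Cref{obs:wf_properties} to the input $(N^{(1)}_t,q_t,\bell_{t-1})$ itself, the \waterfill allocation on $E^{(1)}$ at step $t$ is characterized as the one making $\mbf{x}+\bell_{t-1}$ majorization-minimal over $\Delta(N^{(1)}_t,q_t)$. That minimal element is unique, and we just showed it is $\mbf{x}_t+\bell_{t-1}$. Because vectors are compared coordinatewise here rather than up to permutation, uniqueness must hold as an actual vector, not just as an equivalence class. This is immediate since uniqueness is asserted for the vector itself in the observation. Subtracting $\bell_{t-1}$ gives $\mbf{x}^{(1)}_t=\mbf{x}_t$, and therefore $\bell^{(1)}_t=\bell_t$, completing the induction.

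A more direct alternative avoids majorization and works with the convex program. The optimal value on $N_t$ is $h_t=\min_{i\in N_t}(\mbf{x}_t(i)+\bell_{t-1}(i))$. Restricting to $N^{(1)}_t\subseteq N_t$ can only raise the minimum, but every $i\in\widehat N_t$ already sits at level $h_t$ by the equalized-loads property. The only step requiring care is showing that $\mbf{x}_t$ remains the unique optimizer on the smaller neighborhood, since max-min programs can have multiple optima. This is why we route the argument through the uniqueness bullet of \Cref{obs:wf_properties}, which we take as given, rather than through the max-min value alone. We expect no other obstacles.
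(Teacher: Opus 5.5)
Your proof is correct and follows essentially the same route as the paper: induction on $t$ with hypothesis $\bell^{(1)}_{t-1}=\bell_{t-1}$, noting $\mbf{x}_t\in\Delta(N^{(1)}_t,q_t)\subseteq\Delta(N_t,q_t)$, and invoking the uniqueness property of \Cref{obs:wf_properties} to conclude $\mbf{x}^{(1)}_t=\mbf{x}_t$. Your closing worry about multiple max-min optima does not arise here: for $q_t>0$ the single-step program's optimizer is forced to be $\mbf{x}(i)=(h_t-\bell_{t-1}(i))^+$ on $N_t$, so the direct route would also work.
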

This follows from observing that removing the earliest arriving inactive edge in $E$ does not change any allocation of \waterfill on $E$, and then performing induction over inactive edges. For a detailed proof, see~\Cref{appsec:lemmas}. 

Our second invariance result establishes that the \waterfill allocation in $E^{(1)}$ and $E^{(2)}$ are equivalent (up to permutation). Moreover, by construction, the sequence $E^{(2)}$ also has all its online nodes arriving in non-decreasing order of heights under \waterfill.
\begin{restatable}[Invariance under Neighborhood Preserving Permutations]{proposition}
{nbdpreserve}
\label{prop:neighborhood_pres}
Consider any sequence $E^{(1)}\in\calE_{n,m,q}$, and sequence $E^{(2)}$ obtained by reordering online nodes via permutation $\sigma$ as defined in Step 2 of \Cref{alg:nest}.
Then $\wf(E^{(1)})\sim \wf(E^{(2)})$.  \end{restatable}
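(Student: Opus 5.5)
We show that \waterfill run on $E^{(2)}$ reproduces, step by step, the per-node allocations it made on $E^{(1)}$, just in the permuted order. Write $(\mbf{x}_t)_{t\in[m]}$ for the \waterfill allocation on $E^{(1)}$; by \Cref{prop:inactive_comp} it coincides with the allocation on $E$, so every edge of $E^{(1)}$ is active. Let $h_t$ be the heights from \Cref{def:height}. The claim, by induction on $k$, is that after \waterfill processes the first $k$ nodes of $E^{(2)}$, node $\sigma^{-1}(k)$ has received exactly $\mbf{x}_{\sigma^{-1}(k)}$. Summing over all $k$ then gives $\wf(E^{(2)})=\wf(E^{(1)})$ exactly, which is stronger than $\sim$.

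\textbf{Key structural lemma.} Fix an offline node $i$ and let $t<s$ be two online nodes that are both adjacent to $i$ in $E^{(1)}$. Both edges are active, so $h_t$ is the load of $i$ after time $t$, and $h_s$ is the load of $i$ after time $s$. The load of $i$ only grows, and it grows strictly at time $s$ because $\mbf{x}_s(i)>0$. Hence $h_t<h_s$, and $\sigma$ places $t$ before $s$. So $\sigma$ preserves the relative order of every pair of online nodes that share an offline neighbor. The reversed tie-breaking rule never applies to such a pair, since they cannot have equal heights.

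\textbf{Inductive step.} Let $u=\sigma^{-1}(k)$. For each $i\in N^{(1)}_u$, consider the online nodes processed before $u$ in $E^{(2)}$ that are adjacent to $i$. By the lemma, these are exactly the neighbors of $i$ in $E^{(1)}$ arriving before $u$. By the induction hypothesis they contributed the same amounts to $i$ as in $E^{(1)}$. Therefore the pre-arrival load of every $i\in N^{(1)}_u$ is the same in both runs. Loads of other offline nodes may differ, but the program in \Cref{def:wf} only sees $N_u$. Since \waterfill's choice depends only on the restriction of $\bell$ to $N_u$ (Uniqueness in \Cref{obs:wf_properties}), it picks the same $\mbf{x}_u$.

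\textbf{Expected obstacle.} The only delicate point is the strictness $h_t<h_s$, which relies on every edge being active. An inactive edge at $s$ would let $i$'s load stay flat and make a tie possible. Step 1 removes exactly those edges, which is why pruning must precede permutation; the tie-breaking convention matters only later, for the optimum-preservation argument.
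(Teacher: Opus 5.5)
Your proof is correct and follows the paper's route. The strict increase of heights along the active edges at each offline node (the paper's monotone-neighborhoods observation) shows that $\sigma$ preserves the order within every $\Gamma_i(E^{(1)})$. Your induction then uses that the water-filling program only sees the loads on $N_u$, together with uniqueness of its solution; this is exactly the paper's proposition on water-filling-preserving permutations, and like the paper you obtain the exact equality $\wf(E^{(1)}) = \wf(E^{(2)})$.
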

The detailed proof of this invariance, which we provide in~\Cref{appsec:lemmas},  is somewhat more subtle. In particular, it depends on establishing (\Cref{prop:reorder}) that the set of permutations under which the \waterfill allocation is invariant corresponds to those which preserve the relative order of online nodes in each $\Gamma_i$ (i.e., in the neighborhood of each offline node). This characterization may be of independent interest.  

At this point, we have a sequence $E^{(2)}$ that has no inactive edges and also has online nodes sorted by height. 
Our next result, which forms the crux of our analysis, establishes that for any such request sequence, the induced nested sequence indeed has the desired majorization relation for \waterfill allocations:
\begin{restatable}[Nestification]{lemma}{nestification}
\label{lem:nestification}
    Let $E^{(2)}\in\calE_{n,m,q}$ be a sequence which under the \waterfill allocation has \underline{no inactive edges} and online nodes arriving in \underline{non-decreasing height order}.
    Then, the nested sequence $E^{(3)}\in\Enest_{n,m,q}$ generated by $E^{(2)}$ satisfies $\wf(E^{(2)}) \preceq \wf(E^{(3)})$.
\end{restatable}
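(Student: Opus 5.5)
The plan is to compare the two \waterfill outcomes by their suffix sums, grouping offline nodes by $\mu_i$, their last online neighbor in $E^{(2)}$. Write $E^{(2)}=((N_t,q_t))_{t\in[m]}$ with heights $h_1\le\dots\le h_m$. Let $\wh{N}_t=\{i:\mu_i\ge t\}$ be the neighborhoods of $E^{(3)}$, and let $S_t=\{i:\mu_i=t\}$, $L_k=\{i:\mu_i<k\}=[n]\setminus\wh{N}_k$. Offline nodes with $\mu_i=0$ get zero load in both sequences and can be ignored.

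\textbf{Step 1: closed forms for both load vectors.} Since every edge of $E^{(2)}$ is active, \Cref{obs:wf_properties} and \Cref{def:height} give that after each online node $t\in\Gamma_i$, node $i$'s load equals $h_t$. Hence $\wf(E^{(2)})(i)=h_{\mu_i}$. Heights are non-decreasing, so these loads are non-decreasing in $\mu_i$ and constant on each $S_t$. Before any time $k$, every node's load is at most $h_{k-1}$.

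For the nested $E^{(3)}$, \Cref{obs:wf_on_nest} says every edge is active. By induction, all nodes of $\wh{N}_t$ carry the common load $h'_{t-1}$ just before $t$, so
\[
h'_t=\sum_{s\le t}\frac{q_s}{|\wh{N}_s|},\qquad \wf(E^{(3)})(i)=h'_{\mu_i}.
\]
These loads are also non-decreasing in $\mu_i$ and constant on each $S_t$.

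\textbf{Step 2: reduce majorization to group boundaries.} Sort offline nodes by decreasing $\mu_i$. Both vectors are sorted decreasingly in this order, so their top-$j$ sums are the sums over the first $j$ nodes. Between consecutive group boundaries both vectors are constant on the group, so both top-$j$ sums are linear in $j$ there. It therefore suffices to check the inequalities at boundaries, namely $\wf(E^{(3)})(\wh{N}_k)\ge\wf(E^{(2)})(\wh{N}_k)$ for every $k$. Since both totals equal $q$, this is equivalent to
\[
\wf(E^{(3)})(L_k)\le \wf(E^{(2)})(L_k)\quad\text{for every }k.
\]

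\textbf{Step 3: compute both sides of the boundary inequality.} Nodes in $L_k$ only receive mass from arrivals $s<k$. Exchanging sums, the nested side is
\[
\wf(E^{(3)})(L_k)=\sum_{s<k}q_s\Bigl(1-\frac{|\wh{N}_k|}{|\wh{N}_s|}\Bigr).
\]
Let $M_k$ be the mass placed on $\wh{N}_k$ by arrivals $s<k$ in $E^{(2)}$. Then $\wf(E^{(2)})(L_k)=\sum_{s<k}q_s-M_k$. The claim therefore reduces to $M_k/|\wh{N}_k|\le h'_{k-1}$, i.e.\ the average pre-$k$ load of $\wh{N}_k$ in $E^{(2)}$ is at most the nested height.

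\textbf{Step 4: the averaging induction (main obstacle).} I expect the only delicate point to be this inequality, and I would prove it by induction on $k$. For the step, note $N_k\subseteq\wh{N}_k$, so after arrival $k$ the average load over $\wh{N}_k$ is exactly $(M_k+q_k)/|\wh{N}_k|$. The nodes dropping out are $S_k=\wh{N}_k\setminus\wh{N}_{k+1}$, and each has final load $h_k$. This is at least every load in $\wh{N}_k$ at that moment, because all loads after step $k$ are at most $h_k$; this uses the non-decreasing height order. Removing maximum-load elements cannot raise the average, so
\[
\frac{M_{k+1}}{|\wh{N}_{k+1}|}\le\frac{M_k+q_k}{|\wh{N}_k|}\le h'_{k-1}+\frac{q_k}{|\wh{N}_k|}=h'_k.
\]
Here the loads of $\wh{N}_{k+1}$ after time $k$ are exactly the pre-$(k+1)$ loads. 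Combining this with Steps 2--3 yields $\wf(E^{(2)})\preceq\wf(E^{(3)})$.
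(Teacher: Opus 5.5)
Your proposal is correct and follows essentially the same route as the paper. Both arguments reduce majorization to group boundaries, pass to the complementary surviving set $\wh{N}_k$, and induct using two facts: the nodes dropping out in $E^{(2)}$ carry maximum load and hence at least their proportional share, while loads in the nested sequence are equalized. The differences are cosmetic: you use top-$j$ rather than bottom-$j$ sums, write the nested loads as an explicit closed form $h'_t$, and induct over every online index $k$ rather than only over boundary indices.
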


Returning to establishing~\Cref{lem:nest_wf}, we also need to show that the nested sequence $\widetilde{E}$ returned by~\Cref{alg:nest} also preserves the optimal allocation under $E$ (up to permutation). 
\begin{restatable}[\Cref{alg:nest} Preserves $\opt(E)$]{proposition}
{optpreserve}
\label{prop:optpreserve}
Consider any sequence $E\in\calE_{n,m,q}$ with optimal hindsight allocation $(\mbf{x}^*_t)_{t\in[m]}$, and let $\widetilde{E}$ and $\sigma(\cdot)$ respectively be the output of~\Cref{alg:nest}, and the permutation used in Step 2. Then the allocation $(\widetilde{\mbf{x}})_{t\in[m]}$ with $\widetilde{\mbf{x}}_t \define \mbf{x}^*_{\sigma^{-1}(t)}$ is feasible on sequence $\widetilde{E}$
\end{restatable}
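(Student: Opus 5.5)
We need to show that each edge $(i,t)$ in the support of $\mbf{x}^*_t$ (indeed every edge of $E$) becomes an edge $(i,\sigma(t))$ of $\wt{E}$, i.e.\ $\sigma(t)\le \mu_i$, where $\mu_i$ is the latest arriving neighbor of $i$ in $E^{(2)}$. Quantities match because $q^{(3)}_{\sigma(t)}=q_t$. Feasibility of $\wt{\mbf{x}}$ then follows immediately, since its support is contained in the edges of $\wt{E}$ and it fully distributes each quantity. So the plan is to prove the stronger combinatorial claim: for every offline node $i$ and every $t\in\Gamma_i(E)$ there is an active edge $(i,s)$ of $E$ with $\sigma(s)\ge\sigma(t)$.

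\textbf{Active edges.} If $(i,t)$ is active, then $(i,t)$ survives Step 1 and appears in $E^{(2)}$ at position $\sigma(t)$. Hence $\mu_i\ge\sigma(t)$ and we are done.

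\textbf{Inactive edges.} Suppose $i\in N_t$ but $\mbf{x}_t(i)=0$. The \waterfill characterization of the convex program (\Cref{obs:wf_properties}) gives $\bell_{t-1}(i)\ge h_t$, since otherwise some mass would go to $i$. In particular $\bell_{t-1}(i)>0$ (or, if $h_t=0$, we need separate care; see below). So $i$ received positive allocation at some earlier arrival $s<t$. Take $s$ to be the last active neighbor of $i$ before $t$. Then $h_s=\bell_s(i)=\bell_{t-1}(i)\ge h_t$, because $i$ receives nothing between $s$ and $t$ along active edges and $h_s$ equals $i$'s load right after $s$.

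There are two cases. If $h_s>h_t$, Step 2 gives $\sigma(s)>\sigma(t)$. If $h_s=h_t$ with $s<t$, the reversing tie-break gives $\sigma(s)>\sigma(t)$. Either way, $(i,s)$ is an active edge with $\sigma(s)\ge\sigma(t)$, as required. This is exactly where the tie-breaking rule is essential, and I expect it to be the main subtle point.

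\textbf{Degenerate case $\bell_{t-1}(i)=0$.} If $i$ has no earlier active edge, then $\bell_{t-1}(i)=0\ge h_t$ forces $h_t=0$. But $q_t>0$ means the supported nodes of $t$ end with load $h_t>0$, so this case cannot occur. One detail remains: \waterfill loads are nondecreasing over time, so the last active arrival $s$ is well defined and satisfies $\bell_s(i)=\bell_{t-1}(i)$.

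Combining the two cases, every edge of $E$, and hence the support of $\mbf{x}^*_t$, maps into $\wt{N}_{\sigma(t)}$. This proves feasibility.
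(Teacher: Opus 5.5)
Your proof is correct and follows the paper's argument step for step: active edges survive pruning and sit at position $\sigma(t)\le\mu_i$, while for an inactive edge you take the last earlier active neighbor $s$ of $i$, get $h_s=\bell_{t-1}(i)\ge h_t$, and use the order-reversing tie-break to conclude $\sigma(t)<\sigma(s)\le\mu_i$ (the paper packages this as \Cref{obs:hind_track}). Your check that $h_t>0$ guarantees such an $s$ exists covers a point the paper's proof of that observation leaves implicit.
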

\Cref{fig:oper} demonstrates this result visually, by marking and tracking the edges that are removed in pruning, and then showing that they are added back after nestification.
The proof of this result, provided in~\Cref{appsec:lemmas}, makes essential use of the arrival-order-reversing tie-breaking rule in step~2. 
Finally, we can combine these four intermediate results to prove~\Cref{lem:nest_wf} as follows:
\begin{proof}[Proof of~\Cref{lem:nest_wf}]
    Let $E$ be the input to \Cref{alg:nest} which produces variables $E^{(1)}$, $E^{(2)}$, $E^{(3)}$, $\sigma$, and output $\widetilde{E}$.
    \Cref{prop:inactive_comp} implies that $\wf(E) = \wf(E^{(1)})$, and ~\Cref{prop:neighborhood_pres} guarantees that $\wf(E) \sim \wf(E^{(1)})$.
    Further, by \Cref{prop:reorder}, we know that $E^{(2)}$ preserves \waterfill allocations (up to online node reordering) on every edge, and has no inactive edges. 
    Finally, by construction of $\sigma$, we have that $E^{(2)}$ presents online nodes in order of increasing height.
    Thus, applying \Cref{lem:nestification} yields $\wf(E^{(2)}) \preceq \wf(E^{(3)})$.
    This proves the first claim in the lemma:
    \begin{align*}
        \wf(E) = \wf(E^{(1)}) = \wf(E^{(2)}) \preceq \wf(E^{(3)}) = \wf(\widetilde{E})
    \end{align*}
    
    On the other hand, to compare the optimal solutions of $E$ and $\widetilde{E}$, note that by~\Cref{prop:optpreserve}, we have that $\opt(E)$ is a feasible 
    allocation (up to permutation) in $\widetilde{E}$. This proves the second claim, as we have $\opt(\widetilde{E}) \preceq \opt(E)$.
\end{proof}

\subsection{Majorization via Nestification}
\label{ssec:nestify}

Given its centrality in our approach, we conclude this section by providing a proof for~\Cref{lem:nestification}, which establishes sufficient conditions for when induced nested sequences are majorizing. 

\begin{proof}[Proof of~\Cref{lem:nestification}]
    For convenience, we use the notation $E^{(2)} = ((N_t^{(2)}, q_t^{(2)}))_{t\in[m]}$ and $E^{(3)} = ((N_t^{(3)}, q_t^{(3)}))$.
    Also, let $(\bell_t^{(2)})_{t\in [m]}$ and $(\bell_t^{(3)})_{t\in[m]}$ be the intermediate allocations produced by \waterfill on $E^{(2)}$ and $E^{(3)}$, respectively.
    For notational convenience, we define $\bell_0^{(2)} \define \bell_0^{(3)} \define \vec{0}$, and denote the \emph{latest arriving} online node in the neighborhood of offline node $i$ on sequence $E^{(2)}$ as $\mu_i \define \max(\Gamma_i)$ (with $\mu_0 \define 0$).
    As an example, for the request sequence $E^{(2)}$ in~\Cref{fig:oper}, we have $\mu=(3,3,5,4)$.
    Note that $\mu_i$ is also the latest arriving online neighbor of $i\in[n]$ on sequence $E^{(3)}$ by construction of the induced nested sequence.

    Without loss of generality, we first relabel the offline nodes so that their latest arriving neighbors are in increasing order: $\mu_1 \le \dots\le\mu_n$.
    This labeling assumption implies that $\bell_m^{(2)}(1) \le\dots\le \bell_m^{(2)}(n)$ (since by assumption, online nodes in $E^{(2)}$ arrive in order of non-decreasing heights) and $\bell_{m}^{(3)}(1) \le\dots\le \bell_m^{(3)}(n)$ (by~\Cref{obs:wf_on_nest}, as $E^{(3)}$ is a nested request sequence).
    We recall also that for any vector $x\in\R^n$ and $i\in n$, we use the shorthand notation $x([i])=\sum_{t=1}^ix(t)$.
    
    Now to prove majorization, we use the increasing-cumulative-sum characterization (\Cref{def:maj}).
    Specifically, since $\wf(E^{(2)})([n]) = \wf(E^{(3)})([n])$, showing majorization is equivalent to:
    \begin{align*}
        \wf(E^{(3)})^{\uparrow}([i]) &\le \wf(E^{(2)})^{\uparrow}([i])
        ,\quad\,\forall i\in[n] && (\text{by~\Cref{def:maj}})\\
        \Leftrightarrow \bell_m^{(3)}([i]) &\le \bell_m^{(2)}([i])
        ,\qquad\qquad\forall i\in[n] 
        &&(\text{since load vectors are sorted})\\
        \Leftrightarrow \bell_{\mu_i}^{(3)}([i]) &\le \bell_{\mu_i}^{(2)}([i])
        ,\qquad\qquad\forall i\in[n]
        && (\text{since } \mu_i\text{ is the last online neighbor of }i)
    \end{align*}

    Define $\mathcal{I}\define \{i\in[n-1]:\mu_i < \mu_{i+1}\}\cup\{0\}$ to denote the set of offline nodes whose latest arriving online neighbor differs from that of the subsequent offline node (again to help visualize this, in~\Cref{fig:oper} we can relabel the offline nodes in the order $(1,2,4,3)$ so that the last arriving node sequence is $\mu=(3,3,4,5)$; now the set $\mathcal{I}$ comprises of nodes $\{2,4\}$). We next argue that it is sufficient to establish $\bell_{\mu_i}^{(3)}([i]) \leq \bell_{\mu_i}^{(2)}([i])\,\forall i\in \mathcal{I}$. To see this, consider extending the functions $i \mapsto \bell^{(2)}_{\mu_i}([i])$ and $i \mapsto \bell^{(3)}_{\mu_i}([i])$ to  piecewise linear functions $h^{(2)}, h^{(3)}$  (respectively) on $[0,n]$. Since the functions are equal at $0$ and $n$, it is therefore sufficient to establish $\bell_{\mu_i}^{(3)}([i]) \leq \bell_{\mu_i}^{(2)}([i])$ at any $i\in [n-1]$ where the slope of the piecewise linear extensions change. However for $i\in[n-1]\setminus\mathcal{I}$, we have:  
    \begin{align}
        \bell_{\mu_{i+1}}^{(2)}([i+1]) - \bell_{\mu_{i}}^{(2)}([i])
        &= \bell_{\mu_i}^{(2)}([i+1]) - \bell_{\mu_i}^{(2)}([i])
        &&(\text{since }\mu_i = \mu_{i+1}) \nonumber \\
        &= \bell_{\mu_i}^{(2)}(i+1) 
        = \bell_{\mu_i}^{(2)}(i)
        \label{eq:worst_case_break1} \\
        &= \bell_{\mu_i}^{(2)}([i]) - \bell_{\mu_{i}}^{(2)}([i-1]) \nonumber \\
        &= \bell_{\mu_i}^{(2)}([i]) - \bell_{\mu_{i-1}}^{(2)}([i-1])
        \label{eq:worst_case_break2}
    \end{align}
    Line \ref{eq:worst_case_break1} follows from the equalized loads property of \waterfill (\Cref{obs:wf_properties}) since $i$ and $i+1$ are both neighbors of $\mu_i = \mu_{i+1}$ with active edges.
    Line \ref{eq:worst_case_break2} holds because offline nodes in $[i-1]$ have no neighbors after $\mu_{i-1}$.
    Note that the same analysis applies to the piecewise extension of $\bell^{(3)}(\cdot)$.
    
    We have shown that the slopes of the piecewise linear functions $h^{(2)},h^{(3)}$
    only change at points $i \in \mathcal{I}=\{i\in[n-1]:\mu_i < \mu_{i+1}\}\cup\{0\}$, 
    so we now turn to establishing $h^{(3)}(i) \le h^{(2)}(i)$ for $i \in \mathcal{I}$.
    Next, note that for $i\in\mathcal{I}$, we have $[i] = [n]\setminus N_{\mu_{i+1}}^{(3)}$; this follows from the fact that $E^{(3)}$ is a nested sequence, and that $\mu_{i+1}>\mu_i$. Consequently, for all $i\in\mathcal{I}$, we have:
    \begin{align*}
        \bell_{\mu_i}^{(3)}([i]) \le \bell_{\mu_i}^{(2)}([i])        &\Leftrightarrow 
        \bell_{\mu_i}^{(3)}([n]\setminus N_{\mu_{i+1}}^{(3)}) \le \bell_{\mu_i}^{(2)}([n]\setminus N_{\mu_{i+1}}^{(3)})\\
        &\Leftrightarrow \bell_{\mu_i}^{(3)}(N_{\mu_{i+1}}^{(3)}) \ge \bell_{\mu_i}^{(2)}(N_{\mu_{i+1}}^{(3)})
        \qquad\quad(\text{since }
        \bell_{\mu_i}^{(3)}([n]) = \bell_{\mu_i}^{(2)}([n]))
    \end{align*}
    We now show using induction that this last inequality holds for all $i\in \calI$.
    The base case $i=0$ is true by definition, as $\mu_0 = 0$ and $\bell_{0}^{( 2)}=\bell_{0}^{(3)}=\vec{0}$. 
    To simplify notation, we omit superscript $(3)$ from all neighborhoods for the rest of the proof: $N_t \gets N_t^{(3)}$.

        Now, for any $j\in\calI$, assume all inequalities with $i\le j-1$ for $i\in\calI$ hold.
        Take $k\in\calI$ to be the largest index satisfying $k\le j-1$.
        By the definition of $\calI$, we know that $\mu_j < \mu_{j+1}$.
        Further, the neighborhoods of online nodes $\mu_{k} + 1,\dots,\mu_{j}$ in both $E^{(2)}$ and $E^{(3)}$ are all subsets of $N_{\mu_{j}}$, and the total additional load introduced by these online nodes $\sum_{t=\mu_{k}+1}^{\mu_{j}} q_t$.
        The inductive assumption, together with the equation $N_{\mu_{k+1}} = N_{\mu_j}$, thus implies:
        \begin{align}\label{eq:nested_ineq1}
            \bell_{\mu_{j}}^{(2)}(N_{\mu_j}) = \sum_{t=\mu_{k}+1}^{\mu_{j}} q_t + \bell_{\mu_{k}}^{(2)}(N_{\mu_{k+1}}) \le \sum_{t=\mu_{k}+1}^{\mu_{j}} q_t + \bell_{\mu_{k}}^{(3)}(N_{\mu_{k+1}}) = \bell_{\mu_{j}}^{(3)}(N_{\mu_j})
        \end{align}
    The nested structure of $E^{(3)}$ guarantees that the \waterfill load of each offline node in $N_{\mu_j}$ is equal after the arrival of online node $\mu_j$ (by \Cref{obs:wf_properties,obs:wf_on_nest}).
    Formally, for all $i\in N_{\mu_j}$ we have $\bell_{\mu_j}^{(3)}(i) = \bell_{\mu_j}^{(3)}(N_{\mu_j}) / |N_{\mu_j}|$.
    Taking the sum of loads on $i\in N_{\mu_j}$ we find:
    \begin{align}\label{eq:nested_ineq2}
        \bell_{\mu_j}^{(3)}(N_{\mu_{j+1}}) = \left(\frac{|N_{\mu_{j+1}}|}{|N_{\mu_j}|}\right) \bell_{\mu_j}^{(3)}(N_{\mu_j})
    \end{align}
    Consider the load of offline nodes in $N_{\mu_j}$ after the arrival of online node $\mu_j$ for sequence $E^{(2)}$.
    Notice that offline nodes $i\in N_{\mu_j}\setminus N_{\mu_j + 1}$ have an active edge to online node $\mu_j$.
    Moreover, online nodes in $E^{(2)}$ arrive in order of increasing height.
    Thus, all $i\in N_{\mu_j}\setminus N_{\mu_{j+1}}$ satisfies $\bell_{\mu_j}^{(2)}(i) \in \arg\max_{k\in N_{\mu_j}} \bell_{\mu_j}^{(2)}(k)$.
    The total load of offline nodes in $N_{\mu_j}\setminus N_{\mu_{j+1}}$ after the arrival of online node $\mu_j$ on sequence $E^{(2)}$ is:
    \begin{align}\label{eq:nested_ineq3}
        \bell_{\mu_j}^{(2)}(N_{\mu_j}\setminus N_{\mu_{j+1}}) \ge \left(\frac{|N_{\mu_j}\setminus N_{\mu_{j+1}}|}{|N_{\mu_j}|}\right)\bell_{\mu_j}^{(2)}(N_{\mu_j})
    \end{align}
    Combining \Cref{eq:nested_ineq1,eq:nested_ineq2,eq:nested_ineq3} finishes the induction as follows:
    \begin{align*}
        \bell_{\mu_j}^{(2)}(N_{\mu_{j+1}})
        = \bell_{\mu_j}^{(2)}(N_{\mu_j}) - \bell_{\mu_j}^{(2)}(N_{\mu_j}\setminus N_{\mu_{j+1}}) 
        &\le \bell_{\mu_j}^{(2)}(N_{\mu_j}) - \left(\frac{|N_{\mu_j}\setminus N_{\mu_{j+1}}|}{|N_{\mu_j}|}\right)\bell_{\mu_j}^{(2)}(N_{\mu_j})
        &&\text{(from Eq.~\eqref{eq:nested_ineq3})} \\
        &= \left(\frac{|N_{\mu_{j+1}}|}{|N_{\mu_j}|}\right)\bell_{\mu_j}^{(2)}(N_{\mu_j})  \\
        &\le \left(\frac{|N_{\mu_{j+1}}|}{|N_{\mu_j}|}\right)\bell_{\mu_j}^{(3)}(N_{\mu_j})
        &&\text{(from Eq. \eqref{eq:nested_ineq1})} \\
        &= \bell_{\mu_j}^{(3)}(N_{\mu_{j+1}})
        &&\text{(from Eq.~\eqref{eq:nested_ineq2})}
    \end{align*}
    Thus, we have $\wf(E^{(2)})\preceq \wf(E^{(3)})$.
\end{proof}

\section{Arbitrary Allocation Policies and Nested Sequences}\label{sec:wf_vs_policy}

In~\Cref{sec:nest}, we establish that for \waterfill, the worst-case sequences (in terms of majorization dispersion) are nested. Consequently, to compare to other policies, we can consider only such nested sequences. 
For any given policy $\calA$ and nested sequence $\wt{E}$, we now describe how to transform it into sequence $E'$ by \emph{relabeling the offline nodes} (and so preserving $\opt(\wt{E})$ and $\wf(\wt{E})$), such that $\calA$ now majorizes \waterfill.
At a high level: iteratively at each time step, we retain only those offline nodes that accumulate larger expected loads under $\calA$, and remove nodes with smaller expected loads.
This ensures nodes that are ``harder'' for $\calA$ to serve remain active longer.

\begin{algorithm}[ht]
    \SetAlgoNoLine
    \KwIn{A nested sequence $\wt{E}\in\Enest_{n,m,q}$ and a (possibly randomized) allocation policy $\calA$.}
    \KwOut{A nested sequence $E'\in\Enest_{n,m,q}$ such that $\wf(\wt{E})\preceq \E[\calA(E')]$ and $\opt(\wt{E})\sim\opt(E')$.}

    $\forall\,i\in[n]$, set $\wt{\mu}_i \gets \max(\Gamma_i(\wt{E})\cup\{0\})$ (last online neighbor); $\forall\,t\in[m]$ define
    $\phi_t \gets |\{i : \wt{\mu}_i=t\}|$\;
    Initialize $S_1 \gets [n]$\;

    \For{$t=1,\dots,m$}{
        Set $N_t' \gets S_t$ and $q_t' \gets q_t$\;
        Compute allocations:\qquad $\mbf{x}_t^{\calA} \gets \calA\big((N_1',q_1',\mbf{x}_1^{\calA}),\dots,(N_{t-1}',q_{t-1}',\mbf{x}_{t-1}^{\calA}),(N_t',q_t')\big)$\\
        \,\,\,and (expected) load:\qquad $\E[\bell_t^{\calA}] \gets \E[\sum_{s=1}^t \mbf{x}_s^{\calA}]$\;
        Remove from $S_t$ the $\phi_t$ nodes with smallest values in vector $\E[\bell_t^{\calA}]$\;
        Set $S_{t+1}\gets S_t$\;
    }
    \Return sequence $E' \gets ((N_t',q_t'))_{t\in[m]}$\;
    \caption{\waterfill vs. Allocation Policies}
    \label{alg:wf_vs_policy}
\end{algorithm}
\wfvspolicy*
\begin{proof}
 Since $\wt{E},E'$ only differ in offline node labels $\Rightarrow \wf(\wt{E})\sim\wf(E')$ and $\opt(\wt{E})\sim\opt(E')$.
 It remains for us to show $\wf(E')\preceq \E[\calA(E')]$.
 All quantities below are defined with respect to $E'$.

Let $(\bell_t^{\wf})_{t\in[m]}$ denote intermediate \waterfill loads, and for each offline node $i$, let $\mu_i$ be its last online neighbor.
We relabel offline nodes so that
$
\E[\bell_m^{\calA}(1)] \leq \E[\bell_m^{\calA}(2)] \leq \cdots \leq \E[\bell_m^{\calA}(n)] .
$
\Cref{alg:wf_vs_policy} now removes nodes in non-decreasing order of expected loads, so
$\mu_1 \leq \mu_2 \leq \cdots \leq \mu_n$.
Since $E'$ is nested, this implies
$\bell_m^{\wf}(1)\leq\bell_m^{\wf}(2)\leq \cdots\leq\bell_m^{\wf}(n)$.

Define $\bell_m^{\wf}(0)=\bell_m^{\calA}(0)=0$.
By the cumulative-sum characterization of majorization (\Cref{def:maj}), it suffices to show $\bell_m^{\wf}([i]) \ge \E[\bell_m^{\calA}([i])] \,
\forall\, i\in[n]\cup\{0\}$.
We do so by induction on $i$. Note the case $i=0$ holds by definition.

Now fix $i\ge1$ and assume the claim holds for $j\leq i-1$.
Since nodes in $[i]$ receive no allocation after online node $\mu_i$, we compare loads at $\mu_i$.
Under \waterfill, the total load is $\sum_{t=1}^{\mu_i} q_t$, of which
$\bell_{\mu_i}^{\wf}([i-1])$ is assigned to nodes $[i-1]$.
Since $E'$ is a nested sequence, \Cref{obs:wf_properties,obs:wf_on_nest} imply \waterfill has equal load across the remaining nodes $k\ge i$, and so:
    \begin{align}\label{eq:wf_vs_policy1}
        \bell_{\mu_i}^{\wf}(i)
        = \left(\frac{1}{n-i+1}\right)\left(\sum_{t=1}^{\mu_i} q_t - \bell_{\mu_i}^{\wf}([i-1])\right)
    \end{align}
For policy $\calA$, the total expected load after online node $\mu_i$ is also $\sum_{t=1}^{\mu_i}q_t$, and the expected load allocated to nodes in $[i-1]$ is $\E[\bell_{\mu_i}^{\calA}([i-1])]$.
Since offline node $i$ has the smallest expected intermediate load of nodes in $[n]\setminus [i-1]$ after the arrival of online node $\mu_i$, we have:
    \begin{align}\label{eq:wf_vs_policy2}
        \E[\bell_{\mu_i}^{\calA}(i)]
        \le \left(\frac{1}{n-i+1}\right)\left(\sum_{t=1}^{\mu_i} q_t - \E[\bell_{\mu_i}^{\calA}([i-1])]\right)
    \end{align}
    Since $(n-i+1)\bell_{\mu_i}^{\wf}([i]) = (n-i+1) \bell_{\mu_i}^{\wf}([i-1])+(n-i+1)\bell_{\mu_i}^{\wf}(i)$, applying~\Cref{eq:wf_vs_policy1} we get
    $(n-i+1)\bell_{\mu_i}^{\wf}([i]) =  (n-i)\bell_{\mu_i}^{\wf}([i-1]) + \left(\sum_{t=1}^{\mu_i} q_t\right)$. Similarly, by applying \Cref{eq:wf_vs_policy2}, we get $(n-i+1)\E[\bell_{\mu_i}^{\calA}([i])] \leq  (n-i)\E[\bell_{\mu_i}^{\calA}([i-1])] + \left(\sum_{t=1}^{\mu_i} q_t\right)$.
    Finally, the inductive hypothesis gives:
    \begin{align*}
    \bell_m^{\wf}([i])
=
\bell_{\mu_i}^{\wf}([i])
\ge
\E[\bell_{\mu_i}^{\calA}([i])]
=
\E[\bell_m^{\calA}([i])].
\end{align*}
Thus the claim holds for all $i$, proving $\wf(E')\preceq \E[\calA(E')]$.
\end{proof}

\section{Adversarial Sequences and Regret Implications}
\label{sec:uppertriangle}

\Cref{thm:wf_min} shows that nested sequences include the adversary's best response to a principal using \waterfill.
However, not all nested sequences are bad; for example, the nested sequence $E\in\calE_{n,m,q}$ with $N_t \gets [n]$
and $q_t \gets \frac{q}{m}$ has \waterfill allocation $\wf(E) = \frac{q}{n}\vec{1}$, which is minimal in the majorization preorder.
In this section, we provide a more precise characterization of the adversary's best strategy.
As we mention before, our aim here is to show that for any given number of offline nodes $n$ and total quantity $q$, it is sufficient for an adversary to only consider \emph{$n\times n$ complete upper-triangular} request sequences (i.e., with $m=n$ online nodes and $N_t=\{t,t+1,\ldots,n\}$) and \emph{non-decreasing quantities} (i.e., $q_1\leq q_2\leq\ldots\leq q_n$ and $\sum_{t=1}^nq_t=q$). To do so, we transform any given nested sequence into a complete upper-triangular sequence via the following procedure.
\begin{algorithm}[ht]
    \SetAlgoNoLine
    \KwIn{
        A nested sequence $\widetilde{E}\in\Enest_{n,m,q}$.
    }
    \KwOut{
        A nested sequence $E'\in\Enest_{n,n,q}$ satisfying \Cref{thm:param}.
    }
    \For{$t=1,\dots,n$}{
        Define $N_t' \gets \{t,t+1,\dots, n\}$ and $q_t' \gets \left(\opt(\widetilde{E})\right)^{\uparrow}(t)$\;
    }
    \Return sequence $E'\gets ((N_t', q_t'))_{t\in[n]}$\;
    \caption{Constructing Worst-Case Sequences for \waterfill}\label{alg:param_nest}
\end{algorithm}%

\param*
Before proving this, we need some additional definitions and results.
First, the linear map $H$, defined by $H(i,j)=\mathbf{1}\{i\ge j\}/(n-j+1)$, captures the behavior of \waterfill on any nested request sequence $E=((N_t,q_t))_{t\in[m]}\in\Enest_{n,m,q}$ as follows: \waterfill distributes $q_t$ uniformly over $N_t$, equalizing intermediate loads.
Thus, an offline node $i$ whose last neighbor is $\mu_i$ receives
$
\sum_{t=1}^{\mu_i}\frac{q_t}{|N_t|}.
$
This is a weighted harmonic sum, where the term with denominator $n-i+1$ aggregates the mass of online nodes with exactly $n-i+1$  neighbors.
Hence, the \waterfill load vector is linear in these weights. When $E$ is complete upper-triangular, then it is easy to observe that $\wf(E)=Hq$; for general nested sequences we have the following (proof deferred to~\Cref{appsec:uppertriangle}).
\begin{restatable}[\waterfill Transform $H$]{observation}{hwf}\label{obs:h_wf}
    Fix a nested sequence $E=((N_t, q_t))_{t\in[m]} \in \calE_{n,m,q}$ and assume that the latest online neighbor $\mu_i = \max(\Gamma_i(E))$ of offline nodes satisfy $\mu_1\le \dots\le\mu_n$.
    Define the vector $\mbf{z}\in \R_{\ge 0}^n$ with $\mbf{z}(i) = \sum_{t\in \Gamma_i(E) \setminus \Gamma_{i-1}(E)} q_t$, where $\Gamma_0(E) \define \emptyset$.
    The \waterfill allocation satisfies $\wf(E) = H\mbf{z}$.
\end{restatable}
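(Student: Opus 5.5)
The plan is a direct computation of the \waterfill loads on a nested sequence, by induction over the arrivals. First I establish the structural fact behind \Cref{obs:wf_on_nest}. Since $N_1\supseteq\cdots\supseteq N_m$ and the labeling gives $\mu_1\le\dots\le\mu_n$, we have $N_t=\{i:\mu_i\ge t\}$. Hence each $N_t$ is a suffix $\{n-|N_t|+1,\dots,n\}$ of $[n]$. I then claim, by induction on $t$, that after arrival $t$ all nodes in $N_t$ carry equal load. For the step: before arrival $t$, every node of $N_t\subseteq N_{t-1}$ has the common load $\bell_{t-1}(i)$, by the inductive hypothesis applied to $N_{t-1}$. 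The convex program in \Cref{def:wf} is then solved uniquely (\Cref{obs:wf_properties}) by the uniform split $\mbf{x}_t(i)=q_t/|N_t|$ for $i\in N_t$, which keeps the loads on $N_t$ equal. Consequently every offline node $i$ receives
\[
\wf(E)(i)=\sum_{t\le \mu_i}\frac{q_t}{|N_t|}.
\]

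Next I regroup this sum by the sizes of the neighborhoods. Because $N_t$ is a suffix, $|N_t|=n-j+1$ exactly when the smallest element of $N_t$ is $j$. This happens exactly when $t\le\mu_j$ and $t>\mu_{j-1}$, with the convention $\mu_0=0$. I then identify the set $\Gamma_j(E)\setminus\Gamma_{j-1}(E)$ in the definition of $\mbf{z}$. Nestedness gives $\Gamma_j(E)=\{1,\dots,\mu_j\}$. Since $\mu_{j-1}\le\mu_j$, the difference is exactly $\{\mu_{j-1}+1,\dots,\mu_j\}$. This is precisely the set of $t$ with $\min N_t=j$, so $\mbf{z}(j)$ is the total quantity of online nodes having exactly $n-j+1$ neighbors.

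Finally, I substitute this into the sum. The range $t\le\mu_i$ is the disjoint union over $j\le i$ of the blocks $\{\mu_{j-1}+1,\dots,\mu_j\}$. Summing block by block gives
\[
\wf(E)(i)=\sum_{j\le i}\frac{\mbf{z}(j)}{n-j+1}=(H\mbf{z})(i).
\]
The degenerate case is an offline node with no neighbors, i.e.\ $\mu_i=0$. Then all earlier $\mu$'s vanish and $\mbf{z}(j)=0$ for $j\le i$, so both sides are $0$ and the identity still holds.

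The only delicate point is this bookkeeping: showing that $N_t$ equals $\{j,\dots,n\}$ exactly for $t$ in the $j$th block, including empty blocks when $\mu_{j-1}=\mu_j$. I expect no real obstacle beyond careful indexing.
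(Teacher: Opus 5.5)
Your proposal is correct and follows essentially the same route as the paper: first show that on a nested sequence water-filling splits each $q_t$ uniformly over $N_t$, then regroup $\sum_{t\le\mu_i} q_t/|N_t|$ over the disjoint blocks $\Gamma_j(E)\setminus\Gamma_{j-1}(E)=\{\mu_{j-1}+1,\dots,\mu_j\}$, on which $N_t=\{j,\dots,n\}$. Your write-up is more careful than the paper's terse computation, since it makes the equal-load induction and the case $\mu_i=0$ explicit, but the underlying argument is the same.
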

The transformation $H$ has an additional property that it preserves majorization order of a pair of vectors $\mbf{x},\mbf{y}$ that satisfy a partial cumulative sum condition expressed below.
\begin{restatable}[$H$ is a Majorizing Transform]{proposition}{hmajpres}
\label{obs:h_maj}
    Fix $\mbf{x},\mbf{y}\in\R_{\ge 0}^n$ with $\mbf{x}([n]) = \mbf{y}([n]) = q$. 
    If $\mbf{x}([k]) \le \mbf{y}([k])$ for all $k\in[n]$, then $H \mbf{x} \succeq H \mbf{y}$.
\end{restatable}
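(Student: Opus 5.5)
The plan is to use the explicit lower-triangular form of $H$, which makes $H\mbf{x}$ automatically sorted. Then the majorization comparison reduces to a weighted prefix-sum inequality, which I will handle by Abel summation against the hypothesis $\mbf{x}([k])\le\mbf{y}([k])$.

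\emph{Step 1 (sortedness and equal totals).} By definition, $(H\mbf{x})(i)=\sum_{j\le i}\mbf{x}(j)/(n-j+1)$. Passing from $i$ to $i+1$ adds the nonnegative term $\mbf{x}(i+1)/(n-i)$, so $H\mbf{x}$ is non-decreasing in $i$; hence $(H\mbf{x})^{\uparrow}=H\mbf{x}$, and likewise for $\mbf{y}$. For the totals, swapping the order of summation gives
\[
\sum_{i=1}^n (H\mbf{x})(i)=\sum_{j=1}^n \mbf{x}(j)\frac{n-j+1}{n-j+1}=\mbf{x}([n])=q,
\]
and the same computation gives $q$ for $H\mbf{y}$. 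So the two vectors have equal sums, and \Cref{def:maj} applies.

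\emph{Step 2 (reduce to prefix sums).} By the increasing-cumulative-sum form of \Cref{def:maj}, and using the sortedness from Step 1, $H\mbf{x}\succeq H\mbf{y}$ is equivalent to $(H\mbf{x})([k])\le (H\mbf{y})([k])$ for every $k\in[n]$. Again swapping sums,
\[
(H\mbf{x})([k])=\sum_{j=1}^k w^{(k)}_j\,\mbf{x}(j),\qquad w^{(k)}_j\define\frac{k-j+1}{n-j+1}=1-\frac{n-k}{n-j+1}.
\]
Set $\mbf{d}=\mbf{y}-\mbf{x}$ and $D_j=\mbf{d}([j])$. The hypothesis says exactly that $D_j\ge 0$ for all $j$, and the goal becomes $\sum_{j\le k} w^{(k)}_j\mbf{d}(j)\ge 0$.

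\emph{Step 3 (Abel summation).} Since $k\le n$ and $n-j+1$ decreases in $j$, the weights $w^{(k)}_j$ are nonnegative and non-increasing in $j\in[k]$. Summation by parts gives
\[
\sum_{j=1}^k w^{(k)}_j\mbf{d}(j)=\sum_{j=1}^{k-1}\bigl(w^{(k)}_j-w^{(k)}_{j+1}\bigr)D_j+w^{(k)}_k D_k.
\]
Every coefficient on the right is nonnegative and every $D_j\ge 0$, so the sum is nonnegative, which completes the proof.

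I do not expect a serious obstacle. The one point that needs care is Step 2: the $k$ smallest entries of $H\mbf{x}$ and $H\mbf{y}$ must be their first $k$ coordinates, which is exactly what Step 1 guarantees. I also need to confirm the orientation of the inequality: the increasing cumulative sums of the majorizing vector $H\mbf{x}$ must be the smaller ones, and with that orientation the proof goes through.
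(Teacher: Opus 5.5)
Your proposal is correct and follows essentially the same route as the paper: both use the sortedness of $H\mbf{x}$ and $H\mbf{y}$ to reduce majorization to prefix sums with weights $(k-j+1)/(n-j+1)$, then apply summation by parts using the non-increasing weights and the hypothesis $\mbf{x}([i])\le\mbf{y}([i])$. The only differences are cosmetic: you sum by parts on $\mbf{d}=\mbf{y}-\mbf{x}$ rather than on each vector separately, and you explicitly check the equal totals, which the paper leaves implicit.
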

We prove this in~\Cref{appsec:uppertriangle}, but essentially, the map $H$ redistributes mass from index $i$ uniformly over $\{i,\dots,n\}$, so mass on smaller indices is spread more widely than mass on larger ones. 
Hence, vectors with smaller prefix sums yield less equitable images under $H$.
Notably, this holds without requiring either vector to be sorted. Now we can turn to proving~\Cref{thm:param}.

\begin{proof}[Proof of~\Cref{thm:param}]
First, we observe that by construction of $E'$, \Cref{item:construct,item:order} hold immediately.
In particular, $q'$ is the sorted vector $\opt(\wt{E})$.

Denote $\wt{E}=((\wt{N}_t,\wt{q}_t))_{t\in[m]}$ and $E'=((N_t',q_t'))_{t\in[m]}$.
Without loss of generality, we relabel offline nodes so that $\wf(\widetilde{E})(1) \le \dots \le \wf(\widetilde{E})(n)$, which implies $\opt(\widetilde{E})(1)\le\dots\le\opt(\widetilde{E})(n)$ since $\widetilde{E}$ is nested.
Let $(\wt{\mbf{x}}_t)_{t\in[m]}$ and $\wt{\mbf{z}}$ be the \waterfill allocation and associated vector from~\Cref{obs:h_wf} on $\wt{E}$, and define $(\mbf{x}_t')_{t\in[m]}$ and $\mbf{z}'$ analogously for $E'$.
We define an additional allocation $(\mbf{x}_t^*)_{t\in[n]}$ with $\mbf{x}_t^*(i) \define q_t'\cdot \1{i=t}$ inducing cumulative load $\bell^* \define \sum_{t\in[m]}\mbf{x}_t^*\sim \opt(\widetilde{E})$. Note that $\mbf{x}_t^*$ is feasible on $E'$ by definition.

\paragraph{Optimality.}
We show $\opt(E')\sim\opt(\wt{E})$ by proving that $\bell^*$ is majorization-minimal on $E'$.
Since $\Gamma_i(E')=[i]$, the $j$ smallest offline nodes have neighbors only among the first $j$ online nodes, whose total quantity is $\sum_{i\le j} q_i'$.
Thus any feasible allocation satisfies $\bell([j])\le \sum_{i\le j} q_i'$.
The allocation $\mbf{x}^*$ attains equality for all $j$, so $\bell^*$ is majorization-minimal and hence optimal.

\paragraph{\waterfill allocations.}
To show $\wf(\wt{E})\preceq\wf(E')$, we compare $\wt{\mbf{z}}$ and $\mbf{z}'$.
For $E'$, using nesting,
\begin{align*}
\mbf{z}'([j])
&= \sum_{i\le j}\sum_{t\in\Gamma_i(E')\setminus\Gamma_{i-1}(E')} q_t'
= \sum_{t\in\bigcup_{i\le j}\Gamma_i(E')} q_t'
 = \sum_{t\le j} q_t'
 = \opt(\wt{E})([j]).
\end{align*}
Similarly,
$
\wt{\mbf{z}}([j])
= \sum_{t\in\bigcup_{i\le j}\Gamma_i(\wt{E})} \wt{q}_t .
$
This quantity is the total supply available to offline nodes $[j]$ in $\wt{E}$ and must be at least $\opt(\wt{E})([j])$ for feasibility.
Hence $\wt{\mbf{z}}([j])\ge \mbf{z}'([j])$ for all $j$, with equality at $j=n$.
Applying \Cref{obs:h_wf} and \Cref{obs:h_maj},
\[
\wf(\wt{E}) = H\wt{\mbf{z}} \preceq H\mbf{z}' = \wf(E'),
\]
completing the proof.
    
\end{proof}

\newpage 

\bibliographystyle{ACM-Reference-Format}
\bibliography{references}

\appendix

\section{Adaptive Adversaries}\label{app:sec:adapt_adv}
In the main body of our work, we focused on an oblivious adversary that designs the entirety of the request sequence $E\in\calE_{n,m,q}$ without knowledge of the allocations produced by the algorithm designer's policy $\calA$.
Against such adversaries, \Cref{cor:reg_det} and \Cref{thm:no_opt_rand} show there is an unavoidable separation between the classes of objections for which \waterfill is minimax optimal that depends on whether or not the algorithm designer is allowed randomness.
This separation arises because a randomized policy can, with non-zero probability, guess an equitable allocation on any fixed request sequence.

In this section, we turn our attention to a stronger adaptive adversary who can dynamically design the request sequence depending on the play of the algorithm designer's policy.
An adaptive adversary, when the algorithm designer deploys a deterministic policy such as \waterfill, is no stronger than an oblivious one.
However, adaptive adversaries can prevent randomized policies from guessing equitable allocations.
As a result, there is no separation between the deterministic and randomized minimax guarantees of \waterfill against adaptive adversaries.

\subsection{Adaptive Adversary Formalization}
We use a similar game setting to the one described in \Cref{subsec:setting}.
However, instead of being a static request sequence, the adversary's policies are maps from a history of past requests (online nodes) and the algorithm designer's allocations on those nodes to the next online node in the sequence.
Up to time $t$, the adversary generates online nodes $((N_s, q_s))_{s\in[t-1]}$ and the algorithm designer plays feasible allocations $(\mbf{x})_{s\in[t-1]}$ on those nodes.
On round $t$, an adaptive adversary, with knowledge of $((N_s, q_s))_{s\in[t-1]}$ and $(\mbf{x})_{s\in[t-1]}$, designs the characteristics of $t$th online node $(N_t, q_t)$.
We overload notation and use $E$ to indicate both static request sequences and a map designed by an adaptive adversary.
Similarly, the set of adaptive request sequences on which the final trajectory of requests has $n$ offline nodes, $m$ online nodes, and $q$ total quantity overloads $\calE_{n,m,q}$.

Similar to the oblivious setting, a game instance is defined by the number of offline nodes $n\in\N$, the number of online nodes $m\in\N$, the total quantity of all online nodes $q\in\R_{>0}$, and a comparison factor $\alpha \in\R_{> 0}$.
When the algorithm designer uses policy $\calA$ and the adversary plays an adaptive request sequence $E$, a possibly randomized static request sequence $((N_t, q_t))_{t\in[m]}$ and allocations $(\mbf{x}_t)_{t\in[m]}$ are generated.
The allocation (load vector) generated by $\calA$ on this trajectory is $\calA(E)\define \sum_{t} \mbf{x}_t$
We use $\opt(\calA, E)$ to denote the possibly randomized majorization minimal load vector on $((N_t, q_t))_{t\in[m]}$.
Such a load vector exists by \Cref{fact:opt}.
The cost incurred by the algorithm designer trying to maximize equity measuring objective (Schur-concave function) $f$ is:
\begin{align*}
    c_{\alpha,f}^{\max}(\calA, E)
    = \E[\alpha \cdot f(\opt(\calA, E)) - f(\calA(E))]
\end{align*}
Similar to the oblivious setting, $\alpha$-regret of $\calA$ is defined as the maximum cost incurred by the algorithm designer against an adaptive adversary:
\begin{align*}
    \reg_{(n,m,q),\alpha, f}^{\max}(\calA)
    = \sup_{\text{adaptive }E\in \calE_{n,m,q}} c_{\alpha, f}^{\max}(\calA, E)
\end{align*}
The cost and regret $\reg_{(n,m,q),\alpha,g}^{\min}$ for minimization of Schur-convex $g$ are analogously defined.

\subsection{Adaptive Adversary Results}
First, we establish that adaptivity provides the adversary no more power against deterministic allocation policies than oblivious strategies.
In such cases, the deterministic policy $\calA$ and adaptive request sequence $E$ generate deterministic requests $((N_t, q_t))_{t\in[m]}$ and allocations $(\mbf{x}_t)_{t\in[m]}$.
When the adversary moves first with access to deterministic policy $\calA$, they may simulate the policy's behavior for different request sequences, so adaptivity provides no benefit.

\begin{fact}[Adaptivity and Determinism]\label{app:fact:adapt_det}
    Fix an adaptive request sequence $E\in \calE_{n,m,q}$ and allocation policy $\calA$, which generates deterministic request sequences $((N_t, q_t))_{t\in[m]}$ and allocations $(\mbf{x}_t)_{t\in[m]}$.
    On the static and oblivious request sequence $((N_t, q_t))_{t\in[m]}\in\calE_{n,m,q}$, policy $\calA$ produces the same allocations $(\mbf{x}_t)_{t\in[m]}$.
\end{fact}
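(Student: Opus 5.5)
The plan is an induction on the round index $t$, carried out along the single trajectory generated by the adaptive adversary $E$ and the deterministic policy $\calA$. Since $\calA$ is deterministic, there is no randomness anywhere: the adaptive map $E$ together with $\calA$ determines a unique sequence of pairs $(N_t,q_t)$ and allocations $\mbf{x}_t$. Set $E_{\mathrm{stat}}=((N_t,q_t))_{t\in[m]}$ to be this trajectory, viewed as an ordinary static request sequence. Note that $E_{\mathrm{stat}}\in\calE_{n,m,q}$, because the final trajectory has $n$ offline nodes, $m$ online nodes and total quantity $q$ by the definition of the overloaded set.

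For the induction, let $\mbf{y}_t$ denote the allocation $\calA$ makes at round $t$ when it is run on $E_{\mathrm{stat}}$. The claim to prove is $\mbf{y}_s=\mbf{x}_s$ for all $s\le t$.

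\begin{itemize}
\item \textbf{Base case ($t=0$).} The claim is vacuous, since the histories are both empty.
\item \textbf{Inductive step.} Suppose the claim holds up to round $t-1$. A policy is, by definition (\Cref{subsec:setting}), a function of the current online node and the history of arrivals and decisions.
  \begin{itemize}
  \item In the adaptive interaction, $\calA$ receives the input $\big((N_1,q_1,\mbf{x}_1),\dots,(N_{t-1},q_{t-1},\mbf{x}_{t-1}),(N_t,q_t)\big)$.
  \item On $E_{\mathrm{stat}}$, $\calA$ receives $\big((N_1,q_1,\mbf{y}_1),\dots,(N_t,q_t)\big)$.
  \item The arrivals agree by construction of $E_{\mathrm{stat}}$, and the past decisions agree by the induction hypothesis. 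So the inputs are identical.
  \item Determinism of $\calA$ then forces $\mbf{y}_t=\mbf{x}_t$.
  \end{itemize}
\end{itemize}

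The only subtlety, and the point that needs care, is that the adaptive adversary's choice of $(N_t,q_t)$ depends on the past allocations. One might worry that this creates circularity. It does not: we fix the realized trajectory first and only afterwards compare $\calA$'s behaviour on it, so the adversary's map never needs to be re-evaluated.

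It is also worth remarking what follows from the fact. Since the trajectory, the allocations $\calA(E)$ and the hindsight solution $\opt(\calA,E)=\opt(E_{\mathrm{stat}})$ all coincide, the cost of the adaptive $E$ equals the cost of the static $E_{\mathrm{stat}}$. Hence adaptive and oblivious regret coincide for deterministic policies.
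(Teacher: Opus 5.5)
Your proposal is correct and takes essentially the paper's approach. The paper gives no formal proof, only the one-line remark that an adversary facing a deterministic $\calA$ can simulate it, so adaptivity adds nothing. Your round-by-round induction on identical histories (same arrivals by construction, same past decisions by hypothesis, hence the same output of $\calA$) is the natural formalization of that remark.
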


As a result of the above fact, our analysis involving the deterministic \waterfill policy and oblivious adversaries generalizes to the adaptive setting.
The only argument that needs to be extended is \Cref{lem:wf_vs_policy}, which has implications for the adaptive versions of \Cref{thm:wf_min} as well as \Cref{cor:reg_det,cor:reg_rand}.
The adaptive variant of these statements follows:

\begin{restatable}[Adaptive Majorization Minimality]{theorem}{wfminadap}\label{app:thm:wf_min_adapt}
    Fix parameters $n,m\in\N$ and $q\in\R_{>0}$.
    Given any (possibly randomized) allocation policy $\calA$ and any oblivious request sequence $E\in\calE_{n,m,q}$, there is an adaptive request sequence $E'$ for which $\wf(E) \preceq \wf(\calA(E'))$ and $\opt(\wf, E) \succeq \opt(\calA, E')$ with probability $1$.
\end{restatable}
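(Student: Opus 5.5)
Fix $E\in\calE_{n,m,q}$ and a possibly randomized policy $\calA$. The proof has two halves, mirroring \Cref{thm:wf_min}. First, apply \Cref{lem:nest_wf} to obtain the nested sequence $\wtE\in\Enest_{n,m,q}$ from \Cref{alg:nest}. Since \waterfill is deterministic, \Cref{app:fact:adapt_det} lets us regard $E$ and $\wtE$ as static sequences. This gives $\wf(E)\preceq\wf(\wtE)$ and $\opt(E)\succeq\opt(\wtE)$.

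All the remaining work is an adaptive, pathwise version of \Cref{lem:wf_vs_policy}. We build the adaptive sequence $E'$ by running \Cref{alg:wf_vs_policy} online. The adversary keeps $\wt{\mu}_i$ and $\phi_t=|\{i:\wt{\mu}_i=t\}|$ computed from $\wtE$, and starts from $S_1=[n]$. At round $t$ it presents $(S_t,\wt{q}_t)$ and observes the \emph{realized} allocation $\mbf{x}_t^{\calA}$ rather than its expectation. It then deletes from $S_t$ the $\phi_t$ nodes with smallest realized cumulative load $\bell_t^{\calA}$, breaking ties by any fixed rule.

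On every realization of $\calA$'s coins, the resulting static trajectory $E'_\omega$ is a nested sequence. It differs from $\wtE$ only by a relabeling of offline nodes, since the sets $S_t$ have the same sizes $|\wt{N}_t|$ and the same quantities. Hence $\opt(\calA,E')=\opt(E'_\omega)\sim\opt(\wtE)\preceq\opt(E)$ and $\wf(E'_\omega)\sim\wf(\wtE)$ hold surely.

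Next we show $\wf(E'_\omega)\preceq\calA(E'_\omega)$ for each fixed realization. This is the induction from the proof of \Cref{lem:wf_vs_policy} with expectations removed. Relabel offline nodes so that the realized final loads $\bell^{\calA}_m$ are nondecreasing; the removal rule forces $\mu_1\le\dots\le\mu_n$. At time $\mu_i$, node $i$ is a smallest realized load among $[n]\setminus[i-1]$. This gives \eqref{eq:wf_vs_policy2} without expectation, and \eqref{eq:wf_vs_policy1} holds unchanged. Combining them with the inductive hypothesis yields $\bell^{\wf}_m([i])\ge\bell^{\calA}_m([i])$ for all $i$. The pathwise argument is in fact cleaner than the oblivious one, because the adversary sees realized loads and never needs Jensen-type averaging.

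Chaining the two halves gives $\wf(E)\preceq\wf(\wtE)\sim\wf(E'_\omega)\preceq\calA(E'_\omega)$ and $\opt(E)\succeq\opt(E'_\omega)$ with probability $1$, which is the statement. Two points need care. First, the adaptive adversary must be measurable and well defined, including ties; a fixed deterministic tie-break handles both. Second, the relabeling used in the induction depends on $\omega$, which is harmless because majorization is permutation-invariant. The main obstacle is making sure the node ordering $\mu_1\le\dots\le\mu_n$ is consistent with the ordering of final realized loads. I would show this by noting that removed nodes receive no further allocation, so a node removed at time $t$ had load no larger than every node surviving past $t$. Those surviving nodes can only gain load afterwards, so the ordering persists.
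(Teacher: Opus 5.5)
Your proposal is correct and follows the paper's own proof. The paper likewise composes \Cref{lem:nest_wf} with an adaptive version of \Cref{alg:wf_vs_policy} that deletes the $\phi_t$ nodes of smallest \emph{realized} load, and proves the pathwise analogue of \Cref{lem:wf_vs_policy} (its \Cref{app:lem:wf_vs_polict_adapt}) by exactly the induction you describe. Your ordering argument is, if anything, slightly more careful than the paper's ``relabel by load'' step: you label by removal time and break ties by load, so that $\mu_1\le\dots\le\mu_n$ and the final $\calA$-loads are nondecreasing at the same time. One minor point, which the paper shares: if some node of $\wtE$ has no neighbor ($\phi_0>0$), you should start from a set $S_1$ with $|S_1|=|\wt{N}_1|$ rather than $S_1=[n]$. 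Otherwise $E'$ is not a relabeling of $\wtE$.
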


\Cref{app:thm:wf_min_adapt} is a composition of \Cref{lem:nest_wf} and an adversarial variant of \Cref{lem:wf_vs_policy} stated below.

\begin{restatable}[Policy Deviation]{lemma}{wfvspolicyadapt}\label{app:lem:wf_vs_polict_adapt}
    Given any policy $\calA$ and an oblivious nested sequence $\wt{E}\in\Enest_{n,m,q}$ as input, \Cref{app:alg:wf_vs_policy_adapt} outputs an adaptive request sequence $E'\in\calE_{n,m,q}$ that guarantees $\wf(\wt{E}) \preceq \calA(E')$ and $\opt(\wf, \wt{E}) \sim \opt(\calA, E')$ with probability $1$. 
\end{restatable}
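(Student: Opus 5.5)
The plan is to make \Cref{alg:wf_vs_policy} adaptive and then repeat the argument of \Cref{lem:wf_vs_policy} pathwise, with realized loads in place of expected loads. Let $\wt{\mu}_i$ and $\phi_t=|\{i:\wt{\mu}_i=t\}|$ be as in \Cref{alg:wf_vs_policy}. The adaptive adversary of \Cref{app:alg:wf_vs_policy_adapt} keeps an active set $S_t$ with $S_1=[n]$. At round $t$ it issues $(S_t,\wt{q}_t)$ and observes the realized allocation $\mbf{x}_t^{\calA}$. It then removes from $S_t$ the $\phi_t$ nodes with the smallest \emph{realized} cumulative loads $\bell_t^{\calA}$, breaking ties by a fixed rule. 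This is a legitimate adaptive strategy, since round $t+1$ depends only on the history.

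The first step is to handle $\opt$. On every sample path the realized static sequence $((S_t,\wt{q}_t))_t$ is nested. The sizes $|S_t|$ are deterministic and equal $|\wt{N}_t|$, because $\wt{E}$ is nested and $|\wt{N}_t|=n-\sum_{s<t}\phi_s$. So each realized sequence equals $\wt{E}$ up to a relabeling of offline nodes, which could depend on the path. Relabeling permutes both the \waterfill and the $\opt$ load vectors. Hence, with probability $1$, $\opt(\calA,E')\sim\opt(\wt{E})$ and the \waterfill load on the realized sequence is $\sim\wf(\wt{E})$.

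The main step is to show, on each fixed sample path, that $\wf(\text{realized})\preceq\calA(E')$. Fix the path and relabel nodes in removal order, breaking ties within a round by increasing realized load. Then $\mu_1\le\dots\le\mu_n$, and since nodes receive nothing after removal, $\bell_m^{\calA}=\bell_{\mu_i}^{\calA}$ on $[i]$. I will induct on $i$ to prove $\bell_m^{\wf}([i])\ge\bell_m^{\calA}([i])$, exactly as in the proof of \Cref{lem:wf_vs_policy}. \Cref{eq:wf_vs_policy1} carries over unchanged, because the realized sequence is nested and \Cref{obs:wf_on_nest} applies. The analogue of \Cref{eq:wf_vs_policy2} holds deterministically: when node $i$ is removed at round $\mu_i$, it has the smallest realized load among the remaining $n-i+1$ nodes, whose total load is $\sum_{t\le\mu_i}q_t-\bell_{\mu_i}^{\calA}([i-1])$. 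The same algebra, $(n-i+1)\bell([i])\le(n-i)\bell([i-1])+\sum_{t\le\mu_i}q_t$, combined with the inductive hypothesis, closes the step.

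I have not verified one point, which I expect to be the main obstacle. The prefix-sum inequality in this ordering gives majorization only if the ordering sorts $\bell_m^{\calA}$ increasingly, and that is not guaranteed. It suffices anyway: $\bell^{\calA\uparrow}([i])\le\bell^{\calA}([i])$ in any ordering, and the \waterfill loads are sorted in this order. So $\bell^{\calA\uparrow}([i])\le\bell^{\wf\uparrow}([i])$ for all $i$, which gives $\wf(\text{realized})\preceq\calA(E')$ via \Cref{def:maj}.

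Finally, combining the two paragraphs yields $\wf(\wt{E})\preceq\calA(E')$ and $\opt(\wf,\wt{E})\sim\opt(\calA,E')$ almost surely. I also need to confirm that the final trajectory lies in $\calE_{n,m,q}$, which holds because the quantities are copied from $\wt{E}$.
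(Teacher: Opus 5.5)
Your proposal is essentially the paper's proof. You run \Cref{alg:wf_vs_policy} pathwise on realized loads and observe that each realization is a relabeling of $\wt{E}$, so $\wf$ and $\opt$ are preserved up to permutation. You then repeat the induction of \Cref{lem:wf_vs_policy} using \eqref{eq:wf_vs_policy1} and a deterministic version of \eqref{eq:wf_vs_policy2}. Your checks are correct: node $i$ is the realized minimum over $\{i,\dots,n\}$ at round $\mu_i$, and water-filling equalizes $\{i,\dots,n\}$ there.

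The point you flag as the main obstacle is not one. A node is removed only when its realized load is minimal among the survivors, its load is frozen afterwards, and survivors' loads only grow. So in your removal order (ties within a round broken by load), $\bell_m^{\calA}$ is automatically non-decreasing. This is the labeling the paper uses when it sorts by final $\calA$-load and asserts $\mu_1\le\dots\le\mu_n$; your tie-breaking is what makes that assertion airtight. Your fallback via $\bell^{\calA\uparrow}([i])\le\bell^{\calA}([i])$ is also valid, just unnecessary.

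One step fails in a degenerate case: the claim $|S_t|=|\wt{N}_t|$. The paper's proof, and \Cref{alg:wf_vs_policy} and \Cref{app:alg:wf_vs_policy_adapt} as written, have the same oversight. If some offline node has no neighbor in $\wt{E}$, then $|\wt{N}_t|=n-\phi_0-\sum_{1\le s<t}\phi_s$ with $\phi_0=|\{i:\wt{\mu}_i=0\}|>0$. But $S_1=[n]$ and removals happen only at $t\in[m]$, so the realized sequence is not a relabeling of $\wt{E}$. For instance, take $n=2$, $m=1$, $\wt{E}=((\{1\},1))$; the adversary then plays $(\{1,2\},1)$. If $\calA$ is water-filling, $\calA(E')=(\tfrac12,\tfrac12)$, which does not majorize $\wf(\wt{E})=(1,0)$, and $\opt(\calA,E')\not\sim\opt(\wt{E})$.

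The fix is one line: remove $\phi_0$ nodes before round $1$, when all loads are zero, so that $|S_1|=n-\phi_0$. These nodes have load $0$ under both policies, and the rest of your argument goes through unchanged.
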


\Cref{app:thm:wf_min_adapt} implies that $\alpha$-regret for any Schur-monotone objective is minimized by \waterfill, even when considering randomized polices.
This stands in contrast to the separation in the oblivious setting, where competing with randomized algorithms required constraining the class of objectives from Schur-monotone objectives to symmetric and concave/convex ones.

\begin{restatable}[Adaptive Minimax Optimality]{corollary}{wfadapt}\label{app:cor:wf_adapt}
    Fix $n,m,q,\alpha$, and a Schur-concave objective $f:\R_{\ge 0}^n\to\R_{\ge 0}$ (or Schur-convex objective $g:\R_{\ge 0}^n \to \R_{\ge 0}$). The $\alpha$-regret against an adaptive adversary of $\wf$ in comparison to that of any (possibly randomized) allocation policy $\calA$ satisfies:
    \begin{align*}
        \reg_{(n,m,q),\alpha,f}^{\max}(\wf) \le \reg_{(n,m,q),\alpha,f}^{\max}(\calA)
        \quad\quad\text{or}\quad\quad
        \reg_{(n,m,q),\alpha,g}^{\min}(\wf) \le \reg_{(n,m,q),\alpha,g}^{\min}(\calA)
    \end{align*}
\end{restatable}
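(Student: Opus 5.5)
The plan is to derive the corollary from \Cref{app:thm:wf_min_adapt} and the Schur-monotonicity of the objective, exactly as \Cref{cor:reg_det} follows from \Cref{thm:wf_min}. I treat the maximization case for Schur-concave $f$; the Schur-convex case is symmetric, with inequalities flipped.

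\textbf{Step 1: reduce the regret of $\wf$ to oblivious sequences.} Since $\wf$ is deterministic, \Cref{app:fact:adapt_det} shows that every adaptive $E$ played against $\wf$ generates a fixed static sequence $E_0\in\calE_{n,m,q}$ on which $\wf$ makes the same allocations. Hence
\[
\reg^{\max}_{(n,m,q),\alpha,f}(\wf)=\sup_{E_0\in\calE_{n,m,q}}\bigl(\alpha f(\opt(E_0))-f(\wf(E_0))\bigr),
\]
where the hindsight optimum is $\opt(E_0)$ by \Cref{fact:opt}.

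\textbf{Step 2: match each oblivious sequence with an adaptive one against $\calA$.} Fix any oblivious $E_0$. By \Cref{app:thm:wf_min_adapt}, there is an adaptive sequence $E'$ such that, with probability one, $\wf(E_0)\preceq\calA(E')$ and $\opt(\wf,E_0)=\opt(E_0)\succeq\opt(\calA,E')$. Schur-concavity then gives, pathwise, $f(\calA(E'))\le f(\wf(E_0))$ and $f(\opt(\calA,E'))\ge f(\opt(E_0))$. Since $\alpha>0$, this yields
\[
\alpha f(\opt(\calA,E'))-f(\calA(E'))\ \ge\ \alpha f(\opt(E_0))-f(\wf(E_0))\quad\text{a.s.}
\]
Taking expectations gives $c^{\max}_{\alpha,f}(\calA,E')\ge c^{\max}_{\alpha,f}(\wf,E_0)$.

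\textbf{Step 3: take suprema.} The final trajectory of $E'$ has $n$ offline nodes, $m$ online nodes and total quantity $q$, so $E'$ belongs to the adaptive class $\calE_{n,m,q}$. Taking the supremum over $E_0$ on the right and bounding the left by the supremum over adaptive sequences gives $\reg(\wf)\le\reg(\calA)$.

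\textbf{Main obstacle.} The substantive content sits in \Cref{app:thm:wf_min_adapt}, which I take as given. Within this proof, the only delicate points are bookkeeping ones:
\begin{itemize}
\item Expectations must be well defined. The objective is nonnegative, and if a supremum is infinite the inequality holds trivially.
\item The majorization relations must hold almost surely rather than merely in expectation. This is what lets the argument avoid any Jensen step, and is why no concavity assumption on $f$ is needed here, unlike \Cref{cor:reg_rand}.
\end{itemize}
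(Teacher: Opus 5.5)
Your proposal is correct and follows essentially the same route as the paper: the paper likewise obtains the corollary by applying the adaptive majorization-minimality theorem and then using Schur-concavity (pathwise, so no Jensen step is needed) to compare the cost of $\calA$ on $E'$ with that of water-filling on $E$. Your explicit Step~1, which uses the adaptivity/determinism fact to reduce water-filling's adaptive regret to oblivious sequences, spells out a reduction the paper leaves implicit.
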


\subsection{Policy Deviation Against Adaptive Adversaries}
The adaptive adversarial request sequence \Cref{app:alg:wf_vs_policy_adapt} yielding \Cref{app:lem:wf_vs_polict_adapt} closely resembles \Cref{alg:wf_vs_policy}.
Given a seed oblivious nested sequence $\wt{E}\in\calE_{n,m,q}$, the adaptive policy relabels offline nodes.

\begin{algorithm}[ht]
    \SetAlgoNoLine
    \KwIn{
        An oblivious nested sequence $\wt{E}\in\Enest_{n,m,q}$ and a (possibly randomized) allocation policy $\calA$.
    }
    \KwOut{
        An adaptive sequence $E'\in\Enest_{n,m,q}$ with $\Pr[\wf(\widetilde{E})\preceq\calA(E')\text{ and }\opt(\widetilde{E})\sim \opt(E')]=1$.
    }
    Initialize the last neighbor of $i\in [n]$ on $\widetilde{E}$ as $\widetilde{\mu}_i \gets\max(\Gamma_i(\widetilde{E})\cup\{0\})$ and $\phi_t = |\{i\mid \widetilde{\mu}_i = t\}|$\;
    Define $S_1 \gets [n]$\;
    \For{$t=1,\dots,m$}{
        Instantiate $N_t'\gets S_t$ and $q_t' \gets q_t$\;
        Play online node $(N_t',q_t')$ \;
        Observe $\mbf{x}_t^\calA$ played by the algorithm\;
        Compute the randomized load vector adaptively $\bell_t^{\calA}\gets \sum_{s=1}^{t}\mbf{x}_t^{\calA}$\;
        From $S_t$ remove $\phi_t$ offline nodes with the smallest value in vector $\bell_t^{\calA}$\;
    }
    \caption{\waterfill vs. Allocation Policies}\label{app:alg:wf_vs_policy_adapt}
\end{algorithm}

\wfvspolicyadapt*

\begin{proof}
    The possibly randomized sequence $(N_t', q_t')$ is isomorphic to $E'$ under offline node labeling with probability $1$.
    The \waterfill allocation and the optimal hindsight solution will be equivalent to those on $\wt{E}$ up to permutation: $\wf(\wt{E}) \sim \wf(E')$ and $\opt(\wt{E})\sim \opt(\calA, E')$.
    The proof is completed by showing that $\wf(E') \preceq \calA(E')$ with probability $1$.
    For the remainder of the proof, we define everything with respect to $E'$.

    Fix a realization of the request sequence $((N_t', q_t'))_{t\in[m]}$ and the intermediate load vectors that $\calA$ produces $(\bell_t^{\calA})_{t\in[m]}$.
    Let $(\bell_t^{\wf})_{t\in[m]}$ be the intermediate load vectors produced by water-filling on $((N_t', q_t'))_{t\in[m]}$ on this request sequence.
    Define $\mu_i$ to be the latest online neighbor of offline node $i\in[n]$ on sequence $E'$.
    Without loss of generality, we assume that offline nodes are labeled in order of increasing load induced by policy $\calA$: $\bell_m^{\calA}(1) \le\dots\le\bell_m^{\calA}(n)$.
    By \rkedit{the} construction \rkedit{of} \Cref{app:alg:wf_vs_policy_adapt}, this ensures that $\mu_1\le\dots\le\mu_n$ since offline nodes with lower load are removed from $S_t$ first.
    Offline nodes $i$ with lower $\mu_i$ will have lower loads for the \waterfill allocation, so we also have $\bell_m^{\wf}(1)\le\dots\le\bell_m^{\wf}(n)$.
    For convenience, we define $\bell_m^{\wf}(0) \define \bell_m^{\calA}(0) \define \vec{0}$.
    We prove majorization with induction on increasing cumulative sums: $\bell_m^{\wf}([i]) \ge \bell_{m}^{\calA}([i])$ for all $i\in[n]\cup\{0\}$.
    The base case $\bell_m^{\wf}(0) \ge \bell_{m}^{\calA}(0)$ holds because 
    both sides equal zero by definition. For the induction step,
    we assume the inequality holds for $j \le i - 1$, where $i\in[m]$,
    and we aim to prove $\bell_m^{\wf}([j]) \ge \bell_{m}^{\calA}([j])$.
    Since the last neighbor of $i$ is online node $\mu_i$, it suffices to focus on the intermediate load vector after the arrival of that online node.
    The total load across all offline nodes under \waterfill is $\sum_{t=1}^{\mu_j} q_t$ with $\bell_{m}^{\wf}([i-1])$ load allocated to nodes in $[i-1]$.
    Since $((N_t', q_t'))_{t\in[m]}$ is a nested sequence, \Cref{def:height} and \Cref{obs:wf_on_nest} imply that \waterfill has equal load across the remaining nodes $k\ge i$:
    \begin{align}\label{app:eq:deviat1}
        \bell_{\mu_i}^{\wf}(i) = \left(\frac{1}{n-i+1}\right)\left(\sum_{t=1}^{\mu_i} q_t - \bell_m^{\wf}([i-1]) \right)
    \end{align}
    Now we analyze the allocation made by policy $\calA$.
    Again, the total load on all offline nodes is $\sum_{t=1}^{\mu_j} q_t$ and the load allocated to nodes in $[i-1]$ is $\bell_{m}^{\calA}([i-1])$.
    Since offline node $i$ has the least load of nodes in $[n]\setminus[i-1]$ after the arrival of online node $\mu_i$ by construction, we know that:
    \begin{align}\label{app:eq:deviat2}
        \bell_{\mu_i}^{\calA}(i) \le \left(\frac{1}{n-i+1}\right)\left(\sum_{t=1}^{\mu_i}q_t - \bell_{m}^{\calA}([i-1])\right)
    \end{align}
    Comparing cumulative sums proves the hypothesis holds for $i=j$ and implies the claim:
    \begin{align*}
        \bell_m^{\wf}([i])
        &= \bell_{\mu_i}^{\wf}([i]) \\
        &= \bell_{\mu_i}^{\wf}([i-1]) + \left(\frac{1}{n-i+1}\right)\left(\sum_{t=1}^{\mu_i} q_t - \bell_m^{\wf}([i-1]) \right)         
        &&\text{\Cref{app:eq:deviat1}} \\
        &= \left(\frac{n-i}{n-i+1}\right) \bell_{m}^{\wf}([i-1]) + 
           \left(\frac{1}{n-i+1}\right) \sum_{t=1}^{\mu_i} q_t
        &&\text{since $\bell_{\mu_i}^{\wf}([i-1])=\bell_m^{\wf}([i-1])$}\\
        &\ge \left(\frac{n-i}{n-i+1}\right)\bell_{m}^{\calA}([i-1]) + \left(\frac{1}{n-i+1}\right)
        \sum_{t=1}^{\mu_i} q_t
        &&\text{Inductive assumption} \\
        &= \bell_{\mu_i}^{\calA}([i-1]) + \left(\frac{1}{n-i+1}\right)
        \left( \sum_{t=1}^{\mu_i} q_t - \bell_{m}^{\calA}([i-1]) \right)
        &&\text{since $\bell_{\mu_i}^{\calA}([i-1])=\bell_m^{\calA}([i-1])$} \\
        &\ge \bell_{\mu_i}^{\calA}([i])
        &&\text{\Cref{app:eq:deviat2}} \\
        &= \bell_{m}^{\calA}([i])
    \end{align*}
    The inductive hypothesis holds for an arbitrary realization of the request sequence and loads produced by $\calA$, thus the original claim holds.
\end{proof}

\subsection{Majorization Minimality and Minimax Optimality of \waterfill}
\wfminadap*
\begin{proof}
    We compose \Cref{lem:nest_wf,app:lem:wf_vs_polict_adapt} to prove the claim.
    Given a request sequence $E\in\calE_{n,m,q}$, apply \Cref{lem:nest_wf} to get $\wt{E}\in\Enest_{n,m,q}$ then apply \Cref{app:lem:wf_vs_polict_adapt} to $\wt{E}$ to get adaptive sequence $E'\in\calE_{n,m,q}$.
    Notably, since \waterfill is a deterministic algorithm, \Cref{app:fact:adapt_det} says the optimal hindsight solution does not depend on $\wf$: $\opt(E) = \opt(\wf,E)$.
    These sequences guarantee both of the following with probability $1$:
    \begin{align*}
        &\wf(E) \preceq \wf(\wt{E}) \preceq \wf(E') \\
        &\opt(\E) \succeq \opt(\wt{E}) \sim \opt(\calA, E')
    \end{align*} 
\end{proof}

\wfadapt*
\begin{proof}
    We prove the claim for Schur-concave $f$ maximization.
    The proof for Schur-convex $g$ minimization is identical up to 
    sign reversals necessitated by the reversal of the directionality of optimization.
    For any policy $\calA$ and request sequence $E$, it suffices to show that there is another adaptive request sequence $E'$ on which the cost of $\calA$ on $E'$ is no less than that of $\wf$ on $E$. 
    The claim is a result of \Cref{app:thm:wf_min_adapt} and the definition of Schur-concavity:
    \begin{align*}
        c_{\alpha, f}^{\max}(\wf, E)
        &= \E[\alpha \cdot f(\opt(\wf,E)) - f(\wf(E))] \\
        &\le \E[\alpha \cdot f(\opt(\calA, E')) - f(\calA(E'))]
        &&\text{\Cref{app:thm:wf_min_adapt}, Schur-concavity} \\
        &= c_{\alpha, f}^{\max}(\calA, E')
    \end{align*}
\end{proof}

\section{Deferred Proofs}
\label{appsec:proofs}

\subsection{Additional Details for~\Cref{sec:def}}
\label{appsec:def}

For completeness, we provide a brief proof outline for~\Cref{fact:opt}, based on the theory of submodular optimization on polymatroids (see \cite[Chapter II.]{fujishige2005submodular} for details).

\begin{proof}[Proof Outline for~\Cref{fact:opt}]
    For each $t\in[m]$ define the submodular function $f_t(A) \define q_t \cdot \1{A\cap N_t \neq \emptyset}$.
    By definition, the set $\Delta(N_t, q_t)$ is the base polytope of the polymatroid with rank function $f_t$.
    \cite[Equation 3.33]{fujishige2005submodular} shows that the Minkowski sum of these base polytopes, which is equal to the set of feasible allocations $\Delta(E)$, is the base polytope of the polymatroid with rank function $f = \sum_{t\in[m]} f_t$.
    Finally, \cite[Theorem 3]{dutta1989concept} proves the existence of a majorization minimal convex game core, which is equivalent to a majorization minimal polymatroid base (\cite[Subsection 2.2 Polymatroids]{fujishige2005submodular}).
\end{proof}

\subsection{Deferred Proofs from~\Cref{sec:result}}
\label{appsec:result}

Next we provide the deferred proofs of the main results and corollaries we state in~\Cref{sec:result}.

\subsubsection*{Minimax Optimality of \waterfill}

First, we formally establish our main result,~\Cref{thm:wf_min}, which, as we mentioned before, follows immediately from \Cref{lem:nest_wf,lem:wf_vs_policy}, which we establish in~\Cref{sec:nest,sec:wf_vs_policy} respectively.
\wfmin*
\begin{proof}
    Let $\widetilde{E}\in\calE_{n,m,q}$ be the sequence output by \Cref{alg:nest} on $E$, and $E'\in\calE_{n,m,q}$ be the sequence output by \Cref{alg:wf_vs_policy} on $\widetilde{E}$.
    The transitivity of majorization implies the claim:
    \begin{align*}
        &\wf(E) \preceq \wf(\widetilde{E}) \preceq \E[\calA(E')]
        &&\text{Lemmas \ref{lem:nest_wf} and \ref{lem:wf_vs_policy}}\\
        &\opt(E) \succeq \opt(\widetilde{E}) \sim \opt(E')
        &&\text{Lemmas \ref{lem:nest_wf} and \ref{lem:wf_vs_policy}}
    \end{align*}
\end{proof}

Next we use this to get our stated implications for the regret of \waterfill under particular objectives, and compared to both deterministic and randomized policies. The first result follows directly from the definition of Schur-monotone functions.

\determinimax*
\begin{proof}
    It suffices to show that on any sequence $E\in\calE_{n,m,q}$, the cost incurred by the principal when she plays $\wf$ is less than that of $\calA$ on some sequence $E'\in\calE_{n,m,q}$. 
    Fix such a sequence $E\in\calE_{n,m,q}$.
    \Cref{thm:wf_min} guarantees that there is $E'\in\Enest_{n,m,q}$ such that $\wf(E) \preceq \calA(E')$ and $\opt(E) \succeq \opt(E')$.
    Note that the expectation is omitted because $\calA$ is deterministic.
    Schur-concavity of $f$ guarantees $f(\wf(E)) \ge f(\calA(E'))$ and $f(\opt(E)) \le f(\opt(E'))$.
    This yields the desired result:
    \begin{align*}
        c_{\alpha,f}^{\max}(\wf, E) = \alpha f(\opt(E)) - f(\wf(E)) \le \alpha \cdot f(\opt(E')) - f(\calA(E')) = c_{\alpha, f}^{\max}(\calA, E')
    \end{align*}
    Applying the \Cref{thm:wf_min} similarly for Schur-convex objectives $g$ completes the claim. 
\end{proof}

For comparing against randomized algorithms, we need to add the requirement of convexity/concavity, and then leverage Jensen's inequality, as follows.

\randminimax*

\begin{proof}
    Similarly to \Cref{cor:reg_rand}, it suffices to show that for every policy $\calA$ and sequence $E\in\calE_{n,m,q}$, there is a sequence $E'\in\calE_{n,m,q}$ such that the cost of \waterfill on $E$ is less than the cost of $\calA$ on $E'$.
    Fix a sequence $E\in\calE_{n,m,q}$.
    \Cref{thm:wf_min} guarantees that there is $E'\in\Enest_{n,m,q}$ such that $\wf(E) \preceq \E[\calA(E')]$ and $\opt(E) \succeq \opt(E')$.
    Consider when $f$ is symmetric and concave.
    $f$ is also Schur-concave by \cite[Proposition C.2]{marshall2011majorization}.
    Applying the definition of Schur-concavity and Jensen's inequality yields:
    \begin{align*}
        &f(\wf(E)) \ge f(\E[\calA(E')]) \ge \E[f(\calA(E'))]
    \end{align*}
    Further, we have $f(\opt(E)) \le f(\opt(E'))$.
    These two facts imply the desired inequality:
    \begin{align*}
        c_{\alpha,f}^{\max}(\wf, E) = \alpha f(\opt(E)) - f(\wf(E)) \le \alpha \cdot f(\opt(E')) - \E[f(\calA(E'))] = c_{\alpha, f}^{\max}(\calA, E')
    \end{align*}
    Applying similar arguments for symmetric and convex $g$ completes the corollary.
\end{proof}

Finally, we establish the separation the performance of \waterfill and randomized policies for general Schur-monotone functions under oblivious adversaries. 

\needconcave*

\begin{proof}
    We first show $f$ is Schur-concave.
    Consider $\mbf{x}\preceq \mbf{y}$ and assume $\mbf{x}(1) \le \mbf{x}(2)$ by symmetry.
    The increasing sum definition of majorization implies that $\mbf{x}(1) \ge \mbf{y}(1)$ and $\mbf{x}(2) \le \mbf{y}(2)$.
    If $\mbf{x}(1) > 1/2$, then $f(\mbf{x}) = 1 \ge f(\mbf{y})$ as desired.
    Otherwise, $1/2 \ge \mbf{x}(1) \ge \mbf{x}(2)$ and $f(\mbf{x}) \ge 0 = f(\mbf{y})$.
    This shows that $f$ is Schur-concave.
    
    We show that the $\alpha$-regret of \waterfill under objective $f$ is $\alpha$.
    Take the sequence $E = ((N_1, q_1), (N_2, q_2))\in\calE_{2,2,2}$ with $N_1 = \{1,2\}$, $N_2 = \{2\}$, and $q_1 = q_2 = 1$.
    The \waterfill allocation is $\wf(E) = (1/2, 3/2)$ and optimal hindsight allocation is $\opt(E) = (1,1)$.
    The $\alpha$-regret of $\wf$ lower bounded by:
    \begin{align*}
        \reg_{(2,2,2),\alpha,f}^{\max}(\wf)
        = \alpha f(\opt(E)) - f(\wf(E))
        = \alpha
    \end{align*}

    We now describe the randomized allocation policy $\calA$, which provides a better expected competitive ratio.
    Initially, the policy uniformly at random selects a primary offline node $u\gets \uniform([2])$ and defines the secondary offline node $v\gets [2]\setminus \{i\}$.
    Consider the $t^{th}$ online node with neighborhood $N_t$ and quantity $q_t$.
    There are three possible neighborhoods: full neighborhoods $N_t = \{1, 2\}$ and the singleton neighborhoods $N_t\in\{\{1\}, \{2\}\}$.
    If $N_t$ is a singleton neighborhood, the entire quantity of the online node is allocated to its single neighbor: $N_t = \{i\}$ implies $\mbf{x}(i) = q_t$ and $\mbf{x}([n]\setminus \{i\}) = 0$.
    If a full neighborhood $N_t = \{1, 2\}$ arrives, the policy allocates quantity to the primary offline node until it reaches $3/4$ load.
    It allocates any remaining quantity to the secondary node.
    Formally, the allocation of the $t$ offline node is $\mbf{x}_t(u) \gets \min(q_t, \pp{3/4 - \bell_{t-1}(u)})$ and $\mbf{x}_t(v) \gets q_t - \mbf{x}_t(v)$, where $\bell_{t-1}$ is the load vector prior to the arrival of $t$.

    We complete the proof by showing that the cost incurred by the policy $\calA$ is at most $\pp{\alpha - 1/2}$.
    Since $\opt(E)\preceq \wf(E)$ by majorization minimality of $\opt$, when the adversary plays $E\in\calE_{2,2,2}$ with $f(\opt(E)) = 0$ the equity of the \waterfill allocation is $f(\wf(E)) = 0$;
    the cost $c_{2,2,2}^{\max}(\calA, E)$ incurred by the principal for such sequences is $0$.
    
    We now analyze the other case when the adversary plays $E\in\calE_{n,m,q}$ with $f(\opt(E)) = 1$.
    Define $\mathscr{q}_N \define \sum_{t\in[m]: N_t = N} q_t$ to be the total quantity of online nodes with neighborhood $N$.
    We break up our analysis into case work.
    We focus on when there is an offline node $v$ with $\mathscr{q}_{\{v\}} > 1/2$ and the policy selects primary node $u = [2]\setminus \{v\}$ with probability $1/2$.
    By definition $\calA(E)(v) \ge \mathscr{q}_{\{v\}} > 1/2$.
    On the other hand, tie-breaking in favor of the primary node yields $\calA(E)(u) \ge \min(\mathscr{q}_{\{u\}} + \mathscr{q}_{\{u,v\}}, 3/4)$.
    The total quantity incident on $u$ must be at least $\mathscr{q}_{\{u\}} + \mathscr{q}_{\{u,v\}}> 1/2$ because $f(\opt(E)) = 1$ so we have $\calA(E)(u) > 1/2$.
    In conclusion, $\E[f(\calA(E))] \ge 1/2$ because $f(\calA(E)) = 1$ with probability $1/2$ and $f(\calA(E)) \ge 0$ otherwise.
    The cost incurred by the principal in this case is $\alpha - 1/2$.

    Consider the other case when $\mathscr{q}_{\{1\}} \le 1/2$ and $\mathscr{q}_{\{2\}} \le 1/2$.
    The total quantity on online nodes is $2$ so $\mathscr{q}_{\{1,2\}} = 2 - \mathscr{q}_{\{1\}} - \mathscr{q}_{\{2\}} \ge 1$.
    By symmetry, assume the policy selects primary node $u\gets 1$.
    Favorable tie-breaking means the primary node satisfies:
    \begin{align*}
        \calA(E)(1)
        &\ge \min(\mathscr{q}_{\{1\}} + \mathscr{q}_{\{1,2\}}, 3/4) \\
        &\ge \min(1, 3/4)
        && \mathscr{q}_{\{1,2\}} \ge 1 \\
        &= 3/4
    \end{align*}
    The primary node $1$ receives at most $3/4$ load from nodes $t$ with $N_t = \{1,2\}$.
    The load on the secondary node is:
    \begin{align*}
        \calA(E)(2)
        &\ge \mathscr{q}_{\{2\}} + (\mathscr{q}_{\{1,2\}} - 3/4) \\
        &= 5/4 - \mathscr{q}_{\{1\}}
        &&\mathscr{q}_{\{1\}} + \mathscr{q}_{\{2\}} + \mathscr{q}_{\{1,2\}} = 2 \\
        &\ge 3/4
        &&\mathscr{q}_{\{1\}} \le 1/2
    \end{align*}
    For these sequences, the policy guarantees $f(\calA(E))$  almost surely and the cost incurred by the principal is $\alpha - 1$

    Since the adversary plays the worst-case sequence, regret is bounded by the maximum cost of the three analyzed cases: 
    \begin{align*}
        \reg_{(2,2,2),\alpha,f}^{\max}(\calA) = \max(0, \alpha - 1/2, \alpha - 1) = \pp{\alpha - 1/2}
    \end{align*}

\end{proof}

\subsubsection*{Characterizing the Minimax Regret}

For convenience, we henceforth define
$
\calI_{n,q}=\{\mbf{x}\in\R_{>0}^n : \mbf{x}([n])=q \text{ and } \mbf{x}_1\le\cdots\le \mbf{x}_n\}
$ to be the subset of the $q$-simplex with coordinatewise non-decreasing vectors.

\regret*
\begin{proof}
    We prove the claim for the case of maximizing Schur-concave $f$.
    Applying similar arguments shows the results hold for minimizing Schur-concave $g$.
    To begin, we provide a lower bound on the $\alpha$-regret of \waterfill.
    From \Cref{obs:worst_exist}, we know that for all $\bell\in\calI_{n,q}$, there is $E\in\calE_{n,m,q}$ with $\wf(E) = H\bell$ and $\opt(E) = \bell$.
    This implies that $\alpha$-regret satisfies:
    \begin{align*}
        \reg_{(n,:,q),\alpha,f}^{\max}(\wf)
        = \sup_{E\in\calE_{n,:,q}} (\alpha\cdot f(\opt(E)) - f(\wf(E)))
        \ge \sup_{\bell \in\calI_{n,q}} (\alpha\cdot f(\bell) - f(H\bell))
    \end{align*}
    The lower bound is a result of composing \Cref{thm:wf_min,thm:param}.
    Fix an arbitrary $E\in\calE_{n,:,q}$.
    \Cref{thm:wf_min} shows that there is $\widetilde{E}\in\Enest_{n,:,q}$ satisfying $\wf(E) \preceq \wf(\widetilde{E})$ and $\opt(E) \succeq \opt(\widetilde{E})$.
    Further, applying \Cref{thm:param} to $\widetilde{E}$ proves the existence of $\bell\in\calI_{n,q}$ with $\wf(\widetilde{E}) \preceq H\bell$ and $\opt(\widetilde{E}) \succeq \bell$.
    Composing these theorems and using the definition of Schur-concavity gives:
    \begin{align}\label{eq:bell_exist}
        f(\wf(E)) \ge f(\wf(\widetilde{E})) \ge f(H\bell)
        \quad\quad\text{and}\quad\quad
        f(\opt(E)) \le f(\opt(\widetilde{E})) \le f(\bell)
    \end{align}
    The existence of such an $\bell$ provides an upper bound on $\alpha$-regret:
    \begin{align*}
        \reg_{(n,:,q),\alpha,f}^{\max} (\wf)
        &= \sup_{E\in\calE_{n,:,q}} (\alpha\cdot f(\opt(E)) - f(\wf(E))) \\
        &\le \sup_{E\in\calE_{n,:,q}} \sup_{\bell\in\calI_{n,q}} (\alpha\cdot f(\opt(E)) - f(\wf(E)))
        &&\text{\Cref{eq:bell_exist}} \\
        &= \sup_{\bell\in\calI_{n,q}} (\alpha\cdot f(\opt(E)) - f(\wf(E)))
    \end{align*}
\end{proof}

\subsection{Deferred Proofs and Additional Details  from~\Cref{sec:nest}}
\label{appsec:lemmas}

\subsubsection{\waterfill Invariance under Active Edge Pruning}

\inactivecomp*

\begin{proof}
    Let $\bell_1,\dots,\bell_m$ be the intermediate loads produced by $\wf$ on $E$.
    We now show that the intermediate loads $\bell_1^{(1)},\dots,\bell_m^{(1)}$ induced by $\wf$ on sequence $E^{(1)}$ is equivalent to $\bell_1,\dots,\bell_m$.
    For induction, assume that $\bell_{t-1} = \bell_{t-1}^{(1)}$.
    We define $\bell_0 = \bell_0^{(1)} = \vec{0}$ to support the base case of the inductive hypothesis for $t=1$.
    Consider the $\wf$ optimization problem (\Cref{def:wf}) on round $t$.
    By definition of $E^{(1)}$ we have $\Delta(N_t, q_t) \supseteq \Delta(N_t^{(1)}, q_t^{(1)})$ and $\mbf{x}_t\define \wf(N_t, q_t, \bell_{t-1}) \in \Delta(N_t^{(1)}, q_t^{(1)})$.
    The inductive hypothesis shows that $\mbf{x}_t + \bell_{t-1} = \mbf{x}_t + \bell_{t-1}^{(1)}$ is the majorization minimal intermediate allocation on round $t$, so \Cref{obs:wf_properties} implies $\mbf{x}_t = \wf(N_t^{(1)}, q_t^{(1)}, \bell_{t-1}^{(1)})$.
    The inductive hypothesis holds and implies the claim: $\wf(E) = \bell_m = \bell_m^{(1)} = \wf(E^{(1)})$.
\end{proof}

The proof of~\Cref{prop:inactive_comp} also indicates that the heights of online nodes under \waterfill are equivalent for the two sequences. This allows us to exploit the following fact about neighborhoods of offline nodes in $E^(1)$, which is critical for the invariance of the \waterfill allocation following the online node permutation in Step 2 (i.e., for~\Cref{prop:neighborhood_pres}).
\begin{observation}[Monotone Neighborhoods]
\label{obs:mono_neighborhood}
Let $E\in\calE_{n,m,q}$ be an sequence without inactive edges, and let $(h_t)_{t\in[m]}$ be the heights under \waterfill of online nodes in $E$. 
Then neighbors of each offline node $i$ appear in order of increasing height, i.e., $\forall\,t,s\in\Gamma_i(E^{(1)})$ with $t < s$, we have $h_t < h_s$.
\end{observation}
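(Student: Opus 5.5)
We argue directly from \Cref{def:height} and the monotonicity of loads over time; no earlier lemma beyond the equalized-loads property in \Cref{obs:wf_properties} is needed. Fix an offline node $i$ and two online neighbors $t<s$ in $\Gamma_i(E)$. Since $E$ has no inactive edges, both $(i,t)$ and $(i,s)$ are active. Hence $\mbf{x}_t(i)>0$ and $\mbf{x}_s(i)>0$, so $i$ lies in the supports $\widehat{N}_t$ and $\widehat{N}_s$.

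By \Cref{obs:wf_properties}, every node in the support of $\mbf{x}_t$ shares the same post-allocation load. \Cref{def:height} therefore lets us evaluate each height at node $i$ itself:
\[
h_t = \bell_t(i) = \sum_{r=1}^{t}\mbf{x}_r(i), \qquad h_s = \bell_s(i) = \sum_{r=1}^{s}\mbf{x}_r(i).
\]
Subtracting the two expressions gives
\[
h_s - h_t = \sum_{r=t+1}^{s}\mbf{x}_r(i) \;\ge\; \mbf{x}_s(i) \;>\; 0 .
\]
The first inequality holds because every allocation is non-negative, so the load on $i$ never decreases. The second holds because the edge $(i,s)$ is active. Thus $h_t<h_s$, which is exactly the claimed strict increase of heights along $\Gamma_i(E)$.

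There is no substantive obstacle in this argument. The only point requiring care is to evaluate both heights at the common node $i$. This is legitimate precisely because $i$ lies in both supports: activity of \emph{all} edges is what guarantees it. For the intended use with $E^{(1)}$, we note that the proof of \Cref{prop:inactive_comp} shows $E^{(1)}$ has the same \waterfill allocations as $E$, so its edges are all active and its heights coincide with those computed on $E$. The observation therefore applies to $E^{(1)}$ with the heights used in Step 2 of \Cref{alg:nest}.
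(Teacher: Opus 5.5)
Your proof is correct and follows the paper's own argument: you evaluate both $h_t$ and $h_s$ at the common node $i$, which is legitimate because both edges are active, and observe that $h_s - h_t = \sum_{u=t+1}^{s}\mbf{x}_u(i) \ge \mbf{x}_s(i) > 0$. Your closing remark that the conclusion transfers to $E^{(1)}$ via \Cref{prop:inactive_comp} is a useful clarification of the statement's mixed $E$ versus $E^{(1)}$ notation, which the paper leaves implicit.
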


\begin{proof}
    Let $(\mbf{x}_{t})_{t\in[m]}$ and $(\bell_t)_{t\in[m]}$ be allocations and intermediate loads produced by the \waterfill algorithm, respectively.
    Consider an offline node $i$ which has neighbors $t,s\in \Gamma_i(E)$ with $t < s$.
    The height of online node $t$ is $h_t = \bell_t(i)$ and the height of online node $s$ is $h_s = \bell_s(i)$.
    By definition, we have:
    \begin{align*}
        h_s = \bell_s(i) = \bell_t(i) + \sum_{u = t+1}^{s} \mbf{x}_u(i) = h_t + \sum_{u=t+1}^{s} \mbf{x}_u(i)
    \end{align*}
    The above sum is strictly positive since $i$ has an active edge to $s$; this completes the proof.
\end{proof}

\subsubsection{\waterfill Invariance under Online Node Permutation}

\begin{proposition}[\waterfill preserving permutations]
\label{prop:reorder}
    Fix an sequence $E = ((N_t, q_t))_{t\in[m]}\in\calE_{n,m,q}$.
    Take $\sigma:[m]\to[m]$ to be any permutation satisfying $t<s\Rightarrow \sigma(t) < \sigma(s)$ for every offline node $i\in [n]$ and every pair $t,s\in \Gamma_i(E)$.
    Define the sequence $\widehat{E}\gets((\widehat{N}_t, \widehat{q}_t))_{t\in[m]}\in\calE_{n,m,q}$ with $\widehat{N}_{t} \gets N_{\sigma^{-1}(t)}$ and $\widehat{q}_{t} \gets q_{\sigma^{-1}(t)}$.
    Then the allocations $(\mbf{x}_t)_{t\in[m]}$ and $(\widehat{\mbf{x}}_t)_{t\in[m]}$ produced by \waterfill on $E$ and $\widehat{E}$, respectively, satisfy $\mbf{x}_{t} = \widehat{\mbf{x}}_{\sigma(t)}$ for all $t\in[m]$, and hence $\wf(E) \sim \wf(\widehat{E})$.
\end{proposition}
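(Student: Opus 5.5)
We argue by induction over the arrivals of $\widehat{E}$, tracking the intermediate loads of both runs on each offline node separately. The invariant is that $\widehat{\bell}_{\sigma(t)}(i) = \bell_t(i)$ for every $t$ and every $i\in N_t$. More generally, for any time $u$ in $\widehat{E}$ and any offline node $i$, we claim
\[
\widehat{\bell}_u(i) = \sum_{s\in\Gamma_i(E):\,\sigma(s)\le u} \mbf{x}_s(i),
\]
that is, node $i$'s load in $\widehat{E}$ at time $u$ equals the total it received in $E$ from its neighbors that have already been presented in $\widehat{E}$. Because $\sigma$ preserves the relative order of $\Gamma_i(E)$, the set $\{s\in\Gamma_i(E):\sigma(s)\le u\}$ is always a prefix of $\Gamma_i(E)$ in its original order.

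For the inductive step, suppose the invariant holds up to time $u-1$ in $\widehat{E}$, and let $t=\sigma^{-1}(u)$ be the online node arriving at time $u$. Fix $i\in N_t$. By the prefix property, the neighbors of $i$ already seen in $\widehat{E}$ are exactly those $s\in\Gamma_i(E)$ with $s<t$. Hence
\[
\widehat{\bell}_{u-1}(i) = \sum_{s\in\Gamma_i(E),\, s<t}\mbf{x}_s(i) = \bell_{t-1}(i).
\]
The second equality holds because $i$ receives allocation in $E$ only from online nodes in $\Gamma_i(E)$. So the pre-arrival loads agree coordinatewise on $N_t = \widehat{N}_u$.

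The \waterfill program of \Cref{def:wf} depends only on the quantity $q_t$, the neighborhood $N_t$, and the restriction of the current load vector to $N_t$. Its optimizer is unique by the Uniqueness property of \Cref{obs:wf_properties}. Strictly, uniqueness is stated for the full vector $\mbf{x}+\bell$, but coordinates outside $N_t$ are unaffected by any feasible $\mbf{x}$, so uniqueness transfers to the restriction. It follows that $\widehat{\mbf{x}}_u = \mbf{x}_t$. Adding this allocation updates both sides of the invariant consistently, which closes the induction.

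At the end, every node $i$ has received $\sum_{s\in\Gamma_i(E)}\mbf{x}_s(i)$ in both runs. So $\wf(E)=\wf(\widehat{E})$ coordinatewise, which in particular gives $\wf(E)\sim\wf(\widehat{E})$.

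The main obstacle is stating the invariant in the right form. The global intermediate load vectors $\bell_t$ and $\widehat{\bell}_{\sigma(t)}$ generally differ, because nodes outside $N_t$ may have been served in different orders. The argument must therefore be per-offline-node, combined with the locality of the \waterfill program: the $t$-th allocation depends only on the loads in $N_t$. The hypothesis on $\sigma$ enters only through the prefix property.
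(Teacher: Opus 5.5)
Your proof is correct and takes essentially the paper's route. Both arguments use induction to show that, for every $i\in N_t$, node $i$ has the same pre-arrival load in the two runs, because $\sigma$ preserves the order within $\Gamma_i(E)$; the two water-filling programs then coincide, and uniqueness of the optimizer forces $\widehat{\mbf{x}}_{\sigma(t)}=\mbf{x}_t$. The only difference is bookkeeping: you induct along the arrival order of $\widehat{E}$ with an explicit per-node prefix-sum invariant, while the paper inducts along $E$ using the last earlier neighbor $\tau_i$ of each $i\in N_s$, and both reach $\wf(E)=\wf(\widehat{E})$ coordinatewise.
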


\begin{proof}
    We prove this claim via induction on online nodes.
    Let $(\mbf{x}_t)_{t\in[m]}$ and $(\bell_t)_{t\in[m]}$ be the allocations and intermediate allocations generated by \waterfill on $E$, respectively.
    Define $(\widehat{\mbf{x}}_t)_{t\in[m]}$ and $(\widehat{\bell}_t)_{t\in[m]}$ similarly for $\widehat{E}$.
    For induction, assume that $\mbf{x}_t = \widehat{\mbf{x}}_{\sigma(t)}$ and $\bell_t =_{N_t} \widehat{\bell}_{\sigma(t)}$ for all $t\le s-1$.
    To simplify analysis, it is useful to define the initial allocation $\mbf{x}_0 \define \widehat{\mbf{x}}_0 \define \vec{0}$ and intermediate allocation $\bell_0 \define \widehat{\bell}_0 \define \vec{0}$ as well as extend the bijection $\sigma$ so that $\sigma(0) = 0$.
    These definitions imply the inductive assumption for $s=1$.

    Consider an arbitrary round $s$.
    For each $i\in N_s$ define $\tau_i \define\max\left((\Gamma_i \cap [s-1])\cup \{0\}\right)$ to be the last online node prior to $s$ which was a neighbor of $i$ (if no such neighbor exists, we set $\tau_i \define 0$).
    Since $\sigma$ preserves the ordering of the neighborhood of each offline node, we know that $\sigma(\tau_i)$ is the latest arriving online node before $\sigma(s)$ with $i$ as a neighbor.
    Thus, $\widehat{\bell}_{\sigma(s)-1}(i) = \widehat{\bell}_{\sigma(\tau_i)}(i)$ for each $i\in \widehat{N}_t$ and we have the following equivalence for all $i\in N_t = \widehat{N}_{\sigma(t)}$:
    \begin{align}\label{eq:worst_case_ind1}
        \bell_{s-1}(i)
        &= \bell_{\tau_i}(i)
        &&\text{Definition of }\tau_i \\
        &= \widehat{\bell}_{\sigma(\tau_i)}(i)
        &&\text{Inductive assumption on }\tau_i < s \nonumber\\
        &= \widehat{\bell}_{\sigma(s)-1}(i)
        &&\sigma\text{ preserves neighborhoods of offline nodes} \nonumber
    \end{align}
    Consider the fractional allocation of \waterfill for online node $s$ on $E$ and online node $\sigma(s)$ on $\widehat{E}$.
    The objectives given in \Cref{def:wf} satisfy:
    \begin{align*}
        &\text{Obj. defining }\wf(N_s, q_s, \bell_{s-1}) \\
        &= \min_{i\in N_t}\left\{\mbf{x}_s(i) + \bell_{s-1}(i)\right\} \\
        &= \min_{i\in \widehat{N}_{\sigma(t)}}\left\{ \mbf{x}_s(i) + \widehat{\bell}_{\sigma(s) - 1}(i) \right\}
        &&N_t = \widehat{N}_{\sigma(t)}\text{ and \Cref{eq:worst_case_ind1}} \\
        &= \text{Obj. defining }\wf(\widehat{N}_{\sigma(s)}, \widehat{q}_{\sigma(s)}, \widehat{\bell}_{\sigma(s)-1})
    \end{align*}
    The feasible regions are also equivalent because $(N_s, q_s) = (\widehat{N}_{\sigma(s)}, \widehat{q}_{\sigma(s)})$.
    Since the optimization problem specifying the allocations on $E$ and $\widehat{E}$ are equivalent and this optimization problem has a unique solution (\Cref{obs:wf_properties}), we have $\mbf{x}_s = \wf(N_s, q_s, \bell_{s-1}) = \wf(\widehat{N}_{\sigma(s)}, \widehat{q}_{\sigma(s)}, \widehat{\bell}_{\sigma(s)-1}) = \widehat{\mbf{x}}_{\sigma(s)}$.
    The second part of the inductive assumption also holds:
    \begin{align*}
        \bell_{s}
        &= \mbf{x}_s + \bell_{s-1} \\
        &=_{N_t} \widehat{\mbf{x}}_{\sigma(s)} + \widehat{\bell}_{\sigma(s)-1}
        &&\mbf{x}_{s} = \widehat{\mbf{x}}_{\sigma(s)} \text{ and \Cref{eq:worst_case_ind1}}\\
        &= \widehat{\bell}_{\sigma(s)}
    \end{align*}

    Assume the inductive claim on all $t\in [m]$.
    To prove that $\bell_m = \widehat{\bell}_m$, define $\mu_i \gets \max(\Gamma_i(E))$ to be the last online node with an edge to $i$.
    We use the property that $\sigma$ is order-preserving on elements in $\Gamma_i(E)$ to show the desired result.
    For all offline nodes $i$, we have:
    \begin{align*}
        \bell_m(i) &= \bell_{\mu_i}(i)
        &&\sigma\text{ preserves order} \\
        &= \widehat{\bell}_{\sigma(\mu_i)}(i)
        &&\text{Inductive assumption} \\
        &= \widehat{\bell}_{m}(i)
        &&\sigma\text{ preserves order}
    \end{align*}
    We have shown $\mbf{x}_t = \widehat{\mbf{x}}_{\sigma(t)}$ for all $t$ and $\wf(E) = \wf(\widehat{E})$.
\end{proof}

\nbdpreserve*
\begin{proof}
    To prove this result, it suffices to show that $\sigma$ preserves the relative order of offline node neighborhoods: for each offline node $i$ we are guaranteed that $\sigma(t) < \sigma(s)$ for $t < s$ with $t,s\in \Gamma_i(E^{(1)})$.
    This then allows us to apply \Cref{prop:reorder} and get $\wf(E^{(1)}) = \wf(E^{(2)})$.

    Let $(h_t)_{t\in[m]}$ be the \waterfill heights of online nodes in sequence $E$, which is input to \Cref{alg:nest}.
    By \Cref{prop:inactive_comp}, the \waterfill heights of $E^{(1)}$ are also $(h_t)_{t\in[m]}$.
    Fix an offline node $i$ and consider two of its neighbors $t,s\in\Gamma_i(E^{(1)})$ with $t<s$.
    \Cref{obs:mono_neighborhood} implies $h_t < h_s$ so \Cref{alg:nest} will construct the bijection $\sigma$ so that $\sigma(t) < \sigma(s)$.
    Thus, $\sigma$ preserves the relative order of each offline node's neighborhood as desired.
\end{proof}
\subsubsection{\Cref{alg:nest} Preserves Hindsight Allocations}

Finally, we provide a proof for~\Cref{prop:optpreserve}. For this, we first need the following observation about inactive edges.

\begin{observation}[Inactive Edge Tracking]\label{obs:hind_track}
    Fix a sequence $E\in\calE_{n,m,q}$ which has \waterfill heights $(h_t)_{t\in[m]}$.
    Let $(i,t)$ be an edge that is inactive for \waterfill.
    Then there is an edge $(i, s)$ which is active for \waterfill and satisfies at least one of 1) $h_t < h_s$ or 2) $h_t = h_s$ and $s < t$.
\end{observation}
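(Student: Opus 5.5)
Observation~\ref{obs:hind_track} asks for an active edge $(i,s)$ with $h_t < h_s$, or with $h_t = h_s$ and $s < t$. The natural witness is the \emph{last active edge of $i$ before or at time $t$}, falling back to a later one if none exists. The argument works directly from the structure of the \waterfill program at time $t$ and uses the equalized-loads property of \Cref{obs:wf_properties}. Write $\bell_t$ for the intermediate loads and $\widehat{N}_t$ for the support of $\mbf{x}_t$. The definition of the height gives $h_t = \bell_t(j)$ for every $j\in\widehat{N}_t$.

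The first step is to show that an inactive edge $(i,t)$ forces $\bell_{t}(i) = \bell_{t-1}(i) \ge h_t$. This is the KKT/optimality characterization of the \waterfill program. If $\bell_{t-1}(i) < h_t$, we could move a small amount of mass from some $j\in\widehat{N}_t$ (which sits at level $h_t$) to $i$. That transfer is a Robin Hood move, so it produces a vector strictly majorized by $\mbf{x}_t+\bell_{t-1}$. This contradicts the Uniqueness (majorization-minimality) property in \Cref{obs:wf_properties}. Hence $\bell_{t-1}(i)\ge h_t$. In particular $\bell_{t-1}(i)\ge h_t>0$ (or, if $q_t>0$ forces $h_t>0$, node $i$ has positive load), so $i$ must have received mass from some earlier active edge.

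The second step picks $s<t$ to be the latest online node before $t$ with $\mbf{x}_s(i)>0$. This node exists because $\bell_{t-1}(i)>0$; the degenerate case $h_t=0$ is impossible since $q_t>0$ means $h_t>0$. Node $i$ received no mass between $s$ and $t-1$, so $h_s = \bell_s(i) = \bell_{t-1}(i) \ge h_t$. Two cases remain. If $h_s > h_t$, condition 1 holds. If $h_s = h_t$, then $s<t$ gives condition 2. Either way $(i,s)$ is the required active edge.

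The main obstacle is the first step: showing rigorously that an inactive neighbor is at least as loaded as the water level. I would prove it by the explicit perturbation above. For small $\varepsilon$, take $\mbf{x}' = \mbf{x}_t + \varepsilon(\mathbf{e}_i - \mathbf{e}_j)$, where $j\in\widehat{N}_t$. The vector $\mbf{x}'$ is feasible because $i\in N_t$ and $\mbf{x}_t(j)>0$. One then checks via the Karamata thresholds of \Cref{def:KaramataHLP} that $\mbf{x}'+\bell_{t-1}\preceq \mbf{x}_t+\bell_{t-1}$ with the two vectors not equivalent. This contradicts the uniqueness of the minimal element. The rest of the argument is bookkeeping.
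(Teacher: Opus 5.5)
Your proof is correct and follows the same route as the paper: take $s$ to be the last active neighbor of $i$ before $t$, use the inactivity of $(i,t)$ to get $h_s=\bell_{t-1}(i)\ge h_t$, and split on whether equality holds. Your Robin Hood perturbation argument for $\bell_{t-1}(i)\ge h_t$, and your remark that $h_t>0$ guarantees such an $s$ exists, make explicit two steps the paper only asserts.
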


\begin{proof}
    Let $(\mbf{x}_t)_{t\in[m]}$ and $(\bell_t)_{t\in[m]}$ be the allocations and intermediate load vectors of \waterfill on $E$, respectively.
    Fix an inactive edge $(i,t)$ and define $s \define \max(\{s\in \Gamma_t(E)\mid \mbf{x}_s(i) > 0\}\cap [t-1])$ to be the last arriving online neighbor with an active edge to $i$ before online node $t$.
    Since $s$ is the last active edge to $i$, its height is $h_s = \bell_{t-1}(i)$.
    We finish the proof via case work.
    Edge $(i,t)$ is inactive, so $\bell_{t-1}(i) \ge h_t$ otherwise \waterfill would have allocated some positive value to edge $(i,t)$.
    In the case that $\bell_{t-1}(i) > h_t$, our previous arguments show $h_s = \bell_{t-1}(i) > h_t$ as desired.
    Consider the second case when $\bell_{t-1}(i) = h_t$.
    Then $h_s = \bell_{t-1}(i) = h_t$ and $s < t$ by definition.
    This completes the proof.
\end{proof}

\optpreserve* 

\begin{proof}
    Consider an edge $(i,t)$ in the sequence $E$ that is active for \waterfill and the hindsight solution.
    The edge $(i,\sigma(t))$ will be in sequence $\widetilde{E}$ by construction, so the positive allocation $\widetilde{\mbf{x}}_{\sigma(t)}(i) = \mbf{x}_t > 0$ does not violate any constraints.

    Finally, we examine edges $(i,t)$ in sequence $E$ which are inactive for \waterfill but active for the hindsight solution.
    By \Cref{obs:hind_track}, there must be an edge $(i,s)$ in $E$ which is active for \waterfill and satisfies one of the following 1) $h_t < h_s$ or 2) $h_t = h_s$ and $s<t$.
    In either case, $\sigma(t) < \sigma(s)$ by construction and $(i, \sigma(i))$ will be in sequence $E^{(2)}$.
    When performing the nestification operation on $E^{(2)}$, the latest online neighbor of offline node $i$ will be at least $\mu_i = \max(\Gamma_i(E^{(2)}) \cup\{0\}) \ge \sigma(s)$.
    The nested sequence $E^{(3)}$ will have an edge from online node $\sigma(t)$ to offline node $i$ since $\sigma(t) < \sigma(s) \le \mu_i$. 
    The existence of edge $(i, \sigma(t))$ in $\widetilde{E} = E^{(3)}$ means the positive allocation $\widetilde{\mbf{x}}_{\sigma(t)}(i) = \mbf{x}_t(i) > 0$ does not violate any constraints.
    
    The allocation $(\mbf{x}_t)_{t\in[m]}$ only has positive allocations on edges in $\widetilde{E}$, so it is compatible.
    Further, the allocation is full distribution $\widetilde{\mbf{x}}_t([i]) = \mbf{x}_{\sigma^{-1}(t)}([n]) = q_{\sigma^{-1}(t)} = \widetilde{q}_t$, where $\widetilde{q}_t$ is the quantity of online node $t$ in sequence $\widetilde{E}$.
    The allocation $(\widetilde{\mbf{x}}_t)_{t\in[m]}$ is feasible on $\widetilde{E}$, so the proof is complete.
\end{proof}

\subsection{Deferred Proofs from~\Cref{sec:uppertriangle}}
\label{appsec:uppertriangle}

We first generalize the relation $\wf(E)=Hq$ for the \waterfill allocation on a complete upper triangular sequence $E$, to the more general statement about nested sequences

\hwf*

\begin{proof}
    Let $(\mbf{x}_t)_{t\in[m]}$ and $(\bell_t)_{t\in[m]}$ be the \waterfill allocation and and intermediate loads, respectively.
    Since $E$ is nested, the intermediate load $\bell_{t-1}(i)$ of offline nodes $i\in N_t$ are all equivalent.
    Thus, the \waterfill allocation admits a closed-form solution:
    \begin{align}\label{eq:wf_sol}
        \mbf{x}_t(i) = \frac{q_t\cdot\1{i\in N_t}}{|N_t|}
    \end{align}
    We finish the claim via direct calculation.
    In our arguments, we use the fact that $\{\Gamma_i(E)\setminus \Gamma_i(E)\}_{i\in[n]}$ are disjoint sets, as a result of the nestedness of $E$.
    \begin{align*}
        \sum_{t\in[m]} \mbf{x}_t(j)
        &= \sum_{i\in[n]} \left(\sum_{t\in \Gamma_i(E)\setminus \Gamma_{i-1}(E)} \mbf{x}_t(j)\right)
        &&\text{Sets }\{\Gamma_i(E)\setminus \Gamma_i(E)\}_{i\in[n]}\text{ are disjoint} \\
        &= \sum_{i\in[n]} \left(\sum_{t\in \Gamma_i(E)\setminus \Gamma_{i-1}(E)} \frac{q_t\cdot\1{j\in N_t}}{|N_t|}\right)
        &&\text{\Cref{eq:wf_sol}} \\
        &= \sum_{i\in[j]} \left(\sum_{t\in \Gamma_i(E)\setminus \Gamma_{i-1}(E)} \frac{q_t}{n-i+1}\right)
        &&\text{Nestedness} \\
        &= \sum_{i\in[j]} \frac{\mbf{z}(i)}{n-i+1}
        &&\text{Definition of }\mbf{z}
    \end{align*}
    The above is equivalent to $\wf(E) = \sum_{t} \mbf{x}_t H\mbf{z}$
\end{proof}

Next, we prove the majorization preservation property of the $H$ transform.
For convenience, we henceforth define
$
\calI_{n,q}=\{\mbf{x}\in\R_{>0}^n : \mbf{x}([n])=q \text{ and } \mbf{x}_1\le\cdots\le \mbf{x}_n\}.
$

\hmajpres*

\begin{proof}
    This proof was generated in conversation with Gemini \cite{gemini2025}; see \Cref{app:sec:disclose} for more details.
    Notice that $H\mbf{x}, H\mbf{y}\in \calI_{n,q}$.
    We complete the claim by using the increasing cumulative sum definition of majorization.
    Consider the sum of the smallest $k$ elements in $Hx$.
    \begin{align*}
        \sum_{i=1}^{k} \sum_{j=1}^{n} H(i,j)\mbf{x}(j)
        &= \sum_{i=1}^{k} \sum_{j=1}^{n} \frac{\mbf{x}(j) \cdot \1{i\ge j}}{n-j+1}
        &&\text{Definition of }H \\
        &= \sum_{j=1}^{k} \sum_{i=j}^{k} \frac{\mbf{x}(j)}{n-j+1}
        &&\text{Rearrange sums} \\
        &= \sum_{j=1}^{k} \left(\frac{k-j+1}{n-j+1}\right)\mbf{x}(j)
    \end{align*}
    Define $w_j \define \left(\frac{k-j+1}{n-j+1}\right)$ and $w_{k+1} = 0$; notice that $w_1 \ge\dots\ge w_{k+1}$.
    We now express each scalar $w_j$ as a telescoping series and group like terms:
    \begin{align*}
        \sum_{j=1}^{k} w_j \mbf{x}(j)
        &= \sum_{j=1}^{k} \left(\sum_{i=j}^{k} w_i - w_{i+1}\right)\mbf{x}(j)
        &&\text{Reverse telescope} \\
        &= \sum_{i=1}^{k} (w_i - w_{i+1}) \mbf{x}([i])
        &&\text{Rearrange sums}
    \end{align*}
    Similarly, the sum of the $k$ smallest elements in $H\mbf{y}$ is $\sum_{i=1}^{k} (w_i - w_{i+1})\mbf{y}([i])$.
    Since $w_i - w_{i+1} \ge 0$ and $\mbf{x}([i]) \le \mbf{y}([i])$, the increasing cumulative sums of $H\mbf{x}$ are less than that of $H\mbf{y}$.
    Equivalently, $H\mbf{x} \succeq H\mbf{y}$ 
\end{proof}

Finally, for the sake of completeness, we establish that worst-case sequences in~\Cref{thm:param} exist.

\begin{observation}[Existence of Worst-Case Allocation Sequences]\label{obs:worst_exist}
    For arbitrary $\bell\in\calI_{n,q}$, the nested sequence $E = ((N_t, q_t))_{t\in[m]}\in\Enest_{n,m,q}$ with $m=n$, $N_t\gets [n]\setminus[t-1]$, and $q_t\gets \bell(t)$ satisfies $\wf(E) = H\bell$ and $\opt(E) = \bell$. 
\end{observation}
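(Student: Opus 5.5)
We need to prove Observation \ref{obs:worst_exist}: for $\bell\in\calI_{n,q}$, the complete upper-triangular sequence with $N_t=[n]\setminus[t-1]$ and $q_t=\bell(t)$ satisfies $\wf(E)=H\bell$ and $\opt(E)=\bell$. The plan has two independent halves, one computing the \waterfill allocation and one identifying the majorization-minimal feasible load.

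\textbf{The \waterfill half.} I would apply \Cref{obs:h_wf} directly. Since $N_t=\{t,\dots,n\}$, we have $\Gamma_i(E)=[i]$, so $\mu_i=i$ and the ordering hypothesis $\mu_1\le\dots\le\mu_n$ holds. Moreover $\Gamma_i(E)\setminus\Gamma_{i-1}(E)=\{i\}$, which gives $\mbf{z}(i)=q_i=\bell(i)$. Hence $\wf(E)=H\mbf{z}=H\bell$.

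A direct check is also easy. Because the sequence is nested, each $q_t$ is spread uniformly over $N_t$, which has $n-t+1$ elements. Node $j$ therefore receives $\sum_{t\le j}\bell(t)/(n-t+1)=(H\bell)(j)$.

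\textbf{The hindsight-optimal half.} I would reuse the argument from the proof of \Cref{thm:param} (its ``Optimality'' paragraph).

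1. Feasibility: the allocation $\mbf{x}^*_t(i)=q_t\1{i=t}$ is feasible, since $t\in N_t$, and it produces the load vector exactly $\bell$.

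2. A prefix bound: let $\bell'\in\Delta(E)$ be any feasible load. Offline nodes $[j]$ are adjacent only to online nodes $[j]$, so $\bell'([j])\le\sum_{t\le j}q_t=\bell([j])$.

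3. Passing to the sorted order: the sum of the $j$ smallest entries of $\bell'$ is at most $\bell'([j])$. Since $\bell$ is nondecreasing, $\bell([j])=\bell^{\uparrow}([j])$. Combining these with step 2 gives $\bell'^{\uparrow}([j])\le\bell^{\uparrow}([j])$ for all $j$, with equal totals $q$.

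4. Conclusion: by \Cref{def:maj} this says $\bell'\succeq\bell$. So $\bell$ is majorization-minimal in $\Delta(E)$, and by the uniqueness in \Cref{fact:opt}, $\opt(E)=\bell$.

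\textbf{Where care is needed.} The only subtle point is step 3: it is where the nondecreasing hypothesis $\bell\in\calI_{n,q}$ is used. Uniqueness from \Cref{fact:opt} then yields equality of $\opt(E)$ with $\bell$ itself, rather than mere equivalence up to permutation.
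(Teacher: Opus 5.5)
Your proposal is correct and matches the paper's proof. The paper likewise certifies $\opt(E)=\bell$ via the diagonal allocation $\mbf{x}^*_t(i)=q_t\1{i=t}$ together with the prefix bound that offline nodes $[j]$ can receive at most $\sum_{t\le j}q_t$, and it obtains $\wf(E)=H\bell$ from \Cref{obs:h_wf} using $\Gamma_i(E)=[i]$ and $\mbf{z}(i)=\bell(i)$. Your explicit passage to sorted prefix sums in step 3 (where monotonicity of $\bell$ is used) and your appeal to uniqueness in \Cref{fact:opt} simply spell out details the paper leaves implicit.
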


\begin{proof}
    The allocation $\mbf{x}_t^*(i) = q_t\cdot \1{i = t}$ is feasible and satisfies $\bell = \sum_{t\in[m]} \mbf{x}_t^*$.
    We show that this allocation is majorization minimal using the increasing cumulative sum characterization.
    The load on nodes $[j]$ is $\bell([j]) = \sum_{t \le j} q_t$.
    Notice that this sum is also the total quantity of online nodes with a neighbor in $[j]$.
    Thus, no allocation can yield a greater load on $[j]$: $\bell$ is majorization minimal.

    We can find a closed-form solution for $\wf(E)$ using \Cref{obs:h_wf} and the fact that $\Gamma_i(E) = [i]$ by construction.
    For $\mbf{z}$ defined as in \Cref{obs:h_wf}, we have:
    \begin{align*}
        \mbf{z}(i)
        = \sum_{t\in \Gamma_i(E) \setminus \Gamma_{i-1}(E)} q_t
        = q_i = \bell(i)
    \end{align*}
    Applying \Cref{obs:h_wf} gives $\wf(E) = H\mbf{z} = H\cdot \opt(E)$.
\end{proof}

\section{Competitive Ratio Characterizations}
\label{appsec:CR}

We analyze competitive ratios for $n$ offline nodes.
The adversary is allowed to choose any number of online nodes $m$ and any quantity $q$.
We characterized that the competitive ratio of policy $\calA$ for $f$ maximization (resp. $g$ minimization) is the largest (resp. smallest) $\alpha$ such that $\reg_{(n,:,:),\alpha,f}^{\max}(\calA) \le 0$ (resp. $\reg_{(n,:,:),\alpha,g}^{\min}(\calA) \le 0$).
By simple algebraic manipulation, it can be seen that the competitive ratio is equivalent to $\inf_{E\in\calE_{n,:,:}} \frac{f(\calA(E))}{f(\opt(E))}$ (resp. $\sup_{E\in\calE_{n,:,:}} \frac{f(\calA(E))}{f(\opt(E))}$).
We abbreviate the competitive ratio as $\ratio_{n,f}^{\max}(\calA)$ (resp. $\ratio_{n,g}^{\min}(\calA)$).
Further, we use $H_n = \sum_{i\in[n]} \frac{1}{i}$ to denote the $i$th Harmonic number with $H_0 = 0$.

\subsection{Nash Social Welfare}
\begin{lemma}[\nsw\, Competitive Ratio]\label{app:lem:nsw_cr}
    For any number of offline nodes $n\in \N$, \waterfill achieves the minimax optimal competitive ratio for Nash Social Welfare $\nsw(\bell) \define \left(\prod_{i} \bell(i) \right)^{1/n}$ maximization.
    Further, this competitive ratio is given by:
    \begin{align*}
        \ratio_{n,\nsw}^{\max}(\wf)
        = \left(\frac{1}{n!}\right)^{1/n}
    \end{align*}
    In the limit, this competitive ratio satisfies $\lim_{n\to\infty} n\cdot \ratio_{n,\nsw}^{\max}(\wf) = e$.
\end{lemma}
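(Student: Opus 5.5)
The plan is to read the competitive ratio off \Cref{cor:reg_bound}. Minimax optimality of \waterfill comes for free. $\nsw$ is symmetric and concave on $\R_{\ge 0}^n$, as a geometric mean, hence Schur-concave. \Cref{cor:reg_rand} then applies for every $\alpha$, and minimax optimality of the competitive ratio follows.

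For the value, I would argue that $\alpha$-regret is non-positive for all instances iff $\alpha\cdot\nsw(\bell)\le \nsw(H\bell)$ for every non-decreasing $\bell\ge 0$. This follows from \Cref{cor:reg_bound} together with the scale invariance of $\nsw$ (homogeneity of degree one), which lets $q$ be arbitrary. So
\[
\ratio_{n,\nsw}^{\max}(\wf)=\inf_{0\le \bell(1)\le\cdots\le\bell(n),\ \nsw(\bell)>0}\ \frac{\nsw(H\bell)}{\nsw(\bell)}.
\]
The key estimate is a coordinatewise lower bound. We have
\[
(H\bell)(i)=\sum_{j\le i}\frac{\bell(j)}{n-j+1}\ \ge\ \frac{\bell(i)}{n-i+1},
\]
keeping only the $j=i$ term. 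Multiplying over $i$ gives $\prod_i (H\bell)(i)\ge \prod_i \bell(i)/n!$, hence $\nsw(H\bell)\ge (n!)^{-1/n}\nsw(\bell)$.

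For tightness I would take $\bell=(\epsilon,\epsilon^2\cdots)$. More precisely, choose $\bell(i)=\epsilon^{n-i}$, which is non-decreasing for $\epsilon<1$. Then $(H\bell)(i)=\frac{\epsilon^{n-i}}{n-i+1}(1+O(\epsilon))$, so the ratio tends to $(n!)^{-1/n}$ as $\epsilon\to 0$. \Cref{obs:worst_exist} realizes each such $\bell$ as an actual instance with $\opt(E)=\bell$ and $\wf(E)=H\bell$. The infimum is therefore achieved in the limit, and the ratio is exactly $(n!)^{-1/n}$.

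The limit $n(n!)^{-1/n}\to e$ follows from Stirling: $(n!)^{1/n}=\frac{n}{e}(2\pi n)^{1/(2n)}(1+o(1))$.

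The only delicate step is the reduction from $\alpha$-regret to the ratio. It needs care with degenerate $\bell$ having a zero coordinate, where $\nsw=0$ and the ratio is excluded or trivially fine. It also uses the homogeneity of $\nsw$ to pass between fixed and free $q$. Beyond that, the argument is a direct calculation.
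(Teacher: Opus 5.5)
Your proposal is correct and follows the paper's proof essentially step for step. Minimax optimality comes from \Cref{cor:reg_rand} via symmetry and concavity, the lower bound keeps only the diagonal term $\bell(i)/(n-i+1)$ of $(H\bell)(i)$, tightness uses $\bell(i)=\epsilon^{n-i}$ with $\epsilon\to 0$, and Stirling gives the limit. The only cosmetic difference is that you invoke homogeneity of $\nsw$ where the paper instead adjusts the last coordinate so that the total quantity is $q=1$.
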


\begin{proof}
    Since \nsw\, is concave and symmetric, \Cref{cor:reg_rand,app:cor:wf_adapt} shows that \wf\, achieves the minimax optimal competitive ratio bound for \nsw\, maximization in comparison to any (even randomized) policy against an oblivious or adaptive adversary.
    It remains to find a closed-form solution for this competitive ratio.
    We generated the a proof of the closed-form solution in conversation with \cite{gemini2025};
    See \Cref{app:sec:disclose} for more details.
    We begin with the lower bound.
    \Cref{cor:reg_bound} shows:
    \begin{align*}
        \ratio_{n,\nsw}^{\max}(\wf)
        &= \inf_{\bell\in\calI_{n,q}}\frac{\nsw(H\bell)}{\nsw(\bell)} \\
        &= \left(\inf_{\bell\in\calI_{n,q}}\prod_{i\in [n]}\left(\frac{\sum_{j\in[i]} \frac{\bell(j)}{n-j+1}}{\bell(i)}\right)\right)^{1/n} \\
        &\ge \left(\prod_{i\in [n]}\left(\frac{1}{n-i+1}\right)\right)^{1/n}
        &&\left(\sum_{j\in[i]} \frac{\bell(j)}{n-j+1} \ge \frac{\bell(i)}{n-i+1} \right)\\
        &= \left(\frac{1}{n!}\right)^{1/n}
    \end{align*}

    To prove an upper bound on the competitive ratio, we use a limiting argument on load vectors.
    To simplify our arguments, we assume $q=1$ since the competitive ratio of \nsw\, is invariant to scaling of $\bell$.
    For a sufficiently small $\epsilon > 0$, define load vector $\bell'\in\calI_{n,q}$ with $\bell'(i) = \epsilon^{n-i}$ for $i\in[n-1]$ and $\bell'(n) = 1-\bell'([n-1])$.
    Computing the competitive ratio as $\epsilon\to 0$ provides an upper bound:
    \begin{align*}
        \ratio_{n,\nsw}^{\max}(\wf)
        &\le \lim_{\epsilon \to 0}\frac{\nsw(H\bell')}{\nsw(\bell')} \\
        &= \left(\lim_{\epsilon \to 0} \prod_{i\in [n]}\left(\frac{\sum_{j\in[i]}\frac{\bell'(j)}{n-j+1}}{\bell'(i)}\right)\right)^{1/n} 
    \end{align*}
    We argue that the limit of each term in the product exists, so we can take the product of each limiting term.
    For $i\in[n-1]$, we have:
    \begin{align*}
        \lim_{\epsilon \to 0} \frac{\left(\sum_{j\in[i]}\frac{\bell'(j)}{n-j+1}\right)}{\bell'(i)}
        &= \sum_{j\in[i]} \lim_{\epsilon\to 0}\frac{\epsilon^{n-j}}{(n-j+1)\epsilon^{n-i}} \\
        &= \frac{1}{n-i+1}
    \end{align*}
    The limit for $i=n$ is:
    \begin{align*}
        \lim_{\epsilon \to 0} \frac{\left(\sum_{j\in[n]}\frac{\bell'(j)}{n-j+1}\right)}{\bell(n)}
        &= 1 + \sum_{j\in[n-1]} \lim_{\epsilon\to 0}\frac{\epsilon^{n-j}}{(n-j+1)(1-\bell'([n-1]))}
        &&\text{Take out $n^{th}$ term} \\
        &= 1
        &&\lim_{\epsilon\to 0} \bell'([n-1]) = 0
    \end{align*}
    Since the limit of each product term exists, the limit of the product is the product of the limits:
    \begin{align*}
        \ratio_{n,\nsw}^{\max}(\wf)
        &= \left(\lim_{\epsilon \to 0} \prod_{i\in [n]}\left(\frac{\sum_{j\in[i]}\frac{\bell'(j)}{n-j+1}}{\bell'(i)}\right)\right)^{1/n} \\
        &= \left(\prod_{i\in [n]}\lim_{\epsilon \to 0}\left(\frac{\sum_{j\in[i]}\frac{\bell'(j)}{n-j+1}}{\bell'(i)}\right)\right)^{1/n} \\
        &= \left(\prod_{i\in[n]} \frac{1}{n-i+1}\right)^{1/n} \\
        &= \left(\frac{1}{n!}\right)^{1/n}
    \end{align*}
    Stirling's approximation for $n!$ proves that this competitive ratio, scaled by $n$, converges to $e$.
\end{proof}

\subsection{Maximin}
\begin{lemma}[\textsc{Maximim}, Competitive Ratio]\label{app:lem:maximin_cr}
    For any number of offline nodes $n\in \N$, \waterfill achieves the minimax optimal competitive ratio for maximizing the Maximin objective $\textsc{Maximin}(\bell) \define \min_{i\in [n]} \bell(i)$.
    Further, this competitive ratio is given by:
    \begin{align*}
        \ratio_{n,\textsc{Maximin}}^{\max}(\wf)
        = \frac{1}{n}
    \end{align*}
\end{lemma}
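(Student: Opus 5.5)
Our approach mirrors the argument for \Cref{app:lem:nsw_cr}: first settle minimax optimality via the corollaries, then compute the ratio from the characterization of \Cref{cor:reg_bound}.

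\textbf{Minimax optimality.} The objective $\textsc{Maximin}(\bell)=\min_i \bell(i)$ is symmetric and concave, since it is a minimum of linear functions. Hence \Cref{cor:reg_rand} shows that \wf\ is minimax optimal among all, even randomized, policies against oblivious adversaries, and \Cref{app:cor:wf_adapt} gives the same conclusion against adaptive adversaries. Because this holds for every $\alpha$, it holds in particular at the largest $\alpha$ with nonpositive regret. So the competitive ratio of \wf\ is the minimax optimal one.

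\textbf{Reduction.} By \Cref{cor:reg_bound} (and scale invariance of the ratio), the competitive ratio equals
\[
\inf_{\bell\in\calI_{n,q}} \frac{\min_i (H\bell)(i)}{\min_i \bell(i)}.
\]
Both vectors are sorted nondecreasingly: $\bell$ by the definition of $\calI_{n,q}$, and $H\bell$ because $H$ is lower triangular with nonnegative entries, so its rows have nondecreasing partial sums. Each minimum is therefore attained at the first coordinate. Since $(H\bell)(1)=\bell(1)/n$, the ratio is exactly $1/n$ for every $\bell\in\calI_{n,q}$ with $\bell(1)>0$. This gives both the upper and the lower bound at once.

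\textbf{Main obstacle.} The only delicate point is the degenerate case $\bell(1)=0$, where $\textsc{Maximin}(\opt)=0$. Such instances do not constrain $\alpha$: the regret is $-f(H\bell)\le 0$, and vectors in $\calI_{n,q}$ are strictly positive anyway. We would note this explicitly. We would also check that restricting to complete upper-triangular instances loses nothing; this is exactly what \Cref{thm:param} together with \Cref{obs:worst_exist} guarantees. A tight instance is then $N_t=\{t,\dots,n\}$ with equal quantities $q_t=1$. Here $\opt=\vec 1$, while \wf\ gives node $1$ a load of $1/n$, matching the value $1/n$ computed above.
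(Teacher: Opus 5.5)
Your proposal is correct and follows the paper's proof essentially verbatim. You get minimax optimality from \Cref{cor:reg_rand} and \Cref{app:cor:wf_adapt}, using that $\min_i \bell(i)$ is symmetric and concave, and then evaluate the ratio from \Cref{cor:reg_bound} as $(\bell(1)/n)/\bell(1)=1/n$; your degenerate-case remark and explicit tight instance are extra detail the paper leaves implicit. One small wording fix: $H\bell$ is nondecreasing because $(H\bell)(i)=\sum_{j\le i}\bell(j)/(n-j+1)$ is a running sum of nonnegative terms (each row of $H$ extends the previous one), not merely because $H$ is lower triangular with nonnegative entries.
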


\begin{proof}
    Since the $\textsc{Maximin}$ objective is symmetric and concave, \waterfill maximizes the competitive ratio in comparison to any alternative (possibly randomized) policy $\calA$ against an oblivious or adaptive adversary due to \Cref{cor:reg_rand,app:cor:wf_adapt}.
    We now find a closed-form solution for the competitive ratio:
    \begin{align*}
        \ratio_{n,\textsc{Maximin}}^{\max}(\wf)
        &= \inf_{\bell\in\calI_{n,:}} \frac{\textsc{Maximin}(H\bell)}{\textsc{Maximin}(\bell)} \\
        &= \frac{\left(\frac{\bell(1)}{n}\right)}{\bell(1)} \\
        &= \frac{1}{n}
    \end{align*}
\end{proof}

\subsection{Maximin}
\begin{lemma}[\textsc{Minimax}, Competitive Ratio]\label{app:lem:minimax_cr}
    For any number of offline nodes $n\in \N$, \waterfill achieves the minimax optimal competitive ratio for minimizing the Minimax objective $\textsc{Minimax}(\bell) \define \min_{i\in [n]} \bell(i)$.
    Further, this competitive ratio is given by:
    \begin{align*}
        \ratio_{n,\textsc{Minimax}}^{\min}(\wf)
        = H_n
    \end{align*}
\end{lemma}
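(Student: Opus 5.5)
We read the objective as the makespan $\bell\mapsto\max_{i\in[n]}\bell(i)$. This is the reading consistent with the table in \Cref{table:cr}, where the minimization objective with ratio $\sum_{i}1/i$ is the maximum load; with a literal $\min$ the claimed value $H_n$ would not hold. The plan is to mirror the proof of \Cref{app:lem:maximin_cr}: first obtain minimax optimality from the corollaries, then compute the ratio in closed form using the upper-triangular characterization.

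\emph{Optimality.} The map $\bell\mapsto\max_i\bell(i)$ is symmetric and convex. Hence \Cref{cor:reg_rand} (oblivious adversaries) and \Cref{app:cor:wf_adapt} (adaptive adversaries) show that \waterfill minimizes $\alpha$-regret for every $\alpha$ against any, possibly randomized, policy. It follows that \waterfill attains the minimax optimal competitive ratio.

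\emph{Reduction.} By the minimization analogue of \Cref{cor:reg_bound}, obtained from \Cref{thm:wf_min,thm:param} and \Cref{obs:worst_exist}, the competitive ratio is
\[
\ratio^{\min}_{n,\textsc{Minimax}}(\wf)=\sup_{\bell\in\calI_{n,:}}\frac{\max_i (H\bell)(i)}{\max_i\bell(i)} .
\]
For $\bell\in\calI_{n,:}$ both $\bell$ and $H\bell$ are nondecreasing, so the two maxima are the last coordinates. This gives $\max_i\bell(i)=\bell(n)$ and $\max_i(H\bell)(i)=\sum_{j\in[n]}\bell(j)/(n-j+1)$.

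\emph{Upper bound.} Since $\bell(j)\le\bell(n)$ for every $j$, the ratio is at most $\sum_{j\in[n]}1/(n-j+1)=H_n$.

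\emph{Matching lower bound.} Take $\bell=\vec 1$, which lies in $\calI_{n,n}$. By \Cref{obs:worst_exist}, this corresponds to the complete upper-triangular sequence with unit quantities, for which $\opt=\vec 1$ and $\wf=H\vec 1$. Here the ratio equals $H_n$ exactly, so the supremum is attained.

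The only delicate point is the reduction step. It relies on the Schur-convex version of \Cref{cor:reg_bound} applied to ratios rather than to regret. This transfers because the competitive ratio is the smallest $\alpha$ with nonpositive regret, and the characterization holds for every $\alpha$. Beyond that, the computation is routine.
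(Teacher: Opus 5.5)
Your proposal is correct and follows the paper's proof. Minimax optimality comes from \Cref{cor:reg_rand} and \Cref{app:cor:wf_adapt} by symmetry and convexity, and the closed form comes from the $H$-characterization: the ratio is $\sum_{j\in[n]}\bell(j)/(n-j+1)$ over $\bell(n)$, and the constant vector attains $H_n$. Your reading of the objective as $\max_i \bell(i)$, your use of a supremum rather than the paper's written $\min$ and $\inf$, and your explicit bound $\bell(j)\le\bell(n)$ fill in exactly what the paper's displayed computation leaves implicit.
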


\begin{proof}
    Since the $\textsc{Minimax}$ objective is symmetric and convex, \waterfill minimizes the competitive ratio in comparison to any alternative (possibly randomized) policy $\calA$ against an oblivious or adaptive adversary.
    We now find a closed-form solution for the competitive ratio:
    \begin{align*}
        \ratio_{n,\textsc{Minimax}}^{\min}(\wf)
        &= \inf_{\bell\in\calI_{n,:}} \frac{\textsc{Minimax}(H\bell)}{\textsc{Minimax}(\bell)} \\
        &= \inf_{\bell\in\calI_{n,:}} \frac{\left(\sum_{i\in [n]} \frac{\bell(i)}{n-i+1}\right)}{\bell(n)} \\
        &= H_n
    \end{align*}
    The last equality holds by taking $\bell(1) = \dots = \bell(n) = q$ for any $q>0$.
\end{proof}

\subsection{Fractional Matching}
In the below proofs, $H_n = \sum_{i\in[n]} 1/i$ is the $n$th number in the Harmonic sequence.
We use $H_0 = 0$ for convenience.

\begin{definition}[Fractional Matching Sequence]
    The fractional matching sequence is denoted as $M_1,M_2,\dots\in \R_{\ge 0}$ with $M_n \define \frac{1}{n} \sum_{i=0}^{n} \min(1, H_n - H_i)$. 
\end{definition}

\begin{lemma}[Fractional Matching Competitive Ratio]\label{lem:fm_cr}
    For any number of offline nodes $n\in\N$ and capacity $c\in\R_{>0}$, \waterfill achieves the minimax optimal competitive ratio for fractional matching $\fm_c(\mbf{x}) = \sum_{i\in [n]} \min(c, \mbf{x}(i))$ maximization.
    Further, the competitive ratio of \waterfill for this objective is $\ratio_{n, \fm_c}^{\max} = \min_{k\in[n]} M_k$
\end{lemma}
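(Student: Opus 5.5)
Since $\fm_c$ is symmetric and concave, \Cref{cor:reg_rand} (and \Cref{app:cor:wf_adapt} for adaptive adversaries) immediately gives minimax optimality of \waterfill among all, even randomized, policies. It remains to compute the competitive ratio. Because $\fm_c(\mbf{x}) = c\cdot \fm_1(\mbf{x}/c)$ and $H$ is linear, the ratio $\fm_c(H\bell)/\fm_c(\bell)$ is unchanged under $\bell\mapsto \bell/c$, so it suffices to take $c=1$ and write $\fm=\fm_1$. By \Cref{cor:reg_bound} (via the competitive-ratio reformulation in \Cref{appsec:CR}), the ratio is $\inf \fm(H\bell)/\fm(\bell)$ over nondecreasing $\bell\in\R^n_{\ge 0}$ with $\bell\neq 0$. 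I will show that this infimum equals $\min_{k\in[n]} M_k$.

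\emph{Upper bound.} Let $\bell^{(k)}$ be the step vector with zeros in the first $n-k$ coordinates and ones in the last $k$. Then $\fm(\bell^{(k)})=k$. For $i>n-k$,
\[
(H\bell^{(k)})(i)=\sum_{j=n-k+1}^{i}\frac{1}{n-j+1}=H_k-H_{n-i},
\]
and $(H\bell^{(k)})(i)=0$ otherwise. Reindexing by $r=n-i\in\{0,\dots,k-1\}$ gives $\fm(H\bell^{(k)})=\sum_{r=0}^{k}\min(1,H_k-H_r)=kM_k$, since the $r=k$ term vanishes. Hence the ratio is at most $M_k$ for every $k$. If strict positivity of $\bell$ is needed, I would perturb the zero entries by $\epsilon\to 0$ and use continuity of both sides.

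\emph{Lower bound.} Fix a nondecreasing $\bell\ge 0$, and let $\beta=\min_k M_k$.
\begin{enumerate}
\item \textbf{Capping.} Replace $\bell$ by $\bar\bell=\min(\bell,\vec 1)$ coordinatewise. This is still nondecreasing and satisfies $\fm(\bar\bell)=\fm(\bell)=\bar\bell([n])$. Since $H$ has nonnegative entries and $\fm$ is coordinatewise nondecreasing, $\fm(H\bar\bell)\le \fm(H\bell)$. So it suffices to prove $\fm(H\bell)-\beta\,\bell([n])\ge 0$ on the polytope $P=\{\bell\in[0,1]^n:\bell(1)\le\dots\le\bell(n)\}$.
\item \textbf{Reduction to vertices.} The map $\bell\mapsto \fm(H\bell)-\beta\,\bell([n])$ is concave, being a concave function composed with a linear map, minus a linear term. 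A concave function attains its minimum over a polytope at a vertex.
\item \textbf{Vertices of $P$.} $P$ is the order polytope of a chain. Its vertices are exactly the $0/1$ step vectors $\bell^{(k)}$ for $k=0,\dots,n$: every point of $P$ is a convex combination $\sum_k (\bell(n-k+1)-\bell(n-k))\,\bell^{(k)}$ with $\bell(0):=0$, plus the zero vector with the leftover weight.
\item \textbf{Evaluation.} At $\bell^{(k)}$ the objective equals $kM_k-\beta k\ge 0$, and at $k=0$ it equals $0$. This proves the lower bound.
\end{enumerate}

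The main obstacle is the careful bookkeeping in two places: the vertex description of $P$, and the harmonic computation matching $\fm(H\bell^{(k)})$ to $kM_k$ with the correct index range in the definition of $M_k$. The capping step is what makes $\fm(\bell)$ linear on the relevant domain, which is what allows the concavity argument to go through.
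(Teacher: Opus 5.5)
Your proof is correct. The minimax-optimality part ($\fm_c$ is symmetric and concave, so \Cref{cor:reg_rand} applies) and the reduction to $c=1$ match the paper, but you compute the ratio by a different route.

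The paper first caps entries and rescales so that $\bell(n)=1$. It then invokes the $H$-majorization property (\Cref{obs:h_maj}) to replace $\bell$ by the most dispersed vector $(0,\dots,0,\beta,1,\dots,1)$ with the same sum. That vector has the same $\fm$ value, but its $H$-image is higher in majorization, so $\fm(H\cdot)$ can only drop. Finally it argues that the ratio is quasi-concave in the single parameter $\beta$, which pushes $\beta$ to $\{0,1\}$ and lands on the step vectors.

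You instead observe that after capping, $\fm(\bell)=\bell([n])$ is linear on the order polytope $P$. The target inequality $\fm(H\bell)\ge \beta\,\fm(\bell)$ then becomes nonnegativity of a concave function on $P$. Since $P$ is a simplex whose vertices are exactly the step vectors, one extreme-point argument does both of the paper's reductions at once. It needs no rescaling (the zero vertex absorbs the slack when $\bell(n)<1$), no appeal to \Cref{obs:h_maj}, and no separate one-parameter quasi-concavity analysis.

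Your argument is shorter and self-contained, and your explicit convex-combination formula makes the vertex step airtight. The paper's route is chosen to reuse its majorization machinery. It conveys the structural intuition that, among hindsight vectors with equal matching value, the most dispersed one is worst for water-filling.
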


\begin{proof}
    \Cref{cor:reg_rand} proves that \waterfill yields the minimax optimal competitive ratio for objective $\fm_{c}$ over all allocation policies $\calA$, as this objective is symmetric and concave.
    Without loss of generality, assume that $c=1$, as different capacities are equivalent to scaling inputs: $\fm_c(\mbf{x}) = \fm_1(c\cdot \mbf{x})$.
    We omit the subscript $c$ for clarity for the remainder of the proof.
    
    For an arbitrary hindsight load vector $\bell\in\calI_{n,:}$, we assume that $0 < \bell(1)\le\dots\le \bell(n) = 1$ when trying to maximize the competitive ratio.
    In the case that $\bell(i) \ge 1$ for some $i$, the alternate load vector $\bell'$ with $\bell'(i) \gets \min(1, \bell(i))$ yields $\fm(\bell) = \fm(\bell')$ and $\fm(H\bell) = \fm(H\bell')$ by definition; the competitive ratio is unchanged.
    On the other hand, if $\bell(n) < 1$, then $\bell'\gets \bell/(\bell(n))$ gives $\frac{\fm(\bell)}{\bell(n)} = \fm(\bell')$ and $\frac{\fm(H\bell)}{\bell(n)} \ge \fm(H\bell')$, which only decreases the competitive ratio as desired.

    By allowing zero values $\bell(i) = 0$, we can also assume that $\bell$ takes the form $\bell = (0,\dots,0,\beta,1,\dots,1)$ for some $\beta$.
    For any $\bell$ that does not satisfy the above, notice that the vector $\bell'$ of the desired form with $\bell([n]) = \bell'([n])$ satisfies $\fm(\bell) = \fm(\bell')$ and $\bell \preceq \bell'$.
    Applying \Cref{obs:h_maj} shows $H\bell \preceq H\bell'$ and, since $\fm$ is Schur-concave, this implies $\fm(H\bell) \ge \fm(H\bell')$.
    Thus, the competitive ratio induced by $\bell'$ is no more than that of $\bell$.

    Consider a vector of the form $\bell = (0,\dots,0,\beta,1,\dots,1)$, where $\bell(i) = \beta$ and parameterize the competitive ratio of this vector in terms of $\beta \in [0,1]$.
    We argue that it suffices to consider cases with $\beta\in\{0, 1\}$.
    The competitive ratio is:
    \begin{align*}
        \frac{\fm(H\bell)}{\sum_i \bell(i)}
        = \frac{\min\left(1, \beta / i\right) + \sum_{j>i} \min(1, \beta/i + H_{i-1}) }{\beta + (n - i + 1)}
    \end{align*}
    If $i=n$, then the competitive ratio is $1$ for any $\beta$.
    Otherwise, the ratio is the quotient of a concave function in $\beta$ and a linear function in $\beta$.
    Such functions are always quasi-concave, meaning, the minimum occurs at $\beta\in\{0,1\}$.

    Let $\bell_k$ be the vector with $\bell_k(i) = \1{n-k+1\le i}$.
    Via our previous arguments, the competitive ratio is minimized by such a vector.
    Now, notice that:
    \begin{align*}
        \ratio_{n, \fm}^{\max}(\wf)
        &= \min_{k\in[n]} \frac{\fm(H\bell_k)}{\fm(\bell_k)} \\
        &= \min_{k\in[n]}\frac{\sum_i \min\left(1, \sum_{j} H(i,j)\bell_k(j) \right)}{\sum_i \bell_k(i)} \\
        &= \min_{k\in[n]}\frac{1}{k}\sum_{j}\min\left(1, H_k - H_{j-1}\right) \\
        &= \min_{k\in[n]} M_k \\
    \end{align*}
\end{proof}

\subsection{Separable Concave Functions}
A separable concave function takes the form $f(\mbf{x}) = \sum_i \mathscr{f}(\mbf{x}(i))$ for concave and non-decreasing $\mathscr{f}:\R_{\ge 0} \to \R_{\ge 0}$.
We show that in the class of such functions, the matching objective provides worst-case competitive bounds.

\begin{observation}[Piece-Wise Approximation]\label{obs:concave_decomp}
    Fix a finite set of points $X\in \R_{>0}$ with $r=|X|$ and a function $\mathscr{f}:\R_{\ge 0}\to\R_{\ge 0}$ that is concave and non-decreasing.
    There are constants $\beta_1,\dots,\beta_{r-1}, \gamma\in\R_{\ge 0}$ and capacities $c_1,\dots, c_{r-1}\in \R_{\ge 0}$ such that:
    \begin{align*}
        \mathscr{f}(x) = \gamma x + \sum_{i\in [r-1]} \beta_i \min(c_i, x) , \quad\quad x\in X
    \end{align*}
\end{observation}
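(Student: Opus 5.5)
The plan is to realize $\mathscr{f}$ on $X$ as a nonnegative combination of a linear term and ``hockey-stick'' functions $\min(c,x)$, by interpolating $\mathscr{f}$ piecewise-linearly through the points of $X$. Order $X=\{x_1<x_2<\dots<x_r\}$ and set $x_0\define 0$. Let $s_k \define \frac{\mathscr{f}(x_k)-\mathscr{f}(x_{k-1})}{x_k-x_{k-1}}$ for $k\in[r]$ be the slopes of the chords. Because $\mathscr{f}$ is non-decreasing, $s_k\ge 0$. Because $\mathscr{f}$ is concave, chord slopes over consecutive intervals are non-increasing, so $s_1\ge s_2\ge\dots\ge s_r$.

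Define $\gamma\define s_r\ge 0$, $c_k\define x_k$, and $\beta_k\define s_k-s_{k+1}\ge 0$ for $k\in[r-1]$. Then $g(x)\define \gamma x+\sum_{k\in[r-1]}\beta_k\min(c_k,x)$ is continuous and piecewise linear. Its slope on $(x_{k-1},x_k)$ is $\gamma+\sum_{j\ge k}\beta_j = s_r+(s_k-s_r)=s_k$, by telescoping. Hence $g(x_k)-g(x_{k-1}) = s_k(x_k-x_{k-1}) = \mathscr{f}(x_k)-\mathscr{f}(x_{k-1})$ for every $k$, and summing gives $g(x_k)=g(0)+\mathscr{f}(x_k)-\mathscr{f}(0)$.

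The main obstacle is the constant term. Since $g(0)=0$, matching requires $\mathscr{f}(0)=0$, and the displayed form has no additive constant. I would handle this in one of two ways. If $\mathscr{f}(0)>0$, exploit the extra freedom in the first segment: replace the first chord by the slope $\mathscr{f}(x_1)/x_1\ge s_1$ from the origin. This keeps the slopes non-increasing, which is still consistent by concavity since $\mathscr{f}(x_1)/x_1\ge s_2$ when $\mathscr{f}(0)\ge0$, and it matches $\mathscr{f}$ exactly at $x_1$ and hence, by the telescoping above, at every $x_k$.

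Concretely, I take $s_1\define \mathscr{f}(x_1)/x_1$ (chord from the origin) and keep $s_k$ as above for $k\ge2$. The needed inequality $s_1\ge s_2$ follows because $\mathscr{f}(x_1)/x_1\ge(\mathscr{f}(x_1)-\mathscr{f}(0))/x_1\ge s_2$, using $\mathscr{f}(0)\ge0$ and concavity. With this choice $g(0)=0$ and $g(x_1)=\mathscr{f}(x_1)$, so $g=\mathscr{f}$ on all of $X$. The nonnegativity of all constants then follows from monotonicity ($\gamma=s_r\ge0$) and concavity ($\beta_k\ge0$). The degenerate case $r=1$ is covered by $\gamma=\mathscr{f}(x_1)/x_1$ with an empty sum.
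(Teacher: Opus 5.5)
Your proof is correct. The paper states this observation without proof, and your construction (piecewise-linear interpolation of $\mathscr{f}$ through the points of $X$, with breakpoints $c_k=x_k$, $\gamma=s_r$ and $\beta_k=s_k-s_{k+1}$) is the argument the observation's name suggests and supplies a complete justification. You also correctly handle the one real subtlety, that the representation has no constant term, by taking the first slope to be the origin chord $\mathscr{f}(x_1)/x_1$; its dominance over $s_2$ follows from $\mathscr{f}(0)\ge 0$ and concavity.

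For a cleaner write-up, define $s_1=\mathscr{f}(x_1)/x_1$ from the start rather than first using $\mathscr{f}(0)$ and then patching it. You also announce ``two ways'' of handling the constant term but give only one.
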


\begin{lemma}[Separable Concave Competitive Ratios]\label{lem:sepconcave_cr}
    Fix a separable concave function $f$.
    \waterfill yields minimax optimal competitive ratios when maximizing $f$.
    Further, the competitive ratio of water filling on $f$ satisfies $\ratio_{n,f}^{\max}(\wf) \ge \ratio_{n,\fm}^{\max}(\wf)$.
\end{lemma}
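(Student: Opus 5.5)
Two things have to be shown: that \waterfill is minimax optimal for a separable concave $f$, and that its competitive ratio for $f$ is at least $\ratio_{n,\fm}^{\max}(\wf)$. Minimax optimality is immediate. If $\mathscr{f}$ is concave, then $f(\mbf{x})=\sum_i\mathscr{f}(\mbf{x}(i))$ is symmetric and concave. \Cref{cor:reg_rand} therefore shows that \wf\ minimizes $\alpha$-regret for every $\alpha$ against all randomized policies, and \Cref{app:cor:wf_adapt} covers adaptive adversaries. Since the competitive ratio is the largest $\alpha$ with nonpositive regret, \wf\ attains the minimax optimal ratio.

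For the bound, \Cref{cor:reg_bound} reduces the ratio to
\[
\ratio_{n,f}^{\max}(\wf)=\inf_{\bell}\frac{f(H\bell)}{f(\bell)},
\]
where $\bell$ ranges over non-decreasing nonnegative vectors. So I fix one such $\bell$ and aim to prove $f(H\bell)\ge \ratio_{n,\fm}^{\max}(\wf)\, f(\bell)$. Let $X$ be the finite set of positive coordinates of $\bell$ and $H\bell$. By \Cref{obs:concave_decomp}, on $X$ we can write
\[
\mathscr{f}(x)=\gamma x+\sum_k\beta_k\min(c_k,x),\qquad \gamma,\beta_k\ge0 .
\]
Zero coordinates need care. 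If $\mathscr{f}(0)>0$, I split off the constant as a further term $\mathscr{f}(0)\cdot n$, which contributes equally to numerator and denominator. This only pushes the ratio toward $1\ge$ the target, since $\ratio_{n,\fm}\le 1$. Otherwise I assume $\mathscr{f}(0)=0$ and note that the decomposition also holds at $0$.

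Summing over coordinates gives
\[
f(\mbf{x})=\gamma\,\mbf{x}([n])+\sum_k\beta_k\,\fm_{c_k}(\mbf{x})\qquad\text{for }\mbf{x}\in\{\bell,H\bell\}.
\]
Each piece is then handled separately. The linear part is preserved exactly, because $H$ is column-stochastic and so $(H\bell)([n])=\bell([n])$. For each matching piece, \Cref{lem:fm_cr} (which is scale-free in $c$) gives
\[
\fm_{c_k}(H\bell)\ge \ratio_{n,\fm}^{\max}(\wf)\,\fm_{c_k}(\bell).
\]
Adding these nonnegatively weighted inequalities, and using $1\ge\ratio_{n,\fm}$ for the linear and constant parts, yields the claimed inequality for every $\bell$. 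Taking the infimum finishes the proof.

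The main obstacle is justifying the decomposition at exactly the points needed. \Cref{obs:concave_decomp} is stated for positive points, so the set $X$ must be chosen to include every coordinate of both $\bell$ and $H\bell$, and the value at $0$ must be handled as above. I would also note that monotonicity of $\mathscr{f}$ is what makes $\gamma,\beta_k\ge0$; this nonnegativity is exactly what makes the term-by-term comparison valid.
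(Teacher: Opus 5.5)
Your proposal is correct and follows essentially the same route as the paper. Minimax optimality comes from the randomized (and adaptive) corollaries for symmetric concave objectives. The bound comes from the piecewise-linear decomposition of $\mathscr{f}$ on the coordinates of $\bell$ and $H\bell$: the linear part is preserved because $H$ preserves total mass, each $\min(c_k,\cdot)$ part is bounded by the fractional-matching ratio, and the pieces are combined using $\gamma,\beta_k\ge 0$ and $\rho_n\le 1$. Your extra handling of zero coordinates and of $\mathscr{f}(0)>0$ is sound but not needed in the paper's version, since there $\bell$ ranges over strictly positive vectors (so $H\bell$ is positive too), and a constant offset is absorbed by the term $\beta_1\min(c_1,x)$ with $c_1=\min X$.
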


\begin{proof}
    For ease of notation, we write $\rho_n \define \ratio_{n,\fm}^{\max}(\wf)$
    By \Cref{cor:reg_rand}, \waterfill is minimax optimal for separable concave maximization.
    We bound the competitiveness of separable concave functions by showing that fractional matching objectives yield worst-case competitive ratios in this class.
    Fix an arbitrary $\bell\in\calI_{n,:}$ and write $\hat{\bell} \gets H\bell$ for ease of notation.
    By definition, we have $\fm_c(\hat{\bell}) \ge \rho_n\cdot  \fm_c(\bell)$ for any $c>0$.
    The set of values on which we evaluate $\mathscr{f}$ is $X\gets \{\bell(i),\hat{\bell}(i)\}_i$.
    Define $\beta_1,\dots,\beta_{2n-1},\gamma$ and $c_1,\dots,c_{2n-1}$ to be the constants defined in \Cref{obs:concave_decomp} with respect to $\mathscr{f}$.
    We now bound the competitive ratio:
    \begin{align*}
        f(\hat{\bell})
        &= \sum_{i} \mathscr{f}(\hat{\bell}(i)) \\
        &= \sum_{i} \left( \gamma\hat{\bell}(i) + \sum_{j} \beta_j \min(c_j, \hat{\bell}(i)) \right)
        &&\text{\Cref{obs:concave_decomp}} \\
        &= \gamma \sum_i \hat{\bell}(i) + \sum_j \beta_j\left( \sum_i \min(c_j,\hat{\bell}(i)) \right) \\
        &\ge \gamma \sum_i \bell(i) + \rho_n \sum_j \beta_j \left(\min(c_j, \bell(i))\right)
        &&\hat{\bell}([n]) = \bell([n]),\, \fm_{c_j}(\hat{\bell}) \ge \rho_n \cdot \fm_{c_j}(\bell) \\
        &\ge \rho_n \sum_i\left(\gamma \bell(i) + \sum_j \beta_j \min(c_j, \bell(i)) \right) \\
        &= \rho_n \sum_i \mathscr{f}(\bell(i))
        &&\text{\Cref{obs:concave_decomp}} \\
        &= \rho_n f(\bell)
    \end{align*}
    Rearranging the above inequality and taking the supremum on all $\bell\in\calI_{n,:}$ gives:
    \begin{align*}
        \ratio_{n,f}(\wf) = \sup_{\bell\in\calI_{n,:}} \frac{f(H\bell)}{f(\bell)} \ge \rho_n
    \end{align*}
\end{proof}

\section{AI Software Disclosure}\label{app:sec:disclose}
We used Gemini 3 \cite{gemini2025} to assist with an algebraic proof, find closed-form solutions to optimization problems that would characterize competitive ratios for \waterfill, and compress a draft of our related work section.
In addition, we used Refine.ink \cite{refine2026} to identify multiple logic inconsistencies (i.e., incorrect indexing or incorrect variable used), which were corrected by the authors.
All AI-generated content was verified by the authors to the best of their ability.

\end{document}